\documentclass[11pt]{article}

\usepackage{enumitem}
\usepackage{geometry}
\geometry{verbose,tmargin=1in,bmargin=1in,lmargin=1in,rmargin=1in}
\usepackage{color}
\usepackage{amsmath}
\usepackage{amssymb}
\usepackage{amsthm}
\usepackage{amsfonts}
\usepackage{graphicx}
\usepackage{xspace}
\usepackage{url}
\usepackage{bbm}

\usepackage{enumitem}

\usepackage{color}

\def\E{\mathbb{E}}

\def\dalg{\mu_{\textsc{alg}}^*}
\def\dalgOverall{\mu_{\textsc{alg}}}
\def\fdalg{\tilde{\mu}_{\textsc{alg}}}
\def\poly{\mathrm{poly}}

\def\E{\mathbb{E}}

\def\eps{\epsilon}

\def\lam{\lambda}

\def\cC{\mathcal C}

\def\cI{\mathcal I}

\def\cL{\mathcal L}
\def\cS{\mathcal S}

\newtheorem*{theorem*}{Theorem}
\newtheorem{theorem}{Theorem}[section]
\newtheorem{lemma}[theorem]{Lemma}
\newtheorem{corollary}[theorem]{Corollary}

\newtheorem{definition}{Definition}

\newtheorem*{prop*}{Proposition}

\newtheorem*{claim*}{Claim}
\newtheorem*{fact*}{Fact}
\newtheorem*{remark*}{Remark}
\newtheorem*{defn*}{Definition}

	\title{Fast and perfect sampling of subgraphs and polymer systems}
	\author{Antonio Blanca\footnote{Department of Computer Science and Engineering, Pennsylvania State University. ablanca@cse.psu.edu.}  
		\and Sarah Cannon\footnote{Department of Mathematical Sciences, Claremont McKenna College.  scannon@cmc.edu.} 
		\and Will Perkins\footnote{Department of Mathematics, Statistics, and Computer Science, UIC. math@willperkins.org.}}

	\date{\today}
	
\begin{document}

\maketitle
	
	\begin{abstract} 
	We give an efficient perfect sampling algorithm for weighted, connected induced subgraphs (or \textit{graphlets}) of  rooted, bounded degree graphs. 
	Our algorithm utilizes a vertex-percolation process with a carefully chosen rejection filter and works under a percolation subcriticality  condition. 
	We show that this  condition is optimal in the sense that the task of (approximately) sampling weighted rooted graphlets  becomes impossible in finite expected time for infinite graphs and intractable for finite graphs when the condition does not hold.  We apply our  sampling algorithm as a subroutine to give near linear-time perfect sampling algorithms for polymer models and weighted non-rooted graphlets in finite graphs, two widely studied yet very different problems.  
	This new perfect sampling algorithm for polymer models gives improved sampling algorithms for spin systems at low temperatures on expander graphs and unbalanced bipartite graphs, among other applications. 
\end{abstract}

\section{Introduction}

Sampling is a fundamental computational task: given a specification of a probability distribution on a (large) set of combinatorial objects, output a random  object with the specified distribution or with  a distribution close to the specified distribution.    This task becomes challenging when the specification of the distribution is much more succinct than the set of objects, and one wants to sample using time and space commensurate with the specification.  Fundamental examples include sampling from Markov random fields and probabilistic graphical models and sampling substructures of graphs.  We will address both of these examples here and connect them in a new way.  	

We consider a natural sampling problem: given a bounded-degree graph $G$, sample a \emph{graphlet} (a connected, vertex-induced subgraph) of $G$ containing a fixed vertex $r $ with probability proportional to an exponential in the size of the subgraph.~That is, sample a graphlet $S$ containing vertex $r$ with probability proportional to $ \lambda^{|S|}$, where $\lambda > 0$ is a distribution parameter and $| S | $ denotes the number of vertices in~$S$. 
In the paper we are concerned with small values of $\lambda$, where the expected size of a sampled graphlet is much smaller than the size of the graph.

Sampling graphlets  is an important task in data science, network analysis, bioinformatics, and sociology, 
as it allows us to gain information about massive graphs from small sections of it; see, e.g.,~\cite{ds1,ds2,ds3,ds4}.
A number of variants of the problem have consequently been studied, including sampling graphlets of a given size uniformly at random or sampling weighted graphlets of all sizes~\cite{shervashidze2009efficient,bhuiyan2012guise,jha2015path,chen2016general,bressan2017counting,bressan2018motif,paramonov2019estimating,agostini2019mixing,matsuno2020improved,read2021hardness,bressan2021faster,bressan2021efficient}.  
The variant we consider here, i.e., sample a graphlet $S$ with probability proportional to  $ 
\lambda^{|S|}$, 
arises as a key subroutine in recent sampling algorithms for spin systems (hard-core model, Ising model, Potts model, etc.)~in the regime of strong interactions via \textit{polymer models} described below in Section~\ref{secPolymerIntro}; see~\cite{helmuth2020algorithmic,cannon2020unbalanced,liao2019counting,galanis2021unbounded,borgs2020efficient,jenssen2021approximately,jenssen2020algorithms,carlson2020efficient,chen2022sampling}. 

One major limitation of previous sampling algorithms for graphlets and polymer models (those in, e.g.,~\cite{helmuth2020algorithmic,liao2019counting,read2021hardness,chen2021fast,galanis2021fast}, among others) is the use of exhaustive enumeration of graphlets of a given size; this requires restrictive parameter regimes or large polynomial running times, with the logarithm of the maximum degree $\Delta$ of the graph appearing in the exponent of the polynomial.
Here we design  a fast perfect sampling algorithm for weighted graphlets based on a vertex percolation process combined with a rejection filter. 
This method bypasses the enumeration barrier and allows us to design perfect sampling algorithms for a number of applications, substantially improving upon existing algorithms in three ways: 1) our algorithms have considerably faster running times, with no dependence on $\Delta$ in the exponent; 2) our algorithms return perfect, rather than approximate, samples from the desired distributions; and 3) our algorithms are conceptually simple and practical to implement. 

Our algorithm proceeds as follows.
First, run a vertex percolation process on the graph $G$ beginning at vertex $r$ in a breadth-first search manner, repeatedly adding each adjacent vertex to the graphlet with a carefully-chosen probability $p$.  Once the percolation process terminates, the graphlet is accepted as the random sample with a certain probability that depends on the graphlet and rejected otherwise;
if the graphlet is rejected, the algorithm restarts another percolation process from $r$. 
Because of the careful way we choose the percolation and rejection probabilities, we can prove the final accepted sample is drawn exactly from the desired distribution and the expected running time is bounded by a constant that depends only on $\lambda$ and the maximum degree $\Delta$. 

For our applications to polymer models, 
we use this graphlet sampling algorithm as a subroutine to implement a Markov chain on polymer configurations. We then use this Markov chain to devise a perfect sampling algorithm for polymer models
based on the coupling from the past method from~\cite{ProppWilson} and the notion of bounding chains from~\cite{huber2004bounding}.

\subsection{Sampling rooted graphlets}
\label{secIntroRooted}

Our key contribution is a new algorithm for perfectly sampling weighted graphlets containing a given vertex $r$.
We start by fixing the model of computation we work with throughout the paper. We assume a model that allows for querying the adjacency list of a given vertex
in a bounded degree graph in constant time, including in a rooted infinite graph.
We also assume that in a finite graph we can query a uniformly random vertex in constant time. This a standard model used in the study of sublinear algorithms~\cite{goldreich1997property}.
We also assume access to a stream of perfectly random real numbers in $[0,1]$.
The model of computation is fixed for consistency; in particular, our methods extend to other models, only requiring to adjust the running time to account for any additional computational overhead.

Let $G = (V,E)$ be a finite or infinite graph of maximum degree $\Delta$. For $r \in V$, let $\cS(G,r)$ be the set of all connected, vertex-induced subgraphs of $G$ containing $r$.
(The subgraph induced by $U \subseteq V$ has vertex set $U$ and includes all the edges of $G$ with both endpoints in $U$.)
We call $r$ the root of $G$ and the elements of $\cS(G,r)$ graphlets rooted at $r$.
For $\lam>0$ define the probability distribution $\nu_{G,r,\lam}$ on $\cS(G,r)$ by
\begin{equation} 
	\label{eq:rooted:dist}
	\nu_{G,r,\lam} (S) = \frac{ \lam^{|S|}}{ Z_{G,r,\lam}  }, 
	\quad \text{where} \quad Z_{G,r,\lam} = \sum_{\hat S \in \cS(G,r)} \lam^{|\hat S|} \,.
\end{equation}
The distribution is well defined when the normalizing constant $Z_{G,r,\lam}$, known as the partition function, is finite.
This is the case for every graph of maximum degree $\Delta$ and every $r$ when $\lam$ is below the critical threshold:
\begin{equation}
	\label{eqCrtiThresh}
	\lam_*(\Delta) = \frac{(\Delta-2)^{\Delta-2}}{(\Delta-1)^{\Delta-1}} \,;
\end{equation}
see Lemma~\ref{lemma:part:finite} below.
This threshold was already considered in~\cite{read2021hardness}, who
provided an $\varepsilon$-approximate sampling algorithm for $\nu_{G,r,\lam}$ for the class of maximum-degree $\Delta$ graphs when $\lam < \lam_*(\Delta)$
with running time $\mathrm{poly}(\varepsilon^{-1})$.
We give a perfect sampling algorithm for $\nu_{G,r,\lam}$ for $\lam < \lam_*(\Delta)$ with constant expected running time. 

\begin{theorem}\label{thmSubgraphSimple}
	Fix $\Delta \ge 3$ and let $\lam < \lam_*(\Delta)$.  
	There is a randomized algorithm
	that for any graph $G=(V,E)$ of maximum degree $\Delta$ and any $r \in V$ 
	outputs a graphlet distributed according to $\nu_{G,r,\lam}$ with expected running time bounded by a constant that depends only on $\Delta$ and~$\lam$. 
\end{theorem}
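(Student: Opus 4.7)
My plan is to give a rejection sampler built on vertex percolation from $r$. The key observation is that since $\lam < \lam_*(\Delta)$ and the map $p \mapsto p(1-p)^{\Delta-2}$ is increasing from $0$ to $\lam_*(\Delta)$ on $[0,1/(\Delta-1)]$, there is a unique $p \in (0, 1/(\Delta-1))$ with $p(1-p)^{\Delta-2} = \lam$. I fix this $p$ from the start; it is the only choice that makes the scheme work, for reasons explained at the end.

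The algorithm repeats the following trial until it accepts. Starting from $S = \{r\}$, run a BFS site percolation on $G$: when a new vertex $u$ adjacent to the current graphlet is first examined, include it in $S$ independently with probability $p$ and otherwise declare it a ``boundary'' vertex (never re-examined). When exploration terminates, letting $\partial S$ denote the set of boundary vertices, accept $S$ with probability $h(S) := (1-p)^{(\Delta-2)|S| - |\partial S| + 2}$; otherwise restart. To verify correctness I would compute $\Pr[\text{trial produces } S] = p^{|S|-1}(1-p)^{|\partial S|}$ directly from the percolation rule, multiply by $h(S)$, and use $\lam = p(1-p)^{\Delta-2}$ to collapse the expression to $\tfrac{(1-p)^2}{p}\,\lam^{|S|}$; since the prefactor is independent of $S$, the conditional distribution given acceptance is exactly $\nu_{G,r,\lam}$. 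For the running-time bound I would argue in two parts: (i) the per-trial acceptance probability is $\tfrac{(1-p)^2}{p}\,Z_{G,r,\lam} \ge \tfrac{(1-p)^2}{p}\,\lam = (1-p)^\Delta$, a positive constant depending only on $\Delta,\lam$; and (ii) each trial terminates in bounded expected time because the BFS percolation is stochastically dominated by a Galton--Watson tree of offspring mean at most $p(\Delta-1) < 1$ past the first generation, so the explored set is a.s.\ finite with expected size $O(1)$, even on infinite graphs.

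The step requiring care, and the one that explains the choice of $p$, is verifying $h(S) \in [0,1]$. This reduces to the uniform bound $|\partial S| \le (\Delta-2)|S| + 2$, which follows from the handshake lemma: connectivity of $S$ forces at least $|S|-1$ internal edges, so the number of $S$-to-$(V\setminus S)$ edges is at most $\Delta|S| - 2(|S|-1)$, and each boundary vertex contributes at least one such edge. The conceptual obstacle here is that for a generic $p$ the rejection ratio $\lam^{|S|}/[p^{|S|-1}(1-p)^{|\partial S|}]$ depends on $|S|$ and $|\partial S|$ as independent parameters and admits no uniform upper bound; the identity $\lam = p(1-p)^{\Delta-2}$ is engineered precisely so that the ratio collapses to a function of the single combination $(\Delta-2)|S| - |\partial S|$, which is then exactly the quantity controlled by the connectivity of $S$.
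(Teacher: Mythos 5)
Your proposal is correct and takes essentially the same approach as the paper: site percolation from $r$ at a rate $p$ chosen so that $p(1-p)^{\Delta-2}=\lambda$, a rejection filter with acceptance probability $(1-p)^{(\Delta-2)|S|+2-|\partial S|}$, the handshake bound $|\partial S|\le(\Delta-2)|S|+2$ (which you actually prove; the paper only asserts it and notes tightness on the $\Delta$-regular tree), and Galton--Watson stochastic domination of the exploration for the running-time bound. Your variant of always placing $r$ in $S$ is a harmless simplification for the rooted $q=1$, $f\equiv 1$ case; the paper's general algorithm includes $r$ only with probability $p$ (so it can output $\emptyset$) because it must also handle the labeled/weighted version needed for polymer models, and this merely shifts the $S$-independent prefactor from your $(1-p)^2/p$ to $(1-p)^2$. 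The one technicality you elide is that the exact $p$ is a root of a degree-$(\Delta-1)$ polynomial and need not be exactly computable; the paper binary-searches for a rational $\hat p<1/(\Delta-1)$ with image $\hat\lambda=\hat p(1-\hat p)^{\Delta-2}\in[\lambda,\lambda_*)$ and compensates by multiplying the acceptance probability by $(\lambda/\hat\lambda)^{|S|}\le 1$. Your argument absorbs this fix without any other change, but as written it implicitly assumes exact arithmetic with $p$.
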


\noindent
Previous algorithms to generate $\varepsilon$-approximate samples from $\nu_{G,r,\lam}$ (e.g., those in~\cite{helmuth2020algorithmic,read2021hardness,chen2021fast,galanis2021fast})
exhaustively enumerate all
graphlets of size $\le k$, for some $k$ that depends on the error parameter $\varepsilon$ that describes how accurate the sample must be.
This results in algorithms with $(1/\varepsilon)^{O(\log \Delta)}$ running times.
Applications such as sampling from polymer models require 
multiple samples from $\nu_{G,r,\lam}$ and have small error tolerance per sample; in particular, they require $\varepsilon \ll 1/n$,
which results in inefficient algorithms with overall running time $n^{O(\log \Delta)}$.
The algorithm in Theorem~\ref{thmSubgraphSimple}, on the other hand, is an exact sampler whose expected running time depends only on $\Delta$ and $\lambda$ 
and thus provides a significant advantage in applications as we detail~below.

We also show that Theorem~\ref{thmSubgraphSimple} is sharp in two ways. 
First, we establish that there is no polynomial-time approximate sampling algorithm for $\nu_{G,r,\lam}$ when $\lam \in (\lam_*(\Delta),1)$ for the class of finite graphs of maximum degree at most $\Delta$
unless RP=NP;
see Definition~\ref{def:approx:sampler} for the precise definition of a polynomial-time approximate sampler.
(We note that a similar hardness result was proved in~\cite{read2021hardness} for the related problem of sampling ``unrooted graphlets''; we provide more details about this in Section~\ref{secUnRootedIntro}.)
Second, 
in the infinite setting, the normalizing constant $Z_{G,r,\lam}$ may diverge (and consequently the distribution $\nu_{G,r,\lam}$ is not be well-defined)
when $\lam > \lam_*(\Delta)$; conversely, we prove that $Z_{G,r,\lam}$ is finite on every graph of maximum degree $\Delta$ when $\lam \le \lam_*(\Delta)$.

\begin{lemma}
	\label{lemma:comp:hard}
	Fix $\Delta \ge 3$ and $\lambda \in (\lambda_*(\Delta),1)$.
	If there is a polynomial-time approximate sampler for $\nu_{G,r,\lam}$
	for finite graphs $G=(V,E)$ of maximum degree $\Delta$ and each $r \in V$,
	then~RP=NP.
\end{lemma}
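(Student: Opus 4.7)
The plan is to establish hardness by a two-step reduction in the spirit of~\cite{read2021hardness}, who prove an analogous result for unrooted graphlets. First, I would convert a polynomial-time approximate sampler for $\nu_{G,r,\lam}$ into an FPRAS for the partition function $Z_{G,r,\lam}$ via a standard self-reducibility argument. Second, I would show that approximating $Z_{G,r,\lam}$ is itself NP-hard on max-degree-$\Delta$ graphs when $\lam \in (\lam_*(\Delta),1)$ by a gadget reduction that exploits the phase transition at $\lam_*(\Delta)$.

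For the first step, I would fix a BFS ordering $r = v_1, \ldots, v_n$ of $V$ and condition sequentially on whether each $v_i$ lies in the sampled graphlet. Each conditional distribution is again of the form $\nu_{G',r,\lam}$ on a max-degree-$\Delta$ induced subgraph $G'$ of $G$ (obtained from $G$ by forcing some vertices into the graphlet and forbidding others), so the approximate sampler can be invoked recursively. Telescoping the estimated conditional probabilities yields a polynomially accurate estimate of $\nu_{G,r,\lam}(\{r\}) = \lam/Z_{G,r,\lam}$, and hence of $Z_{G,r,\lam}$, in the familiar Jerrum--Valiant--Vazirani style. A minor point to verify is that forcing/forbidding vertices keeps each conditional distribution in the rooted-graphlet class on some max-degree-$\Delta$ subgraph, so that the polynomial-time promise of the sampler applies uniformly through the recursion.

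For the second step, the key fact is that $\lam_*(\Delta)$ is the reciprocal of the exponential growth rate of rooted connected subtrees of the infinite $(\Delta-1)$-ary tree; consequently, for $\lam > \lam_*(\Delta)$ the contribution of depth-$d$ subgraphs of a truncated $\Delta$-regular tree rooted at $r$ grows super-polynomially in $d$. This amplification supports a gadget construction in which a deep tree scaffold attached to $r$ has its leaves identified with vertices of an instance $H$ of a bounded-degree NP-hard problem---for example, approximating the hard-core partition function at high fugacity, as adapted in~\cite{read2021hardness}---so that YES versus NO instances of $H$ produce effective boundary fugacities that, after depth-$d$ amplification, differ by a super-polynomial factor in $Z_{G,r,\lam}$; an FPRAS for $Z_{G,r,\lam}$ then decides $H$. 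The main obstacle is designing the gadget to respect the max-degree-$\Delta$ and root-containment constraints simultaneously, and to verify that the regime $\lam \in (\lam_*(\Delta),1)$ provides enough amplification despite $\lam$ being bounded below $1$; I expect the unrooted-graphlet construction of~\cite{read2021hardness} to carry over with only minor modifications by designating one gadget vertex as the root and rebalancing constants so that the trivial graphlet $\{r\}$ does not swamp the YES/NO signal.
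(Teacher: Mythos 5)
Your plan diverges from the paper's proof in a structural way. The paper does \emph{not} re-derive NP-hardness by adapting the gadget of~\cite{read2021hardness} to a rooted setting; instead, it gives a direct reduction from approximate sampling of the \emph{rooted} distribution $\nu_{G,r,\lambda}$ to approximate sampling of the \emph{unrooted} distribution $\nu_{G,\lambda}$, for which hardness is already known from~\cite{read2021hardness}. Concretely, the paper first shows (via sampling-to-counting self-reducibility) how to obtain multiplicative approximations $\tilde Z_{G,v,\lambda}$ for all $v \in V$, and then samples a root $v$ with probability proportional to $\tilde Z_{G,v,\lambda}$, draws $S \sim \nu_{G,v,\lambda}$, and accepts with probability $1/|S|$; the acceptance filter cancels the over-counting by $|S|$ (since $S$ can be generated from any of its vertices) and yields an approximate sample from $\nu_{G,\lambda}$. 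This is the key idea you are missing: rooted-to-unrooted is a black-box reduction, so no new gadget construction is needed at all. What your route buys, if carried out, is a self-contained proof that does not depend on the unrooted result; what the paper's route buys is a two-page argument instead of a re-derivation of a delicate gadget construction that you yourself flag as the ``main obstacle.''

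There is also a concrete issue in your first step. When you ``condition sequentially on whether each $v_i$ lies in the sampled graphlet,'' forbidding $v_i$ keeps you in the class (delete $v_i$ from $G$), but \emph{forcing} $v_i$ into the graphlet does not produce a distribution of the form $\nu_{G',r,\lambda}$ on any subgraph $G'$ — there is no way to encode ``$v_i$ must be in $S$'' purely by modifying the graph while retaining the $\lambda^{|S|}$-weighted connected-induced-subgraph structure. The paper avoids this by computing only ratios $Z_{G_{i+1},r,\lambda}/Z_{G_i,r,\lambda}$ where $G_{i+1}$ is obtained from $G_i$ by deleting a \emph{leaf} of the BFS tree rooted at $r$; this ratio equals $\Pr_{\nu_{G_i,r,\lambda}}[u_{i+1}\notin S]$ and is estimable by direct sampling, and the leaf choice guarantees the ratio is at least $1/2$ so polynomially many samples suffice. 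Your plan either needs this delete-only telescoping, or a justification for sampling from forced-in conditionals that you have not supplied.
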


\begin{lemma}
	\label{lemma:part:finite}
	The partition function $Z_{G,r,\lambda}$ is finite for every (possibly infinite) graph $G=(V,E)$
	of maximum degree $\Delta$
	and every $r \in V$ if and only if $\lambda \le \lambda_*(\Delta)$.
\end{lemma}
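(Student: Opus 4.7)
The plan is to reduce both directions of the equivalence to the analysis of a single generating function. Let $\cT$ denote the infinite rooted tree in which the root has $\Delta$ ordered children and every non-root vertex has $\Delta-1$ ordered children; $\cT$ is the universal cover of any rooted graph of maximum degree $\Delta$. Let $T_k$ (resp.\ $\tilde T_k$) be the number of rooted subtrees of size $k$ containing the root in the subtree of $\cT$ hanging off a single child (resp.\ in $\cT$ itself). Decomposing according to which children of each vertex lie in the subtree gives the implicit equations
\begin{equation*}
F(x) := \sum_{k \ge 1} T_k\, x^k = x(1+F(x))^{\Delta-1}, \qquad \tilde F(x) := \sum_{k \ge 1} \tilde T_k\, x^k = x(1+F(x))^{\Delta}.
\end{equation*}

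Next, I would locate the singularity of $F$ by combining the implicit equation with the vanishing of the denominator in $F'(x) = (1+F)^{\Delta-1}/[1-x(\Delta-1)(1+F)^{\Delta-2}]$. Solving $F = x(1+F)^{\Delta-1}$ and $x(\Delta-1)(1+F)^{\Delta-2}=1$ simultaneously yields $F(x_c) = 1/(\Delta-2)$ and $x_c = (\Delta-2)^{\Delta-2}/(\Delta-1)^{\Delta-1} = \lambda_*(\Delta)$. Since $F$ is continuous up to the boundary of its disk of convergence, $\tilde F$ is finite at $\lambda_*(\Delta)$ (with value $\lambda_*(\Delta)\,((\Delta-1)/(\Delta-2))^\Delta$) and diverges for $\lambda > \lambda_*(\Delta)$.

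For the ``if'' direction, fix $G$ with maximum degree $\Delta$ and an arbitrary ordering of the neighbors of every vertex. Since each connected subset of size $k$ containing $r$ has at least one spanning tree, $Z_{G,r,\lambda} \le \sum_k t_k\,\lambda^k$, where $t_k$ counts rooted subtrees of $G$ of size $k$ containing $r$. I would map each such subtree $\tau$ injectively into $\cT$ by sending $r$ to the root of $\cT$ and inductively sending each child $v$ of an already-embedded $u \in \tau$ to the child of the image of $u$ indexed by the position of the edge $uv$ in the fixed ordering at $u$ (with $u$'s parent edge excluded at non-root vertices). The map is injective because the labels along any root-to-vertex path in $\cT$ can be read back into $G$, so $t_k \le \tilde T_k$ and $Z_{G,r,\lambda} \le \tilde F(\lambda) < \infty$ for every $\lambda \le \lambda_*(\Delta)$. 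For the ``only if'' direction, take $G$ to be the infinite $\Delta$-regular tree rooted at any vertex $r$: the embedding above becomes a bijection, every connected subset of $G$ is a subtree, and therefore $Z_{G,r,\lambda} = \tilde F(\lambda) = \infty$ whenever $\lambda > \lambda_*(\Delta)$.

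I expect the main obstacle to be the singularity analysis of $F$ together with the verification that $\tilde F$ actually converges on the \emph{closed} disk of radius $\lambda_*(\Delta)$. This is a standard Fuss--Catalan / branch-point computation, but it requires solving the implicit equation and the critical-slope condition jointly to extract the closed form of $\lambda_*(\Delta)$ and to check continuity at the boundary; the combinatorial injection in the ``if'' direction is then a routine embedding of a bounded-degree rooted tree into the universal cover.
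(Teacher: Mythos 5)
Your proof is correct, and it takes a genuinely different computational route to the same combinatorial core. Both arguments ultimately reduce to the fact that the infinite $\Delta$-regular tree is extremal (connected subsets of a bounded-degree rooted graph inject into rooted subtrees of the universal cover), and both count those subtrees; the difference is in how the count is handled. The paper (Lemma~\ref{lemma:part:finite-general} together with Lemma~\ref{lemma:Tvk}) derives the \emph{explicit} Fuss--Catalan formula $T_k = \frac{\Delta}{(\Delta-1)k+1}\binom{(\Delta-1)k+1}{k-1}$ via the Catalan-number recurrence, applies the ratio test to locate the radius of convergence $\lambda_*(\Delta)$, and then does a fairly lengthy Stirling computation for the boundary case $\lambda=\lambda_*$ in order to extract the explicit numerical bound $Z_{G,r,\lambda}\le 40$, which is needed later in the polymer-dynamics construction. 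You instead work directly with the implicit functional equation $F=x(1+F)^{\Delta-1}$ and locate $\lambda_*(\Delta)$ as the square-root branch point via the critical-slope condition; this avoids Lagrange inversion and the asymptotics entirely. One improvement worth noting for your boundary-case step: you don't actually need a singularity-analysis argument to show $\tilde F(\lambda_*)<\infty$. Since the coefficients $T_k$ are non-negative and the implicit equation forces $F(x)\nearrow 1/(\Delta-2)$ as $x\uparrow \lambda_*$, monotone convergence gives $\sum_k T_k\lambda_*^k = 1/(\Delta-2)<\infty$ immediately. That makes your route arguably cleaner for the unquantified lemma, though it does not hand you the explicit constant $40$ that the paper's approach yields as a byproduct. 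You also supply the universal-cover embedding in more detail than the paper, which simply asserts $C_{v,k}\le T_k$.
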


\noindent
Finally, we mention that the algorithmic result in Theorem~\ref{thmSubgraphSimple} cannot be extended even to the case $\lam = \lam_*(\Delta)$: for the infinite $\Delta$-regular tree, we can show that the expected size of a graphlet sampled from $\nu_{G,r,\lam}$ is infinite when $\lam = \lam_*(\Delta)$, and so it is impossible
to have sampling algorithms with finite expected running time.
In summary, 
the algorithm in Theorem~\ref{thmSubgraphSimple} for $\lambda < \lambda_*(\Delta)$, combined 
with the hardness/impossibility results in Lemmas~\ref{lemma:comp:hard} and~\ref{lemma:part:finite} for $\lambda > \lambda_*(\Delta)$, provide a resolution to the computational problem of sampling from~$\nu_{G,r,\lam}$ on graphs of maximum degree at most $\Delta$.

As mentioned, our sampling algorithm is based on exploring the connected component of $r$ in a vertex-percolation process. We carefully choose a specific percolation parameter $p \in (0,1)$ as a function of $\lam$ and $\Delta$ (see Lemma~\ref{lemma:binary:search}).  We then perform breadth-first search (BFS) from $r$, labeling each new vertex encountered `active' with probability $p$ and `inactive' with probability $1-p$ independently over all vertices; we continue the BFS exploring only unexplored neighbors of active vertices.  In this way we uncover the `active' component of $r$, call it $\gamma$.  We then accept $\gamma$ with a given probability depending on $\lam$, $\Delta$, $|\gamma|$ and $|\partial \gamma|$, where $\partial \gamma$ denotes the set of vertices outside of $\gamma$ that are adjacent to $\gamma$.
If we reject $\gamma$, we begin again with a new percolation process.
We note that only when $\lam < \lam_*(\Delta)$ there exists a suitable percolation probability $p$ that results in a subcritical percolation process, so that the size of the active component has finite expectation and exponential tails. 
The weighted model we sample from is particularly well suited to this type of exploration algorithm because of the direct connection to a subcritical percolation~process.

Random exploration and rejection sampling have been used previously to sample graphlets and other structures, most notably in the recent work of Bressan~\cite{bressan2021efficient} who uses a novel bucketing scheme in combination with rejection sampling to perfectly  sample uniformly random graphlets of size $k$ from a graph, as well as studying the mixing time of the random walk on the set of all such graphlets.  See also~\cite{anand2021perfect} in which a random growth process and rejection sampling are used to perfectly sample spin configurations.  

We prove a more general version of Theorem~\ref{thmSubgraphSimple} in Section~\ref{secSubgraphs}, allowing for vertex-labeled graphlets and modifications of the weights by multiplication by a non-negative function bounded by $1$.  These generalizations are needed for the application to polymer models in Section~\ref{secPolymerIntro}.

\subsection{Sampling from polymer models}
\label{secPolymerIntro}

We use our algorithm for sampling weighted rooted graphlets to
design fast and perfect samplers for polymer models.
Polymer models are systems of interacting geometric objects representing defects from pure ground states (i.e., most likely configurations) in spin systems on graphs in classical statistical physics~\cite{gruber1971general,kotecky1986cluster,friedli2017statistical}.  These geometric objects are most often represented by vertex-labeled graphlets from a given host graph.
Recently, polymer models have found application as an algorithmic tool to sample from spin systems on various classes of graphs in strong interaction regimes; see, e.g.,~\cite{helmuth2020algorithmic,cannon2020unbalanced,liao2019counting,chen2021fast,helmuth2020finite,galanis2021fast,galanis2021unbounded,borgs2020efficient,jenssen2021approximately,jenssen2020algorithms,carlson2020efficient,coulson2020statistical,chen2022sampling}.
In these applications, the problem of sampling weighted vertex-labeled rooted graphlets emerged as a significant computational barrier.

We will work with \emph{subset polymer models} in which all polymers are vertex-labeled graphlets from a host graph $G=(V,E)$. These models were defined in~\cite{gruber1971general} and generalized in~\cite{kotecky1986cluster}. Such a polymer model consists of: 
\begin{itemize}
	\setlength{\itemsep}{0pt}
	\item A set $\cC=\cC(G)$ of polymers, each of which is a graphlet in $G$ with the vertices of the graphlet labeled with colors from a set $\Sigma$ of size $q$. 
	\item Weights $w_{\gamma} \ge 0$ for each $\gamma \in \cC$. We  assume without loss of generality that 
	all vertex-labeled graphlets of $G$, including
	each individual vertex $v\in V$, are elements of $\cC$, by setting $w_\gamma = 0$ when necessary. 
	\item 	An incompatibility relation $\nsim$ defined by connectivity.  We say two polymers $\gamma, \gamma' \in \cC$ are \textit{incompatible} and write $\gamma \nsim \gamma'$ if the union of their corresponding vertices induces a connected subgraph in $G$.  Otherwise they are \textit{compatible} and we write $\gamma \sim \gamma'$. 
\end{itemize}

Let $\Omega(\cC)$ denote the set of all sets of pairwise compatible polymers from $\cC$.
The polymer model is the Gibbs distribution $\mu$ on $\Omega(\cC)$ defined by
\[ \mu(X) = \frac{ \prod_{\gamma \in X} w_{\gamma}    }{ Z(\cC)} \, , \quad \text{where} \quad Z( \cC) = \sum_{X \in \Omega( \cC) } \prod_{\gamma \in X} w_{\gamma} \]
is the polymer model partition function. 
The size $|\gamma|$ of a polymer $\gamma$ is the number of vertices in the corresponding graphlet. 
We let $\cC_v$ be all polymers containing vertex $v$. 

We will often work with a family of polymer models corresponding to an infinite family of host graphs $G$.  We say the weights of a family of polymer models are \textit{computationally feasible} if $w_\gamma$ can be computed in time polynomial in $|\gamma|$ uniformly over the polymer models in the family.
In addition, we call
a polymer model \emph{$\delta$-covering} if for every vertex $v$, there exists a polymer $\gamma \ni v$ with weight $w_\gamma \ge \delta$.
We note that 
the $\delta$-covering assumption is very mild and is satisfied by essentially all standard polymer models arising in statistical physics. In particular, it fails only in degenerate situations where all polymers containing some vertex have vanishingly small or zero weight.

Algorithms for sampling polymer models fall into two classes: those based on truncating the  cluster expansion of a polymer model to approximate a partition function and using self-reducibility to sample, and those based on Markov chains on the set of collections of compatible polymers. The cluster expansion approach, while giving polynomial-time algorithms, is relatively inefficient in general, with the degree of the polynomial bound on the running time growing with the degree of the underlying graph; e.g., running time $n^{O(\log \Delta)}$ for $n$-vertex graph of maximum degree $\Delta$.  A Markov chain approach based on adding and remove polymers from a polymer configuration in principle can be much faster (near linear time in the size of the graph) but runs into one hitch: a much stricter condition on the parameters of the model is needed to perform one update step of the Markov chain efficiently (the ``polymer sampling condition'' in~\cite{chen2021fast,galanis2021fast}).  
We solve this problem by adapting our rooted graphlet sampler to sample polymers models, leading to a near linear-time perfect sampling algorithm for polymer models under the least restrictive conditions known (see Table~\ref{table:conditions}). 

\begin{theorem}
	\label{thmPolymerSample1}  
	Fix $\Delta \ge 3$, $q \ge 1$, $\theta \in (0,1)$, $\delta > 0$, and $\lambda < \lambda_*(\Delta,q) := \frac{(\Delta-2)^{\Delta-2}}{q(\Delta-1)^{\Delta-1}}$.
	There is a perfect sampling algorithm for $\mu$ with expected running time $O(|V| \log |V|)$ for any family of 
    $\delta$-covering
	subset polymer models on maximum degree $\Delta$ graphs $G = (V,E)$
	with  computationally feasible weights
	satisfying:
	\begin{equation}
		\label{eqPolyLamCondition}
		w_{\gamma} \le \lam ^{|\gamma|}~\text{for all}~\gamma \in \cC;~\text{and}
	\end{equation}
	\begin{align}
		\label{eqPerfectSamplingCondition}
		{\sum\nolimits_{\gamma \not\sim {v}} |\gamma| w_\gamma \le \theta~\text{for all}~v \in V}. 
	\end{align}
	
\end{theorem}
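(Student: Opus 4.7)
The plan is to define a single-site Markov chain on $\Omega(\cC)$ whose step can be implemented in expected $O(1)$ time using the (generalized) rooted graphlet sampler from Section~\ref{secSubgraphs}, and then apply Propp--Wilson coupling from the past (CFTP) together with a Huber-style bounding chain to turn it into a perfect sampler with the desired stationary distribution $\mu$.

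For the Markov chain, at each step I would pick a uniformly random vertex $v \in V$ and resample the polymer (if any) at $v$ from the appropriate conditional distribution given the rest of the configuration; a standard detailed-balance calculation then shows the stationary distribution is $\mu$. The efficient implementation uses the promised generalization of Theorem~\ref{thmSubgraphSimple}, which handles vertex-labeled graphlets and permits modification of the sampler weights by a non-negative function bounded by $1$. Condition~\eqref{eqPolyLamCondition}, $w_\gamma \le \lambda^{|\gamma|}$ with $\lambda < \lambda_*(\Delta,q)$, is exactly what is needed: feeding the sampler the base weight $\lambda^{|\gamma|}$ on labeled graphlets (which is subcritical because $q\lambda < \lambda_*(\Delta)$ once the $q^{|\gamma|}$ labelings are accounted for) and then applying the bounded modifier $w_\gamma/\lambda^{|\gamma|} \le 1$ produces a sample of $\gamma \in \cC_v$ with probability proportional to $w_\gamma$ in expected constant time.

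To obtain a \emph{perfect} sample I would run CFTP driven by a bounding chain that maintains, for each vertex $v$, a set $B_v$ of polymers containing $v$ that is a superset of the polymer at $v$ in every coupled trajectory from any possible past initial state. Using the grand coupling (shared random vertex, random proposed polymer, and random uniform threshold), a proposal $\gamma$ at $v$ resolves deterministically whenever $\gamma$ is compatible with every polymer already certified to appear in the bounding configuration; otherwise the disagreement can propagate, but only to the at most $|\gamma|$ vertices of $\gamma$. Condition~\eqref{eqPerfectSamplingCondition}, $\sum_{\gamma \not\sim v}|\gamma|\, w_\gamma \le \theta < 1$, is tailor-made for this setting: the $|\gamma|$-weighting lets us union-bound the potential change over the vertices of a newly placed polymer, and $\theta < 1$ then forces a suitable weighted disagreement potential to contract geometrically at each update. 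A coupon-collector style argument then gives coalescence of the bounding chain within $O(|V|\log|V|)$ steps with high probability, and CFTP returns a sample exactly from $\mu$ in the claimed expected running time.

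The main obstacle is the contraction analysis for the bounding chain: because polymers can be arbitrarily large, the $|\gamma|$-weighting in~\eqref{eqPerfectSamplingCondition} is essential to control how a single disagreement at one vertex seeds new disagreements across all vertices of a freshly proposed polymer. A secondary difficulty is bookkeeping: each step must, aside from invoking the graphlet sampler, examine and update the bounding state in $O(1)$ amortized time, which requires a local data structure indexed by vertex (storing incident bounding polymers and their compatibilities) so that the total cost remains $O(|V|\log|V|)$ in expectation.
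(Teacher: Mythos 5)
Your high-level plan --- use the rooted graphlet sampler to drive a single-site Markov chain on polymer configurations, then perfect-sample via CFTP with a Huber-style bounding chain and a drift/contraction argument using condition~\eqref{eqPerfectSamplingCondition} --- matches the paper's architecture. But there are two concrete gaps in the middle of the argument.

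\textbf{The choice of chain.} You describe a heat-bath update: ``resample the polymer (if any) at $v$ from the appropriate conditional distribution given the rest of the configuration.'' For CFTP this is problematic, because the conditional distribution at $v$ depends on the entire configuration (through the compatibility constraints), so there is no single proposal from which all coupled trajectories can be updated consistently. The paper instead uses the ``polymer dynamics'' from~\cite{chen2021fast}: pick $v$ uniformly; with a fixed probability $1/41$ delete any polymer at $v$; with the remaining probability $40/41$ draw $\gamma$ from a \emph{state-independent} distribution $\nu_v$ supported on $\mathcal C_v \cup \{\gamma_\emptyset\}$ and add $\gamma$ only if it is compatible with the \emph{entire} current configuration (including any polymer already at $v$, so if $v$ is occupied the proposal is automatically rejected). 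The state-independence of $\nu_v$ is exactly what makes the grand coupling, and hence the bounding chain, implementable, and the fixed $1/41$ deletion probability is what drives the contraction. Your ``shared random proposed polymer and random uniform threshold'' sentence suggests you intuit this, but the heat-bath framing in the first paragraph is a different chain with a different (and harder to couple) update rule, and detailed balance for the polymer dynamics is a short but nonstandard calculation, not the usual heat-bath one.

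\textbf{Normalizing the proposal.} The graphlet sampler from Theorem~\ref{thmSubgraphSample:lambda} produces a sample from the distribution $\hat\nu_v(\gamma) \propto w_\gamma$ with normalizer $Z_v = \sum_{\hat\gamma \in \mathcal C_v \cup \{\gamma_\emptyset\}} w_{\hat\gamma}$, which varies with $v$. The polymer dynamics needs $\nu_v(\gamma) = w_\gamma/40$ with a \emph{fixed} normalizer $40$ (valid because $Z_v \le 40$, which itself must be proved --- this is Lemma~\ref{lemma:part:finite-general}). Converting the $\hat\nu_v$-sampler into a $\nu_v$-sampler without knowing $Z_v$ is nontrivial: the paper does it with a Bernoulli factory (simulating $\mathrm{Ber}(Z_v/40)$ from $\mathrm{Ber}(1/Z_v)$ coins obtained by checking whether the $\hat\nu_v$-sample is $\gamma_\emptyset$). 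Feeding $f = w_\gamma/\lambda^{|\gamma|}$ into the graphlet sampler gives you $\hat\nu_v$, not $\nu_v$, and your proposal does not close this gap.

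Two smaller points: the paper's bounding chain tracks \emph{both} a lower bound $B_t$ and an uncertainty set $D_t$, and the coalescence bound comes from a multiplicative-drift / hitting-time estimate on the potential $\phi_t = |(\bigcup_{\gamma \in D_t}\gamma) \cap V|$, not from a coupon-collector argument (though they give the same $O(n\log n)$). And the data-structure work you defer as a ``secondary difficulty'' is in fact a substantial piece of the proof, since $D_0 = \mathcal C$ is exponentially large and the whole point is to represent and update it implicitly in amortized constant time.
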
	
The threshold defined in~\eqref{eqPolyLamCondition} is the generalization of the critical threshold for rooted graphlet sampling to the labeled case (taking $q=1$ recovers the  definition in~\eqref{eqCrtiThresh}).
Theorem~\ref{thmPolymerSample1} improves upon the known results for sampling from polymer models in two ways. For a very general class of polymer models, 
our algorithm simultaneously provides \emph{perfect sampling} with \emph{near-linear running time} under weak conditions on the polymer weights.  We now review previous works to illustrate these improvements; see the accompanying Table~\ref{table:conditions}.

\begin{table}
	\begin{center}
		\begin{tabular}{||c | c | c | c||} 
			\hline
			Condition & Bound on exponential  & Running & Precision \\
			&decay of weights&Time&\\ [1ex] 
			
			\hline
			Koteck{\`y}--Preiss~\cite{helmuth2020algorithmic,jenssen2020algorithms,cannon2020unbalanced} & $w_\gamma \le {(e^2 q \Delta )^{-|\gamma|}}$ & $n^{O(\log \Delta)}$ & approximate \\ [1ex]
			\hline
			Polymer Sampling~\cite{chen2021fast,galanis2021fast} & $w_\gamma \le {(e^5 q^3 \Delta^3 )^{-|\gamma|}}$ & $O(n \log n)$ & approximate \\[1ex]
			\hline
			Clique dynamics~\cite{friedrich2020polymer} & $w_\gamma \le {(e q \Delta )^{-|\gamma|}}$ & $n^{O(\log \Delta)}$ & approximate\\[1ex]
			\hline
			This work (Theorem~\ref{thmPolymerSample1}) & $w_\gamma \le {(e q \Delta )^{-|\gamma|}}$ & $O(n \log n)$ & perfect \\  \hline
		\end{tabular}
		\caption{Comparison of conditions and running times of known polymer sampling algorithms.}
		\label{table:conditions}
	\end{center}
\end{table}

A number of conditions on polymer weights have been used to provide efficient sampling algorithms.  The first papers in this direction (including~\cite{helmuth2020algorithmic,jenssen2020algorithms,cannon2020unbalanced}) used the Koteck{\`y}--Preiss condition for convergence of the cluster expansion of the polymer model partition function~\cite{kotecky1986cluster}: 
$ 
\sum_{\gamma' \nsim \gamma}  w_{\gamma'} e^{|\gamma'|}  \le |\gamma|~~\forall \gamma \in \cC.
$
This condition is typically verified by ensuring that:
\begin{equation}
	\label{eqKPcond}
	\sum\nolimits_{\gamma \nsim v}  w_{\gamma} e^{|\gamma|}     \le 1 \quad \forall v\in V\, .
\end{equation}
Since the number of vertex-labeled rooted graphlets of size $k$ in a maximum degree $\Delta$ graph grows roughly like $(e q\Delta)^{k-1}$~(see~\cite{BCKL}), weights of polymers of size $k$ must decay roughly like $(e^2 q \Delta)^{-k}$ for the polymer model to satisfy~\eqref{eqKPcond}, with the extra factor of $e$ coming from the exponential in the left hand side of the condition~\eqref{eqKPcond}. 

The  major downside to the algorithms based on the cluster expansion, i.e., those using  ~\eqref{eqKPcond} or the Koteck{\`y}--Preiss condition, is that the running times obtained are of the form $n^{O(\log \Delta)}$.  Subsequent works, namely~\cite{chen2021fast,galanis2021fast}, addressed this downside but at the cost of a significantly stricter condition on the polymer weights. 

In~\cite{chen2021fast}, the authors devised a new Markov chain algorithm for sampling from polymer models.  The condition on the polymer weights for rapid mixing of this chain is somewhat less restrictive than the Koteck{\`y}--Preiss condition; it is the Polymer Mixing condition:
\begin{equation} \label{eqn:polymermixing}
	\sum\nolimits_{\gamma' \nsim \gamma} |\gamma'| w_{\gamma'} \le \theta |\gamma| \, \quad \forall \gamma \in \cC~\text{for some } \theta \in (0,1) \,.
\end{equation}
This requires weights of polymers of size $k$ to decay like $(e q \Delta)^{-k}$, a savings of a factor $e$ in the base of the exponent over~\eqref{eqKPcond}.  However, to implement a single step of this Markov chain in constant expected time, a stronger condition (the Polymer Sampling condition) was required:  
\begin{equation}
	\label{eqPolySampling} 
	w_{\gamma} \le  \left (e^5 \Delta^3 q^3 \right)^{-|\gamma|} \,.
\end{equation}
This is a significant loss of a factor $e^3 \Delta^2 q^2$ in the base of the exponent compared to~\eqref{eqKPcond}, but the resulting sampling algorithm does run in near linear time.  

In~\cite{friedrich2020polymer}, the authors use a different Markov chain condition, the Clique Dynamics condition, similar to~\eqref{eqn:polymermixing},  which requires weights of polymers of size $k$ to decay like $(e q \Delta)^{-k}$, saving the same factor $e$ over~\eqref{eqKPcond}.  Their running times, though, are again of the form $n^{O(\log \Delta)}$ since implementing one step of their Markov chain involves enumerating rooted polymers of size $O(\log n)$.

Our results are a ``best-of-both-worlds'' for polymer sampling: under the conditions~\eqref{eqPolyLamCondition} and~\eqref{eqPerfectSamplingCondition} that both require polymer weights to decay like $(e q \Delta)^{-k}$ (this is shown later; see, e.g., the proof of Corollary~\ref{corPottsEx}), 
we obtain a near linear time algorithm. 
Moreover, unlike any of the previous results, our algorithm is a perfect sampler.  

To conclude this section, we comment briefly on the algorithm we design to sample from $\mu$.
Our starting point is the polymer dynamics Markov chain from~\cite{chen2021fast}.
We use it to implement a Coupling from the Past (CFTP) algorithm (see~\cite{ProppWilson}). To do so efficiently (in terms of the number of steps of the Markov chain), we design a new ``bounding Markov chain'' for the polymer dynamics, a method pioneered in~\cite{huber2004bounding,HN}, and
to implement each step of the Markov chain efficiently, we turn to our sampler for weighted rooted graphlets from Theorem~\ref{thmSubgraphSimple}.

\subsection{Applications to spin systems}
\label{sub:app:spin}

Our new algorithm for sampling subset polymer models
can be used as a subroutine in essentially all previous applications of polymer models for spin system sampling at low temperatures, including those in~\cite{jenssen2020algorithms,cannon2020unbalanced,liao2019counting,chen2021fast,helmuth2020finite,galanis2021fast,galanis2021unbounded,carlson2020efficient,coulson2020statistical,chen2022sampling}. This results in faster sampling algorithms under less restrictive conditions on model parameters in all those settings.
As examples, we fleshed out here the details in two of these applications; more details are provided in  Section~\ref{secApplications}.

\medskip\noindent\textbf{Hard-core model on bipartite  graphs.}	\ The {hard-core model} on a graph $G$ is the probability distribution $\mu_G^{hc}$ on $\cI(G)$, the set of all independent sets of $G$, with
\begin{equation}
	\label{eq:hc:dist}
	\mu_G^{hc}(I) = \frac{ \lam^{|I|}}{Z_G^{hc}(\lam)},\quad where \quad Z_G^{hc}(\lam) = \sum_{I \in \cI(G)} \lam^{|I|} \,. 
\end{equation}
The complexity of approximate counting and sampling from $\mu_G^{hc}$ on bounded-degree graphs is well understood: there is a computational threshold at some $\lam_c(\Delta)$, with efficient algorithms for $\lam<\lam_c(\Delta)$~\cite{weitz2006counting,ALO,ChenLiuVigoda,ChenLiuVigoda1} and hardness above the threshold (no polynomial-time algorithms unless NP=RP) \cite{sly2010computational,galanis2016inapproximability,sly2012computational}. However, on bipartite graphs, the complexity of these problems is unresolved and is captured by the class \#BIS (approximately counting independent sets in bipartite graphs) defined by Dyer, Goldberg, Greenhill, and Jerrum~\cite{dyer2004relative}. 

Theorem~\ref{thmPolymerSample1} implies the existence of a fast perfect sampling algorithm for the hard-core model in a certain class of bipartite graphs called {\it unbalanced bipartite graphs}, considered in~\cite{cannon2020unbalanced,friedrich2020polymer}.

\begin{corollary}
	\label{corHCunbalanced}
	There is a perfect sampling algorithm for $\mu_G^{hc}$ running in expected time $O(n \log n)$ for 
	$n$-vertex bipartite graphs $G$ with bipartition $(L,R)$, with  maximum degree $\Delta_L$ in $L$, maximum degree $\Delta_R$ in $R$, and  minimum degree $\delta_R$ in $R$ if
	\begin{equation}
		\label{eqn:unbalancedcond}
		\lambda(1 + (1+e) (\Delta_L-1)\Delta_R) < (1+\lambda)^{\delta_R/\Delta_L}.
	\end{equation}
\end{corollary}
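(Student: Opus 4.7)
The plan is to recast the hard-core model on the bipartite graph $G$ as a subset polymer model on an auxiliary graph, verify the hypotheses of Theorem~\ref{thmPolymerSample1}, and post-process the resulting polymer sample into an independent set.

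Following the strategy of~\cite{cannon2020unbalanced}, I would take as host graph $H$ the graph on vertex set $R$ in which two vertices are adjacent iff they share a common $L$-neighbor in $G$. Since each vertex of $R$ has at most $\Delta_R$ neighbors in $L$ and each such $L$-vertex has at most $\Delta_L-1$ further $R$-neighbors, $H$ has maximum degree at most $\Delta := (\Delta_L-1)\Delta_R$. Define a single-color ($q=1$) polymer model on $H$ in which polymers are $H$-graphlets $\gamma \subseteq R$ with weight
\[
w_\gamma \;=\; \frac{\lambda^{|\gamma|}}{(1+\lambda)^{|N_G(\gamma)|}}.
\]
A direct calculation---partitioning each independent set $I \in \cI(G)$ into $I\cap R$ (whose $H$-connected components form a compatible collection of polymers) and $I\cap L$ (an arbitrary subset of $L \setminus N_G(I\cap R)$)---yields $Z_G^{hc}(\lambda) = (1+\lambda)^{|L|} Z(\cC)$ and couples $\mu_G^{hc}$ with $\mu$ as follows: given $X \sim \mu$, return $\bigl(\bigcup_{\gamma \in X} \gamma\bigr) \cup B$, where $B$ consists of each vertex of $L \setminus N_G(\bigcup_{\gamma \in X} \gamma)$ included independently with probability $\lambda/(1+\lambda)$.

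To apply Theorem~\ref{thmPolymerSample1} I would verify~\eqref{eqPolyLamCondition} and~\eqref{eqPerfectSamplingCondition}. Double-counting edges between $\gamma$ and $L$ (each vertex of $\gamma$ has $\geq \delta_R$ edges into $L$, each $L$-vertex sees at most $\Delta_L$ vertices of $R$) gives $|N_G(\gamma)| \geq |\gamma|\delta_R/\Delta_L$, hence
\[
w_\gamma \;\leq\; \tilde\lambda^{|\gamma|}, \qquad \tilde\lambda \;:=\; \frac{\lambda}{(1+\lambda)^{\delta_R/\Delta_L}}.
\]
Rearranging~\eqref{eqn:unbalancedcond} gives $\tilde\lambda(1 + (1+e)\Delta) < 1$, from which $\tilde\lambda < \lambda_*(\Delta,1)$ follows using the standard estimate $\lambda_*(\Delta) \geq 1/(e(\Delta-1))$, yielding~\eqref{eqPolyLamCondition}. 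For~\eqref{eqPerfectSamplingCondition}, I would use the standard counting bound that the number of $H$-graphlets of size $k$ containing any fixed vertex is at most $(e\Delta)^{k-1}$, together with the observation that $\gamma \not\sim v$ iff $\gamma$ meets the ball $\{v\} \cup N_H(v)$ of size at most $1+\Delta$. A union bound followed by geometric-series summation then produces a bound of the form $\sum_{\gamma\not\sim v} |\gamma| w_\gamma \leq \theta < 1$, with~\eqref{eqn:unbalancedcond} supplying the precise slack needed in the resulting rational expression in $e\Delta\tilde\lambda$. With both conditions verified, Theorem~\ref{thmPolymerSample1} produces a perfect sample $X \sim \mu$ in expected time $O(|R|\log|R|) = O(n \log n)$, and the $O(n)$-time post-processing above delivers a perfect sample from $\mu_G^{hc}$.

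The main obstacle is tuning the constants so that the single hypothesis~\eqref{eqn:unbalancedcond} implies \emph{both} polymer conditions simultaneously. The additive $1$ inside $1 + (1+e)\Delta$ accounts for the root vertex $v$, while the factor $(1+e)$ is precisely what is needed so that, after applying the $(e\Delta)^{k-1}$ graphlet count and bounding the geometric series by $(1+\Delta)\tilde\lambda/(1-e\Delta\tilde\lambda)^2$, the resulting quantity stays strictly below $1$. Carrying out this algebra cleanly---and ensuring the same hypothesis also places $\tilde\lambda$ below the critical threshold $\lambda_*(\Delta)$---is the main technical calculation.
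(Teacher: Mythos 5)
Your reduction to a subset polymer model on $R^2$ (your $H$), the weight $w_\gamma = \lambda^{|\gamma|}/(1+\lambda)^{|N_G(\gamma)|}$, the double-counting bound $|N_G(\gamma)| \ge |\gamma|\delta_R/\Delta_L$ giving $w_\gamma \le \tilde\lambda^{|\gamma|}$, the post-processing, and the verification of~\eqref{eqPolyLamCondition} all match the paper's proof. The gap is in your verification of~\eqref{eqPerfectSamplingCondition}: you use the graphlet count $(e\Delta)^{k-1}$ per root, whereas the paper uses the sharper bound $(e\Delta)^{k-1}/k$, obtained from Lemma~\ref{lemma:Tvk}, and this factor of $1/k$ is not a cosmetic saving --- it is load-bearing, because it exactly cancels the factor $|\gamma| = k$ appearing in $\sum_{\gamma \not\sim v} |\gamma| w_\gamma$.

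Concretely, write $\Delta = (\Delta_L-1)\Delta_R$. With the paper's count, the sum reduces to a geometric series and the bound is
\[
\sum_{\gamma \not\sim v} |\gamma| w_\gamma \;\le\; (\Delta+1)\sum_{k\ge 1}(e\Delta)^{k-1}\tilde\lambda^k \;=\; \frac{(\Delta+1)\tilde\lambda}{1 - e\Delta\tilde\lambda},
\]
which equals exactly $1$ at $\tilde\lambda = \tfrac{1}{1+(1+e)\Delta}$ and hence is $<1$ under the strict inequality~\eqref{eqn:unbalancedcond}. With your count, the extra factor of $k$ survives and you instead get
\[
\sum_{\gamma \not\sim v} |\gamma| w_\gamma \;\le\; \frac{(\Delta+1)\tilde\lambda}{\bigl(1 - e\Delta\tilde\lambda\bigr)^2},
\]
exactly as you wrote. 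Substituting $\tilde\lambda = \tfrac{1}{1+(1+e)\Delta}$ here gives
\[
\frac{(\Delta+1)\cdot\tfrac{1}{1+(1+e)\Delta}}{\left(\tfrac{1+\Delta}{1+(1+e)\Delta}\right)^2} \;=\; \frac{1+(1+e)\Delta}{1+\Delta},
\]
which is strictly greater than $1$ for every $\Delta\ge 1$ (and tends to $1+e$ as $\Delta\to\infty$). So~\eqref{eqn:unbalancedcond} does \emph{not} make your expression stay below $1$; you would need a strictly stronger condition on $\lambda$ with the crude count. The fix is to replace $(e\Delta)^{k-1}$ with the count $(e\Delta)^{k-1}/k$ supplied by Lemma~\ref{lemma:Tvk}, after which your argument matches the paper's and closes with the stated constant.
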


Approximate sampling algorithms with large polynomial run times were previously given for this problem when ${6\lambda\Delta_L\Delta_R < (1+\lambda)^{\delta_R/\Delta_L}}$ in \cite{cannon2020unbalanced} and when ${3.3353\lambda\Delta_L\Delta_R < (1+\lambda)^{\delta_R/\Delta_L}}$ in~\cite{friedrich2020polymer}. Our result applies to a comparable parameter range: inequality~\eqref{eqn:unbalancedcond} holds, for instance, when 
$(1+e)\lambda\Delta_L\Delta_R < (1+\lambda)^{\delta_R/\Delta_L}$, or when $3\lambda\Delta_L\Delta_R < (1+\lambda)^{\delta_R/\Delta_L}$ and $\Delta_L < 6$.
More importantly, our algorithm is the first to achieve perfect sampling and near-linear running time.

\medskip\noindent\textbf{Potts model on expander graphs.} \ The $Q$-color ferromagnetic Potts model on a graph $G=(V,E)$ is the probability distribution $\mu_G^{\mathrm{Potts}}$ on the set of $Q$-colorings of the vertices of $G$; i.e., $\{1,\dots,Q\}^V$. Each $Q$-coloring $\sigma$ is assigned probability $\mu_G ^{\mathrm{Potts}}(\sigma) \propto {e^{\beta m(G,\sigma)}}$, where $m(G,\sigma)$ is the number of monochromatic edges of $G$ under the coloring $\sigma$ and $\beta > 0$ is a model parameter. 
When the parameter $\beta$ is large, and $G$ has some structure (e.g., $G$ is an expander graph), typical configurations drawn from $\mu_G ^{\mathrm{Potts}}$ are dominated by one of the $Q$ colors; that is, there is phase coexistence in the model.  This enables sampling using subset polymer models. 

Recall that an $n$-vertex graph $G = (V, E)$  is  an $\alpha$-expander if for all subsets $S \subseteq V$ with $|S| \leq n/2$, the number of edges in $E$ with exactly one endpoint in $S$ is at least $\alpha |S|$.

\begin{corollary}
	\label{corPottsEx}
	Consider the $Q$-color ferromagnetic Potts model on an $\alpha$-expander $n$-vertex graph of maximum degree $\Delta$. Suppose
	\begin{equation}
		\label{eqn:pottsbeta}
		\beta \geq \frac{1 + \log\left(\frac{\Delta+1}{e\Delta}+1\right) + \log((Q-1)\Delta )}{\alpha} \,.
	\end{equation} 
	Then there is a sampling algorithm with expected running time $O(n \log n)$ that outputs a sample $\sigma$ with distribution $\hat \mu$ so that $\|\hat \mu - \mu_G ^{\mathrm{Potts}}  \|_{\textsc{tv}} \le e^{-\Omega(n)}$.   
\end{corollary}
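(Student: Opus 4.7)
The plan is the standard polymer-model approach to low-temperature Potts on expanders, with Theorem~\ref{thmPolymerSample1} supplying the actual sampler. For each color $k\in\{1,\dots,Q\}$ I would define a subset polymer model $\cC_k$ whose polymers are vertex-labeled connected induced subgraphs $\gamma$ of $G$ with $|\gamma|\le n/2$, with vertex labels drawn from $\{1,\dots,Q\}\setminus\{k\}$ (so the label alphabet has size $q=Q-1$), weighted by $w_\gamma = e^{-\beta\,d_k(\gamma)}$, where $d_k(\gamma)$ is the number of $G$-edges rendered bichromatic when $V(\gamma)$ carries the labels of $\gamma$ and $V\setminus V(\gamma)$ is uniformly colored $k$. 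Incompatibility is connectivity of the union, placing $\cC_k$ in the subset polymer framework. Since $d_k(\gamma)$ is at least the edge boundary of $V(\gamma)$ in $G$ and $|\gamma|\le n/2$, the $\alpha$-expansion of $G$ gives $d_k(\gamma)\ge\alpha|\gamma|$, and hence the uniform bound $w_\gamma\le e^{-\beta\alpha|\gamma|}$.

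I would then verify the two hypotheses of Theorem~\ref{thmPolymerSample1} with $\lambda = e^{-\beta\alpha}$. The assumed lower bound on $\beta$ rearranges to $e^{\beta\alpha}\ge (Q-1)(e\Delta+\Delta+1)$, so $\lambda \le [(Q-1)(e\Delta+\Delta+1)]^{-1}$; combined with the elementary estimate $(\Delta-1)^{\Delta-1}/(\Delta-2)^{\Delta-2}\le e(\Delta-1)$, which yields $\lambda_*(\Delta,Q-1)\ge [e(\Delta-1)(Q-1)]^{-1}$, this gives \eqref{eqPolyLamCondition} with positive slack. For \eqref{eqPerfectSamplingCondition} I would bound the number of vertex-labeled rooted graphlets of size $k$ containing a fixed vertex $v$ by $(Q-1)^k(e\Delta)^{k-1}$, using the standard count of unlabeled rooted graphlets from~\cite{BCKL} times $(Q-1)^k$ label assignments, obtaining
\begin{equation*}
\sum_{\gamma\not\sim v}|\gamma|\,w_\gamma \;\le\; \sum_{k\ge 1} k\,(Q-1)^k(e\Delta)^{k-1}e^{-\beta\alpha k} \;=\; (Q-1)e^{-\beta\alpha}\sum_{k\ge 1} k\,\rho^{k-1},
\end{equation*}
with $\rho=(Q-1)e\Delta\, e^{-\beta\alpha}\le e\Delta/(e\Delta+\Delta+1)<1$, so the sum is bounded by some $\theta=\theta(\Delta,Q)<1$ independent of $n$.

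To reduce Potts sampling to polymer sampling I would use the standard Peierls decomposition $Z_G^{\mathrm{Potts}} = \sum_{k=1}^{Q} e^{\beta|E|}\,Z(\cC_k) + R$, where $R$ collects the colorings not captured by any single ground state, i.e.\ those whose non-$k$ vertex set has a connected component of size $>n/2$ for every $k$. By the $\alpha$-expansion each such coloring carries at least $\Omega(\alpha n)$ bichromatic edges, so $R/Z_G^{\mathrm{Potts}}=e^{-\Omega(n)}$ under the hypothesis on $\beta$. By the symmetry of the ground states, $Z(\cC_1)=\dots=Z(\cC_Q)$, so the algorithm draws $k\in\{1,\dots,Q\}$ uniformly, invokes Theorem~\ref{thmPolymerSample1} on $\cC_k$ to produce $\Gamma\sim\mu_{\cC_k}$ in expected $O(n\log n)$ time, and outputs the coloring that assigns color $k$ to every $v\notin \bigcup_{\gamma\in\Gamma}V(\gamma)$ and the $\gamma$-label of $v$ otherwise. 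The resulting law $\hat\mu$ then satisfies $\|\hat\mu-\mu_G^{\mathrm{Potts}}\|_{\textsc{tv}} = R/(R+\sum_k e^{\beta|E|}Z(\cC_k)) = e^{-\Omega(n)}$, as required.

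The main obstacle is the quantitative calibration of the constants in the second step: the unusual term $\log((\Delta+1)/(e\Delta)+1)$ in the hypothesis on $\beta$ is tuned so that both $\lambda<\lambda_*(\Delta,Q-1)$ and $\rho<1$ hold simultaneously, with enough slack that $\theta<1$ is an absolute constant. A secondary technical point is the Peierls estimate on $R$, which must beat the $Q^n$ entropy of color assignments; this is standard once the expansion-based lower bound on bichromatic edges is in hand, and is comfortably absorbed by the $\log((Q-1)\Delta)/\alpha$ term in the hypothesis.
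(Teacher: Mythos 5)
Your outline matches the paper's structure (define per-ground-state polymer model, verify conditions~\eqref{eqPolyLamCondition} and~\eqref{eqPerfectSamplingCondition}, invoke Theorem~\ref{thmPolymerSample1}, use the Peierls argument to control the error from excluding mixed configurations), and the verification of~\eqref{eqPolyLamCondition} and the reduction to the polymer model are correct. However, the verification of~\eqref{eqPerfectSamplingCondition} has a genuine gap.

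The problem is in the combinatorial count. You bound the number of vertex-labeled graphlets of size $k$ \emph{containing} a fixed vertex $v$ by $(Q-1)^k(e\Delta)^{k-1}$, and then insert this into the sum over $\gamma\not\sim v$, i.e., graphlets \emph{incompatible} with $v$. These are different sets: a graphlet is incompatible with $v$ if it contains $v$ or any of its $\le\Delta$ neighbors, so the count you need is roughly $\Delta+1$ times larger. Already at $k=1$ the discrepancy shows: the actual number of size-one polymers incompatible with $v$ is up to $(\Delta+1)(Q-1)$, whereas your bound gives $(Q-1)$. Your formula $\sum_{k\ge1}k(Q-1)^k(e\Delta)^{k-1}e^{-\beta\alpha k}$ is therefore not a valid upper bound on $\sum_{\gamma\not\sim v}|\gamma|w_\gamma$.

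Moreover, simply reinstating the missing $(\Delta+1)$ factor does not repair the argument, because your count $(e\Delta)^{k-1}$ from~\cite{BCKL} is a factor of $k$ weaker than the bound $(e\Delta)^{k-1}/k$ that the paper uses (attributed to~\cite{borgs1989unified}, and recoverable from Lemma~\ref{lemma:Tvk}). With the sharp $1/k$, the weight $|\gamma|=k$ in the sum cancels and one is left with a \emph{geometric} series $\frac{\Delta+1}{e\Delta}\sum_k\rho^k=\frac{\Delta+1}{e\Delta}\cdot\frac{\rho}{1-\rho}$, which at the threshold $\rho=\frac{e\Delta}{e\Delta+\Delta+1}$ equals exactly $1$ and lies below $1$ once the hypothesis~\eqref{eqn:pottsbeta} is strict. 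Without the $1/k$, one instead gets the derivative-of-geometric series $\frac{\Delta+1}{e\Delta}\sum_k k\rho^k=\frac{\Delta+1}{e\Delta}\cdot\frac{\rho}{(1-\rho)^2}$, which at the same $\rho$ evaluates to $\frac{e\Delta+\Delta+1}{\Delta+1}>1$, so the condition would not be verified for the $\beta$-range of the corollary. In short: you need both the factor $\Delta+1$ (for incompatible vs.\ containing) and the factor $1/k$ (the sharper graphlet count), since the $1/k$ cancels $|\gamma|$ and the $\Delta+1$ is tuned exactly to the term $\log\bigl(\frac{\Delta+1}{e\Delta}+1\bigr)$ in the hypothesis.
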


Previously,~\cite{chen2021fast} provided a $\varepsilon$-approximate sample for $\mu_G ^{\mathrm{Potts}}$ in $O(n \log(n/\varepsilon) \log(1/\varepsilon))$ time whenever $\beta \geq \frac{5+3\log((Q-1)\Delta)}{\alpha}$. 
Condition~\eqref{eqn:pottsbeta} holds when $\beta \geq \frac{1.2+\log((Q-1)\Delta)}{\alpha}$,
so our algorithm applies to a larger range of parameters and 
removes the dependence on $\varepsilon$ from the running time. We do not achieve perfect sampling in this application only because the subset polymer models used give approximations of $\mu ^{\mathrm{Potts}}_G$ rather than describing $\mu ^{\mathrm{Potts}}_G$ exactly. 

\subsection{Sampling unrooted graphlets in finite graphs}
\label{secUnRootedIntro}

As another application of our algorithm for sampling weighted rooted graphlets, 	
we consider next the problem of sampling weighted \emph{unrooted} graphlets in a finite graph.  Given a finite graph $G$, let $\cS(G)$ be the set of all graphlets of $G$.  Define the distribution $\nu_{G,\lam}$ on $\mathcal S(G)$ by
\[  \nu_{G,\lam} (\gamma) = \frac{ \lam^{|\gamma|}  }{ Z_{G,\lam}     }\,,  \quad \text{where}~Z_{G,\lam} = \sum\nolimits_{\gamma \in \cS(G)} \lam^{|\gamma|}  \,. \]
Read-McFarland and \v{S}tefankovi\v{c}~\cite{read2021hardness} gave a polynomial-time approximate sampling algorithm for $\nu_{G,\lam}$ for the class of maximum-degree $\Delta$ graphs when $\lam < \lam_*(\Delta)$ and prove that there is no such algorithm for $\lam \in (\lam_*(\Delta),1)$ unless NP=RP\footnote{In~\cite{read2021hardness}, the threshold is incorrectly stated as $\lam < \lam_*(\Delta+1)$; this is due to a minor error interchanging the infinite $\Delta$-regular tree with the infinite $\Delta$-ary tree; with this small correction their analysis goes through with the bound on $\lambda$ as stated here.}.
We give a new algorithm for this problem, covering the entire $\lambda < \lambda_*(\Delta)$ regime, and improving on the result of~\cite{read2021hardness} in two ways: (i) our running time is constant in expectation (with no dependence on $n$), while the running time of the $\varepsilon$-approximate sampler in~\cite{read2021hardness} is 
$n \cdot (1/\varepsilon)^{O(\log \Delta)}$;
and (ii) our algorithm outputs a perfect sample instead of an approximate one (and thus the running time has no dependence on any approximation parameter).

\begin{theorem}
	\label{thmUnrooted}
	Fix $\Delta \ge 3$ and let $\lam < \lam_*(\Delta)$.  Then for the class of finite graphs of maximum degree $\Delta$ there is a randomized algorithm running in constant expected time that outputs a perfect sample from $\nu_{G,\lam}$.  The expected running time is bounded as a function of $\Delta$ and $\lam$. 
\end{theorem}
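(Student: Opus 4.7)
My plan is to adapt the percolation-and-rejection strategy behind Theorem~\ref{thmSubgraphSimple} to the unrooted setting by picking the root uniformly at random from $V$ and dividing the rejection filter by $|\gamma|$ to cancel the resulting size bias. The key observation is that the BFS vertex-percolation with parameter $p$ produces a specific cluster $\gamma$ with probability $p^{|\gamma|-1}(1-p)^{|\partial\gamma|}$ regardless of which $r\in\gamma$ one starts at; similarly, the rejection probability $\tilde a(\gamma)\in[0,1]$ used by the rooted algorithm depends on $\gamma$ only through $|\gamma|$ and $|\partial\gamma|$ (and on $\lam,\Delta$), and is chosen so that $p^{|\gamma|-1}(1-p)^{|\partial\gamma|}\tilde a(\gamma)=c\,\lam^{|\gamma|}$ for a universal constant $c=c(\lam,\Delta)>0$.

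The algorithm is: draw $r\in V$ uniformly at random in $O(1)$ time (allowed by our model of computation), run the BFS vertex-percolation from $r$ with the parameter $p=p(\lam,\Delta)$ of Theorem~\ref{thmSubgraphSimple} to obtain an active cluster $\gamma$, and accept with probability $\tilde a(\gamma)/|\gamma|\in[0,1]$, restarting otherwise. Correctness is a one-line computation: for every $\gamma\in\cS(G)$,
\[
\Pr[\text{a single round outputs }\gamma]
=\sum_{r\in\gamma}\frac{1}{|V|}\,p^{|\gamma|-1}(1-p)^{|\partial\gamma|}\,\frac{\tilde a(\gamma)}{|\gamma|}
=\frac{c\,\lam^{|\gamma|}}{|V|},
\]
so conditioning on acceptance gives exactly $\nu_{G,\lam}(\gamma)=\lam^{|\gamma|}/Z_{G,\lam}$. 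The factor $1/|\gamma|$ is precisely what compensates for $\gamma$ being discoverable from each of its $|\gamma|$ vertices.

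For the running time, the per-round acceptance probability equals $cZ_{G,\lam}/|V|\ge c\lam$ using the trivial bound $Z_{G,\lam}\ge|V|\lam$ coming from singletons; hence the expected number of rounds is $O_{\lam,\Delta}(1)$. Each round itself takes $O(1)$ expected time because the percolation is subcritical for $\lam<\lam_*(\Delta)$, so the cluster size has exponential tails. Overall the expected running time depends only on $\lam$ and $\Delta$, as required. The only real conceptual step is spotting the $1/|\gamma|$ correction that turns ``uniform root + rooted cluster'' into an unbiased unrooted sampler; once this is in place, everything reduces to the analysis already carried out for the rooted case, and I expect no significant obstacle beyond bookkeeping.
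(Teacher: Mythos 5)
Your proposal is correct and matches the paper's own proof of Theorem~\ref{thmUnrooted} (via Theorem~\ref{thmSubgraphSample:lambda:unrooted}) essentially verbatim: pick a uniformly random root, run the same BFS percolation, and divide the rooted rejection probability by $|\gamma|$ to cancel the $|\gamma|$-fold overcounting, with the acceptance probability bounded below by a constant via $Z_{G,\lam}\ge|V|\lam$. The only cosmetic difference is a convention: you write the cluster-generation probability as $p^{|\gamma|-1}(1-p)^{|\partial\gamma|}$ (root always included), while the paper includes the root with probability $\hat p$ and hence has $\hat p^{|\gamma|}(1-\hat p)^{|\partial\gamma|}$; either choice works once the rejection filter is matched to it.
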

\noindent
The algorithm we use for this theorem is a modification of the one for sampling rooted graphlets. We pick a uniformly random $v \in V$, run the same BFS percolation exploration, and accept the connected component of $v$ with an adjusted probability (to account for the fact that a graphlet can be generated from any of its vertices).
The acceptance probability is bounded away from $0$ and so the algorithm runs in constant expected time. As mentioned earlier, the $\varepsilon$-approximate sampling algorithm from~\cite{read2021hardness} is based on the exhaustive enumeration of all subgraphs of size $\le k$, for some $k$ that depends on $\varepsilon$. Our new algorithm entirely bypasses this enumeration barrier.

\section{Graphlet sampling: algorithms}
\label{secSubgraphs}

In this section we present our efficient perfect sampling algorithm
for weighted, vertex-labeled graphlets containing a fixed vertex $r$ from a maximum degree $\Delta$ graph;
in particular, in Section~\ref{subsec:rooted}, we prove a generalized version of Theorem~\ref{thmSubgraphSimple} from the introduction.
We also provide in Section~\ref{subsec:unrooted}
our algorithm for sampling weighted graphlets (i.e., the unrooted, unlabeled case) and establish Theorem~\ref{thmUnrooted}.
Our hardness results, that is~Lemmas~\ref{lemma:comp:hard} and \ref{lemma:part:finite}, are proved later in Section~\ref{sec:hardness}.

\subsection{Sampling rooted vertex-labeled graphlets}
\label{subsec:rooted}

Let $G = (V,E)$ be a (possibly infinite) graph of maximum degree $\Delta$. 
For $U \subseteq V$,
let $G[U]$ denote the corresponding vertex-induced subgraph of $G$; specifically, $G[U] = (U,E(U))$, where $E(U) \subseteq E$ is the set of edges of $G$ with both endpoints in $U$.
A vertex-induced subgraph is a \emph{graphlet} if it is connected.
For $r \in V$, let 	$\mathcal S(G,r)$ be the set of all graphlets of $G$ that contain vertex $r$.
We call the graphlets in $\mathcal S(G,r)$ the graphlets rooted at $r$.

Let $\Sigma = \{1,\dots,q\}$ be a set of vertex labels or colors, and
let $\mathcal S (G,r,q) = \bigcup_{(S,E(S)) \in \mathcal S(G,r)} \Sigma^S$ be the set of all vertex-labeled graphlets rooted at $r$.
Given a real parameter $\lambda > 0$,
we assign to each rooted vertex-labeled graphlet $\gamma \in \mathcal S (G,r,q) \cup \{\emptyset\}$ with $|\gamma|$ vertices the weight $w_\gamma = \lambda^{|\gamma|} f(\gamma)$, where $f:\mathcal S (G,r,q)\cup \{\emptyset\} \to [0,1]$. Note that $0 \leq w_\gamma \leq \lambda^{|\gamma|}$, which will be important for later~analysis.  

Define the probability distribution $\nu_{G,r,\lambda}$ on $\mathcal S (G,r,q) \cup \{\emptyset\}$ by setting 
\begin{equation}
	\nu_{G,r,\lambda}(\gamma) = \frac{w_\gamma}{Z(G,r,\lambda)}, 
\end{equation}
where $Z(G,r,\lambda) = \sum_{\gamma' \in \mathcal S(G,r,q) \cup \{\emptyset\}}w_{\gamma'}$. We assume that $G$, $f$, $q$ and $\lam$ are such that $Z(G,r,\lambda)$ is finite, so that this distribution is well defined. When $q=1$ and $f(\gamma) = \mathbbm{1}(\gamma \neq \emptyset)$, $\nu_{G,r,\lambda}$ corresponds exactly to the distribution defined in~\eqref{eq:rooted:dist} over the unlabeled graphlets of $G$ rooted at $r$.

We consider the problem of sampling from $\nu_{G,r,\lambda}$; this more general version of the sampling problem is later used as a subroutine for sampling polymer systems in Section~\ref{secPolymers}. 
Let
$$
\lambda_*(\Delta,q) := \frac{(\Delta-2)^{\Delta-2}}{q(\Delta-1)^{\Delta-1}};
$$
cf.,~\eqref{eqCrtiThresh}. Our main algorithmic result for sampling colored rooted graphlets is the following.

\begin{theorem}
	\label{thmSubgraphSample:lambda}
	Fix $\Delta \ge 3$, $q \ge 1$, and  suppose $0 < \lambda < \lambda_*(\Delta,q)$.  
	There is a randomized algorithm to exactly sample from $\nu_{G,r,\lam}$
	for graphs $G$ of maximum degree $\Delta$
	and functions $f:\mathcal S (G,r,q)\cup \{\emptyset\}  \to [0,1]$ where $f(\gamma)$ is computable in time polynomial in $|\gamma|$; this randomized algorithm has expected running time $O\big( Z_{G,r,\lam}^{-1}\big)$.
\end{theorem}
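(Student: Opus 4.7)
The plan is to adapt the percolation-with-rejection scheme from the introduction to accommodate the vertex labels and the auxiliary factor $f$. First, fix a percolation probability: noting that $\varphi(p) := p(1-p)^{\Delta-2}$ attains its maximum $\tfrac{(\Delta-2)^{\Delta-2}}{(\Delta-1)^{\Delta-1}} = q\,\lambda_*(\Delta,q)$ at $p = 1/(\Delta-1)$, and using $\lambda < \lambda_*(\Delta,q)$, pick $p \in (0, 1/(\Delta-1))$ with $\lambda q < \varphi(p)$ (via a binary search, as in Lemma~\ref{lemma:binary:search}). Each \emph{round} of the algorithm then: (i) with probability $1-p$ tentatively outputs $\gamma = \emptyset$; otherwise places $r$ in $S$ and runs a BFS in which every newly discovered vertex is independently active (added to $S$) with probability $p$, else inactive (added to $\partial S$), continuing to explore only through active vertices; (ii) if $S \neq \emptyset$, colors each $v \in S$ with an i.i.d.\ uniform sample from $\Sigma$, producing a labeled graphlet $\gamma$; and (iii) accepts $\gamma$ with probability
\[
a(\gamma) \;=\; \frac{\lambda^{|S|}q^{|S|}f(\gamma)}{M\, p^{|S|}(1-p)^{|\partial S|}} \text{ for } \gamma \neq \emptyset, \qquad a(\emptyset) \;=\; \frac{f(\emptyset)}{M(1-p)},
\]
for a constant $M$ (chosen below) ensuring $a(\cdot) \in [0,1]$; on rejection, the round restarts from scratch.

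For correctness, a short calculation shows that one round generates a specific non-empty labeled $\gamma$ with vertex support $S \ni r$ and vertex boundary $\partial S$ with probability exactly $p^{|S|}(1-p)^{|\partial S|}q^{-|S|}$ (one $p$ for the initial flip at $r$, one $p$ for each of the $|S|-1$ other active vertices, one $1-p$ for each boundary vertex, and $q^{-|S|}$ from the coloring), while $\emptyset$ is generated with probability $1-p$. Multiplying by $a(\gamma)$ makes both equal to $w_\gamma/M$, so conditioning on acceptance the output is distributed as $\nu_{G,r,\lambda}$. To choose $M$, use the standard worst-case bound $|\partial S| \le (\Delta-2)|S| + 2$, valid for every connected $S$ in a graph of maximum degree $\Delta$ (sum $\deg_G(v) \le \Delta$ over $v \in S$ and subtract twice the $\ge |S|-1$ internal edges of $G[S]$). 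Combined with $f \le 1$, this gives, for $\gamma \neq \emptyset$,
\[
\frac{\lambda^{|S|}q^{|S|}}{p^{|S|}(1-p)^{|\partial S|}} \;\le\; \frac{1}{(1-p)^2}\left(\frac{\lambda q}{p(1-p)^{\Delta-2}}\right)^{|S|},
\]
and since the base is strictly less than $1$ by the choice of $p$, the supremum over $|S| \ge 1$ is attained at $|S|=1$ with value $\tfrac{\lambda q}{p(1-p)^{\Delta}}$. Thus $M := \max\{f(\emptyset)/(1-p),\, \lambda q/[p(1-p)^{\Delta}]\}$ suffices, and $M$ depends only on $\Delta, q, \lambda$.

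For the expected running time, the condition $(\Delta-1)p < 1$ ensures the BFS is stochastically dominated, generation by generation, by a subcritical Galton--Watson branching process with mean offspring $(\Delta-1)p < 1$; this gives $\mathbb{E}[|S|+|\partial S|] = O(1)$ with exponential tails on $|S|$, so the expected work per round (including the $\mathrm{poly}(|\gamma|)$-time computation of $f$) is $O(1)$. Finally,
\[
\Pr[\text{accept in one round}] \;=\; \sum_{\gamma}\Pr[\text{generate }\gamma]\,a(\gamma) \;=\; \frac{1}{M}\sum_{\gamma} w_\gamma \;=\; \frac{Z_{G,r,\lambda}}{M},
\]
so the number of rounds until acceptance is geometric with mean $M/Z_{G,r,\lambda}$, yielding total expected running time $O(Z_{G,r,\lambda}^{-1})$. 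The main obstacle is achieving the sharp threshold $\lambda_*(\Delta,q)$: the worst-case boundary inequality $|\partial S| \le (\Delta-2)|S|+2$ and the maximizer of $p(1-p)^{\Delta-2}$ must align exactly, which happens only at $p = 1/(\Delta-1)$, and any slack in either would shrink the admissible range of $\lambda$. A secondary subtlety is that $M$ must be uniform across all maximum-degree-$\Delta$ graphs, which is precisely what the worst-case (tree-like) boundary bound provides.
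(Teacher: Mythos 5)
Your proposal is correct and takes essentially the same approach as the paper: BFS vertex percolation with the boundary bound $|\partial S|\le(\Delta-2)|S|+2$ used in the rejection filter, and a subcritical Galton--Watson dominance argument giving $O(1)$ expected work per round and a geometric number of rounds with mean $\Theta(1/Z_{G,r,\lambda})$. The only cosmetic difference is that you normalize the acceptance probability by a global constant $M$ (which, to be a fixed constant, should use $1$ rather than $f(\emptyset)$ in the $\max$ since $f\le 1$), whereas the paper chooses $\hat p$ with $g(\hat p)=\hat\lambda\ge\lambda$ and folds the slack directly into the rejection formula via a $(\lambda/\hat\lambda)^{|\gamma|}$ factor.
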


Theorem~\ref{thmSubgraphSimple} from the introduction corresponds to the special case when $q=1$ and $f(\gamma) = \mathbbm{1}(\gamma \neq \emptyset)$ (in this case $Z_{G,r,\lam} \ge \lambda$).
Other mild assumptions on the function $f$, e.g., $f(\emptyset)=1$ or $f(r) = 1$, ensure that $Z_{G,r,\lam}$ is bounded away from $0$ and, consequently, that the sampling algorithm in the theorem also has constant expected running time in those cases.

As a warm-up, let us consider first our algorithm for sampling labeled rooted graphlets on a finite graph $G=(V,E)$ with $f=1$, and purposely omit certain non-essential implementation details for clarity.
First, we find $p \in (0,1)$ such that $\frac{p}{q}(1-p)^{\Delta-2} = \lambda$; this choice of $p$ will be justified in what follows.
The algorithm then repeats the following process until a vertex-labeled graphlet is accepted:
\begin{enumerate}[topsep = 1pt]
	\setlength{\itemsep}{0pt}
	\item Each vertex of the graph is independently assigned with probability $p$ a uniform random color from $\{1,\dots,q\}$, or it is marked as ``not colored'' with the probability $1-p$.
	\item Let $\tilde \gamma$ be the vertex-labeled graphlet from $\mathcal S(G,r,q) \cup \{\emptyset\}$ corresponding the colored connected component of $r$; i.e., the set of vertices connected to $r$ by at least one path of colored vertices.
	\item Observe that the probability that $\tilde \gamma = \gamma$ is $({p}/{q})^{|\gamma|}(1-p)^{|\partial \gamma|}$, where 
	$\partial \gamma$ denotes to set of vertices in $G$ that are not in $\gamma$ but are adjacent to a vertex in $\gamma$
	(with a slight abuse of notation, we let $|\partial \emptyset|=1$).
	When $\tilde \gamma = \gamma$, our aim is to output $\gamma$ with probability $\propto \lambda^{|\gamma|}$ which has no dependence on $\partial \gamma$. Therefore, we use a ``rejection filter'' and only accept $\gamma$ with probability $(1- p)^{(\Delta-2)|\gamma| + 2 - |\partial \gamma|},$
	so that the probability that $\gamma$ is the output becomes:
	\begin{equation}
		\label{eq:rj:prob}
		\Big(\frac{p}{q}\Big)^{|\gamma|}(1-p)^{|\partial \gamma|}(1- p)^{(\Delta-2)|\gamma| + 2 - |\partial \gamma|} = (1-p)^2\Big(\frac{p}{q}(1- p)^{\Delta-2}\Big)^{|\gamma|} = (1-p)^2 \lambda^{|\gamma|}.
	\end{equation}
\end{enumerate}

From~\eqref{eq:rj:prob}, the choice of $p$ such that $\frac{p}{q}(1-p)^{\Delta-2} = \lambda$ is apparent. We will prove that only when $\lambda < \lambda_*(\Delta,q)$ there exists $p \in (0,1)$ such that $\frac{p}{q}(1-p)^{\Delta-2} = \lambda$.
In the actual implementation of the algorithm, it will in fact suffice to find an approximation for $p$.

We comment briefly on the intuition for the rejection filter. 
The acceptance probability must include a factor of $(1- p)^{- |\partial \gamma|}$, so that the final acceptance probability depends on $|\gamma|$ but not on $|\partial \gamma|$.
However, $(1- p)^{- |\partial \gamma|} > 1$ is not a valid probability,
so we use instead $(1- p)^{(\Delta-2)|\gamma| + 2 - |\partial \gamma|}$, which is at most $1$ since $(\Delta-2)|\gamma| + 2 \ge |\partial \gamma|$. 
This bound on $|\partial \gamma|$ is best possible since it is tight for the $\Delta$-regular tree. We note that using looser bounds for $|\partial \gamma|$ affects the range of the parameter $\lambda$ for which we can find $p \in (0,1)$ so that $\frac{p}{q}(1-p)^{\Delta-2} = \lambda$.

Finally, we mention that the algorithm as described requires $\Omega(|V|)$ time per iteration and cannot be extended to infinite graphs. This is easily corrected by assigning colors starting from $r$ and revealing only the colored component of $r$ in a breadth-first fashion. The threshold $\lambda_*(\Delta,q)$ is sharp in the sense that only when $\lambda < \lambda_*(\Delta,q)$ is the value of $p$ such that the revealing process is a sub-critical process that creates a small component with high probability.  This ensures the algorithm can be implemented efficiently. In particular, we stress that our algorithm avoids exhaustively enumerating labeled graphlets, as done in previous methods~\cite{chen2021fast}.


Before giving the implementation details of our algorithm and proving Theorem~\ref{thmSubgraphSample:lambda}, we consider the problem of finding $p \in (0,1)$ such that $\frac{p}{q}(1-p)^{\Delta-2} = \lambda$.
For $\Delta \ge 3$ and $q \ge 1$, consider the real function $g(x) = \frac{x}{q}(1-x)^{\Delta-2}$.
It can be readily checked that the function $g$ is continuous and differentiable in $[0,1]$, has a unique maximum at $x = \frac{1}{\Delta-1}$ with $g(\frac{1}{\Delta-1}) = \lambda_*(\Delta,q)$, is increasing in $[0,\frac{1}{\Delta-1}]$, and decreasing in $[\frac{1}{\Delta-1},1]$.
This implies that only when $\lambda < \lambda_*(\Delta,q)$,
there exists a value of $p \in [0,\frac{1}{\Delta-1})$ such that $g(p) = \lambda$.
In particular, when $\lambda > \lambda_*(\Delta,q)$, there is no value of $p$ for which $g(p) = \lambda$ and when $\lambda = \lambda_*(\Delta,q)$, the only possible value is $p = \frac{1}{\Delta-1}$. The latter case would result in a \emph{critical} percolation process, 
corresponding to the fact that the expected size of a graphlet from $\nu_{G,r,\lambda}$ has no uniform upper bound in the class of graphs of maximum degree $\Delta$; in fact, it is infinite on the $\Delta$-regular tree.
We can find a suitable approximation for $p$ when $\lambda < \lambda_*(\Delta,q)$ via a simple (binary search) procedure.

\begin{lemma}
	\label{lemma:binary:search}
	For any $\lambda \in [0,\lambda_*(\Delta,q))$ we can find rational numbers $\hat p \in [0,\frac{1}{\Delta-1})$ and $\hat\lambda \in [\lambda,\lambda_*(\Delta,q)]$ 
	such that $g(\hat p) = \hat \lambda$ in $O(|\log \frac{1}{\Delta q(\lambda_* - \lambda)}|)$ time. 
\end{lemma}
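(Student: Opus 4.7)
My plan is to implement binary search on $[0, \tfrac{1}{\Delta-1}]$, relying on the facts (already established in the discussion preceding the lemma) that $g$ is continuous and strictly increasing on this interval and maps it bijectively onto $[0, \lambda_*(\Delta,q)]$. Let $p^*$ be the unique solution of $g(p^*) = \lambda$ and write $\delta := \tfrac{1}{\Delta-1} - p^* > 0$. The algorithm maintains a rational interval $[a,b] \subseteq [0, \tfrac{1}{\Delta-1}]$ containing $p^*$: at each step the midpoint $m = (a+b)/2$ is a rational number whose image $g(m)$ is computed by rational polynomial arithmetic, we compare $g(m)$ with $\lambda$, and we replace $[a,b]$ by $[a,m]$ if $g(m) > \lambda$ and by $[m,b]$ otherwise. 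After $k$ iterations the interval has width $\tfrac{1}{(\Delta-1)2^k}$ and still contains $p^*$.

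The crux of the argument is a lower bound on $\delta$ in terms of $\lambda_*(\Delta,q)-\lambda$. Differentiating twice,
\begin{align*}
g'(x) &= \tfrac{1}{q}(1-x)^{\Delta-3}\bigl(1-(\Delta-1)x\bigr),\\
g''(x) &= -\tfrac{1}{q}(1-x)^{\Delta-4}\bigl[(\Delta-3)\bigl(1-(\Delta-1)x\bigr) + (\Delta-1)(1-x)\bigr],
\end{align*}
and the bracket is strictly positive on $[0, \tfrac{1}{\Delta-1}]$, so $g'' < 0$ there and $g'$ is decreasing on this interval. In particular $g'(\xi) \le g'(0) = 1/q$ for every $\xi \in [0, \tfrac{1}{\Delta-1}]$, and the mean value theorem gives $\lambda_*(\Delta,q) - \lambda = g(\tfrac{1}{\Delta-1}) - g(p^*) = g'(\xi)\,\delta \le \delta/q$ for some $\xi \in (p^*, \tfrac{1}{\Delta-1})$, yielding $\delta \ge q(\lambda_*(\Delta,q)-\lambda)$.

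With this estimate in hand, I pick $k = \lceil \log_2 \tfrac{2}{(\Delta-1) q(\lambda_*(\Delta,q) - \lambda)} \rceil = O\bigl( \log \tfrac{1}{\Delta q(\lambda_*(\Delta,q) - \lambda)}\bigr)$, which, since $\delta \ge q(\lambda_*(\Delta,q)-\lambda)$, forces the width of the final interval to be at most $\delta/2$. Setting $\hat p$ equal to the right endpoint $b$ of this interval gives $p^* \le \hat p \le p^* + \delta/2 < \tfrac{1}{\Delta-1}$, so $\hat p$ is rational and lies in $[0, \tfrac{1}{\Delta-1})$; then $\hat \lambda := g(\hat p)$ is a rational number and, by monotonicity of $g$ on $[0, \tfrac{1}{\Delta-1}]$, belongs to $[g(p^*), g(\tfrac{1}{\Delta-1})] = [\lambda, \lambda_*(\Delta,q)]$, as required. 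The only substantive step is the derivative bound $g' \le 1/q$, which is just the concavity of $g$ on $[0, \tfrac{1}{\Delta-1}]$; everything else is a standard binary-search analysis.
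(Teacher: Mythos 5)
Your proof is correct and follows essentially the same approach as the paper: binary search on $[0, \tfrac{1}{\Delta-1}]$, the mean value theorem combined with the derivative bound $g' \le 1/q$ to establish $\tfrac{1}{\Delta-1}-p^* \ge q(\lambda_* - \lambda)$, and the resulting $O(|\log \tfrac{1}{\Delta q(\lambda_*-\lambda)}|)$ iteration count. Your verification of $g' \le 1/q$ via concavity (computing $g''$) is more elaborate than necessary — for $x \in [0,\tfrac{1}{\Delta-1}]$ and $\Delta \ge 3$ one can just observe directly that $g'(x) = \tfrac{1}{q}(1-x)^{\Delta-3}(1-(\Delta-1)x)$ is a product of two factors each in $[0,1]$ — but the argument is sound, and you are somewhat more careful than the paper in explicitly placing $\hat p$ strictly below $\tfrac{1}{\Delta-1}$ and confirming rationality of $\hat p$ and $\hat\lambda$.
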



\noindent The proof of this lemma 
appears after the proof of Theorem~\ref{thmSubgraphSample:lambda}.
We now prove Theorem~\ref{thmSubgraphSample:lambda}, including giving a more detailed version of the algorithm outlined above that includes the previously omitted implementation details and allows for general functions $f:\mathcal S (G,r,q)\cup \{\emptyset\}  \to [0,1]$. 

\begin{proof}[Proof of~Theorem~\ref{thmSubgraphSample:lambda}]
	For ease of notation, let $\lambda_* = \lambda_*(\Delta,q)$.
	Our algorithm to sample from $\nu_{G,r,\lam}$ when $\lambda < \lambda_*$ explores from $r$ in a breadth-first manner and stops once it has revealed the colored connected component of $r$. It proceeds as follows:
	
	\begin{enumerate}[topsep = 0pt]
		\setlength{\itemsep}{0pt}
		\item 
		Find $\hat p \in [0,\frac{1}{\Delta-1})$ and $\hat \lambda \in [\lambda,\lambda_*)$ such that $g(\hat p) = \hat \lambda$.
		This can be done in $O(|\log \frac{1}{\Delta q(\lambda_* - \lambda)}|)$ time per Lemma~\ref{lemma:binary:search}.
		
		
		\item Let $Q$ be a queue. With probability $1-\hat p$ do not add $r$ to $Q$; otherwise, assign $r$ a color uniformly at random from $\{1,\dots,q\}$ and add $r$ to $Q$. Mark $r$ as explored. 		
		
		\item While $Q \neq \emptyset$, repeat the following:
		
		\begin{enumerate}[topsep = 0pt]
			\setlength{\itemsep}{0pt}
			\item[3.1)] Pop a vertex $v$ from $Q$.
			\item[3.2)] For each unexplored neighbor $w$ of $v$, with probability $1-\hat p$ do not add $w$ to $Q$; otherwise, assign $w$ a color uniformly at random from $\{1,\dots,q\}$ and add $w$ to $Q$. Mark $w$ as explored (regardless of whether it was added to $Q$ or not).
		\end{enumerate}
		
		\item Let $\gamma$ be the vertex-labeled graphlet from $\mathcal S(G,r,q) \cup \{\emptyset\}$ corresponding the colored connected component of $r$. Accept $\gamma$ with probability:
		\[
		f(\gamma) \cdot (1-\hat p)^{(\Delta-2)|\gamma| + 2 - |\partial \gamma|} \Big(\frac{\lambda}{\hat\lambda}\Big)^{|\gamma|}.
		\]	
		\item If $\gamma$ is rejected, go to Step 2 and repeat.
	\end{enumerate}
	
	\noindent
	The probability of obtaining $\gamma \in \mathcal S(G,r,q) \cup \{\emptyset\}$ in an iteration of the algorithm is:
	$$
	\Big(\frac{{\hat p}}{q}\Big)^{|\gamma|} (1-\hat p)^{|\partial \gamma|}\cdot f(\gamma) (1-\hat p)^{(\Delta-2)|\gamma| + 2 - |\partial \gamma|} \Big(\frac{\lambda}{\hat\lambda}\Big)^{|\gamma|}= (1-\hat p)^2 f(\gamma) \,\lambda^{|\gamma|} = (1-\hat p)^2 w_\gamma,
	$$	
	and thus the overall acceptance probability in an iteration is:
	$$
	\rho := (1-\hat p)^2 \sum_{\gamma \in \mathcal S(G,r,q)\cup\{\emptyset\} } w_\gamma = (1-\hat p)^2 Z_{G,r,\lambda} .
	$$
	Then,
	$$
	\Pr[\gamma \in \mathcal S(G,r,q)\cup \{\emptyset\}~\text{is the output}] = \sum\nolimits_{t \ge 1} (1-\hat p)^2 w_\gamma (1-\rho)^{t-1} = \frac{(1-\hat p)^2 w_\gamma}{\rho} = \nu_{G,r,\lam}(\gamma).
	$$
	
	We next bound the expected running time of the algorithm.
	We claim first that expected running per iteration is at most a constant that depends only on $a$, $\Delta$ and $q$. 
	If $\gamma$ is the configuration generated in an iteration, 
	it is discovered in $O(|\gamma| + |\partial\gamma|) = O(|\gamma|)$ time and,
	by assumption, $f(\gamma)$ can be computed in at most $O(|\gamma|^a)$ time, for suitable a constant $a > 0$.
	Let $\hat\mu$ the output distribution of Step 3 of the algorithm.
	Then, there exists a constant $C = C(q,\Delta) > 0$ such that the expected running time of each iteration is at most:
	\begin{align}
		\label{eq:alg:1}
		C \sum_{\gamma \in \mathcal S(G,r,q) \cup\emptyset} |\gamma|^{\max\{a,1\}} \, {\Pr}_{\hat\mu}[\gamma] = C\cdot \E_{\hat\mu}[|\gamma|^{\max\{1,a\}}].
	\end{align}

	We show next that $|\gamma|$ (under $\hat\mu$) is stochastically dominated by 
	a random variable $W = X+Y$ (i.e., $|\gamma| \prec W$), 
	where $X$ and $Y$
	are i.i.d.\ random variables corresponding to the cluster size of a homogeneous Galton-Watson process with offspring distribution $Bin(\Delta-1,\hat p)$.
	We recall that this is the branching process that starting from a single vertex (or individual) $N_0$, adds $Z_1 \sim Bin(\Delta-1,\hat p)$ descendants to $N_0$. The process is then  repeated for each new descendant. The process can either die out or go on forever; the cluster size is the number of descendants of $N_0$.
	When different offspring of $N_0$ use different distributions to generate its descendants, the process is called heterogeneous (see, e.g., \cite{RandomGraphBook} for additional background). 
	
	To see that $|\gamma| \prec W = X+Y$, first note that $|\gamma| \prec L$, where 
	$L$ is the cluster size of a heterogeneous Galton-Watson process, in which the root vertex has offspring distribution $Bin(\Delta,\hat p)$ and
	every other vertex has offspring distribution $Bin(\Delta-1,\hat p)$.
	This is because the branching process generating $\gamma$ includes the root only with probability $\hat p$ (the root is always present in the Galton-Watson process), and, in addition, it considers at most $\Delta$ (from the root) or $\Delta-1$ (from any other vertex) potential branches (or descendants).	
	In turn, we can bound the cluster size $L$ by $L \prec X+Y$, since,
	in the branching process corresponding to $L$,
	we can couple the first $\Delta-1$ branches of the root $N_0$ with $X$ (starting at the root) and the remaining branch with $Y$ (starting at the child of the root not coupled with $X$).
	
	It is well-known that $X$ and $Y$ have finite moments
	when $(\Delta-1)\hat p < 1$ (see, e.g.,~\cite{RandomGraphBook}). In particular, there exists a constant $A = A(a,\Delta,\hat p) > 0$ such that
	\begin{equation}
		\label{eq:alg:2}
		\E_{\hat\mu}[|\gamma|^a] \le \E[L^a] \le \E[(X+Y)^a] \le 2^{a}(\E[X^a]+\E[Y^a]) \le A.
	\end{equation}
	This together with~\eqref{eq:alg:1} shows that the expected running time in each iteration of the algorithm is bounded by $C \cdot A$.
	
	Now, let 
	$R$ be the number of times Steps 2--5 are repeated, let
	$T$ be the overall running time of the algorithm. Then:
	\begin{align}
		\label{eq:root:exp-rt}
		\E[T] &= \sum_{t\ge 1} \E[T \mid R=t] \Pr[R = t] \le C \cdot A \cdot \sum_{t\ge 1} t (1-\rho)^{t-1} \rho \le \frac{CA}{\rho}, 
	\end{align}
	and the result follows.	
\end{proof}

We conclude this section with the proof of Lemma~\ref{lemma:binary:search}.

\begin{proof}[Proof of Lemma~\ref{lemma:binary:search}]
	It suffices to find $\hat p \in [p,\frac{1}{\Delta-1}]$. This can be done via binary search in $t$ steps, provided $t \ge 0$ is such that
	$
	\frac{1}{\Delta-1} \cdot \frac{1}{2^t} \le \frac{1}{\Delta-1}-p.
	$
	Since $g' \le \frac{1}{q}$, it follows from the mean value theorem that
	$
	{q (\lambda_* - \lambda)} \leq \frac{1}{\Delta-1} - p .
	$
	Thus for the binary search to require at most $t$ steps it is sufficient to pick $t$ so that
	$
	\frac{1}{\Delta-1} \cdot \frac{1}{2^t} \le {q(\lambda_*-\lambda)},
	$
	and the result follows.	
\end{proof}

\subsection{Sampling unrooted graphlets}
\label{subsec:unrooted}

We consider next the problem of sampling weighted graphlets from a {finite} graph $G = (V,E)$ of maximum degree $\Delta$; specifically, in this variant of the sampling problem we consider unrooted, unlabeled, weighted graphlets of $G$.
Let $\mathcal S(G)$ be the set of all graphlets of $G$.
We define the probability distribution $\nu_{G,\lambda}$ on $\mathcal S(G)$ by setting 
\begin{equation*}
	\nu_{G,\lambda}(S) = \frac{\lambda^{|S|}}{Z_{G,\lambda}}, 
\end{equation*}
where $Z_{G,\lambda} = \sum_{S' \in \mathcal S(G) \cup \{\emptyset\}}\lambda^{|S'|}$.
The problem of (approximately) sampling from $\nu_{G,\lambda}$ 
is quite natural. 
In~\cite{read2021hardness}, it was established that this problem is computationally hard when $\lambda > \lambda_*(\Delta) = \frac{(\Delta-2)^{\Delta-2}}{(\Delta-1)^{\Delta-1}}$;
an $\varepsilon$-approximate sampling algorithm was also given in~\cite{read2021hardness} for the case when $\lambda < \lambda_*(\Delta)$
with running time $n \cdot (1/\varepsilon)^{O(\log \Delta)}$. 
We establish the following:
\begin{theorem}
	\label{thmSubgraphSample:lambda:unrooted}
	Suppose $\Delta \ge 3$ and $\lam > 0$ are such that $\lambda < \lambda_*(\Delta)$. 
	There is a randomized algorithm to exactly sample from $\nu_{G,\lam}$ with $O(1)$ expected running time for finite graphs $G$ of maximum degree~$\Delta$.
\end{theorem}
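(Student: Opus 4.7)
My plan is to adapt the rooted sampler of Theorem~\ref{thmSubgraphSample:lambda} (with $q=1$ and $f\equiv 1$) to the unrooted setting by picking the starting vertex uniformly at random and correcting the acceptance probability to undo the induced bias. Using Lemma~\ref{lemma:binary:search}, first compute $\hat p \in [0,\tfrac{1}{\Delta-1})$ and $\hat\lam \in [\lam,\lam_*(\Delta))$ with $\hat p(1-\hat p)^{\Delta-2} = \hat\lam$. Each iteration then proceeds as follows: choose $v \in V$ uniformly at random in constant time (per our model of computation); perform the BFS vertex-percolation from $v$ with activation probability $\hat p$ exactly as in Step~3 of the algorithm in the proof of Theorem~\ref{thmSubgraphSample:lambda}; and let $\gamma$ be the activated connected component of $v$, with $\gamma = \emptyset$ when $v$ is not activated. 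If $\gamma \ne \emptyset$, accept $\gamma$ with probability
\[
\frac{1}{|\gamma|}\,(1-\hat p)^{(\Delta-2)|\gamma|+2-|\partial\gamma|}\,\Bigl(\frac{\lam}{\hat\lam}\Bigr)^{|\gamma|};
\]
if $\gamma = \emptyset$, accept with probability $(1-\hat p)/n$. If rejected, restart with a fresh uniform $v$.

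For correctness, a fixed non-empty $\gamma \in \cS(G)$ can arise as the cluster of $v$ for exactly $|\gamma|$ choices of $v$, each contributing $\frac{1}{n}\,\hat p^{|\gamma|}(1-\hat p)^{|\partial\gamma|}$ to the pre-acceptance probability. Multiplying by the displayed acceptance probability collapses this sum to $\frac{(1-\hat p)^2}{n}\,\lam^{|\gamma|}$; the factor $1/|\gamma|$ is precisely what cancels the overcounting induced by uniform root selection, and the remaining algebra is identical to the rooted calculation. A direct check gives $\frac{(1-\hat p)^2}{n}\,\lam^{|\emptyset|}$ for $\gamma = \emptyset$. Summing over $\cS(G) \cup \{\emptyset\}$ shows that the per-iteration acceptance probability is $\rho = \frac{(1-\hat p)^2}{n}\,Z_{G,\lam}$, and the standard geometric-series argument then yields that, conditioned on acceptance, the output has distribution $\nu_{G,\lam}$.

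For the running time, the number of iterations is geometric with mean $1/\rho$. Since $Z_{G,\lam} \ge n\lam$ from the $n$ singleton graphlets, $1/\rho \le \bigl((1-\hat p)^2\lam\bigr)^{-1}$, which is bounded by a constant depending only on $\Delta$ and $\lam$. The work per iteration is dominated by the BFS exploration, which takes $O(|\gamma| + |\partial\gamma|)$ time; this has constant expectation via the same stochastic domination by a subcritical $\mathrm{Bin}(\Delta-1,\hat p)$ Galton--Watson process used in the proof of Theorem~\ref{thmSubgraphSample:lambda} (subcriticality is ensured by $\hat p < 1/(\Delta-1)$). Multiplying gives constant expected total running time.

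The only genuine new ingredient beyond the rooted case is the $1/|\gamma|$ correction, which is forced by the uniform choice of root; this is also the step I expect to be the main conceptual point to get right. The empty-graphlet case is a minor book-keeping addition requiring the one-off factor $(1-\hat p)/n$, and the amortization of the uniform vertex selection over $n$ is absorbed cleanly by the singleton lower bound $Z_{G,\lam} \ge n\lam$.
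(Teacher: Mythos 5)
Your proposal matches the paper's proof essentially step for step: uniform random root, BFS vertex-percolation with parameter $\hat p$ computed via Lemma~\ref{lemma:binary:search}, acceptance probability $\frac{1}{|S|}(1-\hat p)^{(\Delta-2)|S|+2-|\partial S|}(\lambda/\hat\lambda)^{|S|}$, and the observations that $\rho=\frac{(1-\hat p)^2 Z_{G,\lambda}}{n}$, that $Z_{G,\lambda}\ge n\lambda$ from the singleton graphlets, and that per-iteration work is constant in expectation by the same subcritical Galton--Watson domination used in the rooted case. The only difference is cosmetic: you spell out an acceptance rule for the empty cluster (probability $(1-\hat p)/n$), whereas the paper implicitly rejects it (the $1/|S|$ factor is undefined for $S=\emptyset$); both are consistent with one of the two slightly inconsistent definitions of $Z_{G,\lambda}$ appearing in the paper, and the argument is unaffected either way.
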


\begin{proof}
	For ease of notation, we set $\lambda_* = \lambda_*(\Delta)$ throughout this proof.
	Our algorithm to sample from $\nu_{G,\lambda}$ is 
	based on the algorithm to sample from $\nu_{G,r,\lambda}$ (the rooted, vertex-labeled, weighted case).
	The idea is to pick a root uniformly at random and run the algorithm 
	for the rooted case from this random vertex with the rejection filter adjusted to account for the fact that a graphlet can be generated from any of its vertices.
	It proceeds as follows:
	\begin{enumerate}
		\item 
		Find $\hat p \in [0,\frac{1}{\Delta-1})$ and $\hat \lambda \in [\lambda,\lambda_*)$ such that $g(\hat p) = \hat \lambda$ using the method from Lemma~\ref{lemma:binary:search}.
		
		\item Pick a vertex $r \in V$ uniformly at random. 
		
		
		\item Let $Q$ be a queue. With probability $\hat p$ add $r$ to $Q$ and mark it as colored. Mark $r$ as explored. 		
		
		\item While $Q \neq \emptyset$, repeat the following:
		
		\begin{enumerate}
			\item[3.1)] Pop a vertex $v$ from $Q$.
			\item[3.2)] For each unexplored neighbor $w$ of $v$, with probability $\hat p$ add $w$ to $Q$ and mark $w$ as colored. Mark $w$ as explored.
		\end{enumerate}
		
		\item Let $S\in\mathcal S(G)$ be the graphlet corresponding to the colored connected component of $v$. Accept $S$ with probability:
		$$
		\frac{1}{|S|} \cdot (1-\hat p)^{(\Delta-2)|S| + 2 - |\partial S|} \Big(\frac{\lambda}{\hat\lambda}\Big)^{|S|}.
		$$	
		\item If $S$ is rejected, go back to Step 2 and repeat.
	\end{enumerate}
	
	The analysis of this algorithm is similar to that in the proof of Theorem~\ref{thmSubgraphSample:lambda}.
	Let $n = |V|$. The probability that the algorithm outputs $S$
	in an iteration is:
	\begin{equation}
		\label{eq:unr:bound}
		\sum_{v \in S} \frac{1}{n} \cdot
		{\hat p}^{|S|} (1-\hat p)^{|\partial S|}\cdot \frac{1}{|S|} \cdot (1-\hat p)^{(\Delta-2)|S| + 2 - |\partial S|} \Big(\frac{\lambda}{\hat\lambda}\Big)^{|S|} = \frac{(1-\hat p)^2 \lambda^{|S|}}{n}.
	\end{equation}	
	Hence, conditioned on acceptance, the probability of obtaining $S \in \mathcal S(G)$ is thus $\nu_{G,\lam}(S)$, and so the output distribution of the algorithm is $\nu_{G,\lam}$.
	
	For the running time of the algorithm, we note that Step 4 of the algorithm is analogous to Step 3 of the algorithm in the proof of Theorem~\ref{thmSubgraphSample:lambda}, and so the expected running time of each round is
	also bounded by a constant $C = C(\Delta,\hat p) > 0$.	
	Let $T$ be the overall the running time of the algorithm.
	From~\eqref{eq:unr:bound}, we have that the overall acceptance probability in a round is $\rho = \frac{(1-\hat p)^2 Z(G,\lambda)}{n}$. Then,
	as in~\eqref{eq:root:exp-rt}, we deduce that $\E[T] = O(n Z(G,\lambda)^{-1})$.
	Since $Z(G,\lambda) \ge n \lambda$, we have $\E[T] = O(1)$.	 
\end{proof}

\section{Applications to Polymer Models}
\label{secPolymers}

In this section, we show how to use our algorithm for sampling rooted vertex-labeled graphlets from Section~\ref{secSubgraphs}  to sample from subset polymer models and prove Theorem~\ref{thmPolymerSample1}. 

Consider a subset polymer model on an $n$-vertex graph $G=(V,E)$; see Section~\ref{secPolymerIntro} for the definition. Recall that we use $\cC_v$ for the set of all polymers containing vertex $v \in V$, and
let $\gamma_\emptyset$ denote the empty polymer. Define the distribution $\nu_v$ on $\cC_v \cup \{\gamma_\emptyset\}$ where 
$
\nu_v ( \gamma) = \frac{w_{\gamma}}{40}
$
and 
$$
\nu_v ( \gamma_\emptyset) = 1 - \frac{1}{40 } \sum_{\hat\gamma \in \cC_v } w_{\hat\gamma}.
$$
\noindent
We note that $\nu_v$ is well-defined when condition \eqref{eqPolyLamCondition} holds, since under this condition we have
$\sum_{\hat\gamma \in \cC_v } w_{\hat\gamma} \le 40$; this is proved later in Lemma~\ref{lemma:part:finite-general}.
The following Markov chain on $\Omega(\cC)$ is similar to the one introduced in~\cite{chen2021fast}.

\bigskip\noindent
\textbf{Polymer dynamics.}
Given a configuration $X_t \in \Omega(\cC)$, form $X_{t+1}$ as follows:
\begin{enumerate}
	\item Pick $v \in V$ uniformly at random and let $S_v = \{ \gamma \in X_t: v \in \gamma\}$ (note that $S_v$ is either empty or contains $1$ polymer).
	\item With probability $1/41$,   let $X_{t+1} = X_t \setminus S_v$. 
	\item Otherwise, with the remaining probability $40/41$, sample $\gamma$ from $\nu_v$.  Let $X_{t+1} = X_t \cup \{\gamma\}$ if $X_t \cup \{\gamma\} \in \Omega(\cC)$ and let $X_{t+1} = X_t$ otherwise. 
\end{enumerate}

We first note that this Markov chain is irreducible (every configuration can reach and can be reached from the empty set of polymers with positive probability), and
aperiodic (there is positive probability of remaining on the same state). It is also reversible with respect to $\mu$: letting $P$ be the transition matrix for this chain, $\Gamma \in \Omega(\cC)$, and a polymer $\gamma \not\in \Gamma$ be such that $\Gamma \cup \{\gamma\} \in \Omega(\cC)$, we have
$
{\mu(\Gamma \cup \{\gamma\})}/{\mu(\Gamma)} = w_{\gamma} 
$
and for $\gamma \neq \gamma_\emptyset$:
$$
\frac{P(\Gamma,\Gamma \cup \{\gamma\})}{P(\Gamma \cup \{\gamma\},\Gamma)} = \frac{\sum_{v \in \gamma} \frac{40}{41n} \nu_v(\gamma)}{\frac{|\gamma|}{41n}} = w_{\gamma}.
$$

Now, to implement a single update step of the polymer dynamics, one must sample from $\nu_v$ in Step 3. 
We give a fast perfect sampler for $\nu_v$.

\begin{theorem}
	\label{thmPolymerSample}
	Fix $\Delta \ge 3, q\ge 1$, $\delta > 0$ and let $\lambda < \lambda_*(\Delta,q)$.
	Consider a family of $\delta$-covering subset polymer models defined on graphs of maximum degree $\Delta$ with computationally feasible weights that satisfy $w_\gamma \leq \lambda^{|\gamma|}$.
	There is a randomized  algorithm to sample perfectly from $\nu_v$ for any $v \in V(G)$ with a constant expected running time.
\end{theorem}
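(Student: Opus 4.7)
The plan is to invoke the rooted graphlet sampler from Theorem~\ref{thmSubgraphSample:lambda} with root $r = v$ and color set of size $q$, but to run its inner loop \emph{exactly once}, reinterpreting the ``rejection'' outcome as the empty polymer $\gamma_\emptyset$. The entire construction hinges on choosing the rescaling function $f$ so that this one-shot acceptance probability matches $\nu_v$ on the nose.

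First, use Lemma~\ref{lemma:binary:search} to compute $\hat p \in [0, 1/(\Delta-1))$ and $\hat\lambda \in [\lambda, \lambda_*(\Delta,q))$ with $\hat\lambda = (\hat p/q)(1-\hat p)^{\Delta-2}$. Define $f: \cS(G,v,q) \cup \{\emptyset\} \to [0,1]$ by $f(\emptyset) = 0$ and
\[
f(\gamma) \;=\; \frac{w_\gamma}{40\,(1-\hat p)^2\,\lambda^{|\gamma|}} \qquad \text{for } \gamma \neq \emptyset.
\]
Since $\Delta \ge 3$ forces $\hat p < 1/(\Delta-1) \le 1/2$, we have $40(1-\hat p)^2 > 10$, so together with the hypothesis $w_\gamma \le \lambda^{|\gamma|}$ this yields $f \le 1/10 < 1$; the computational feasibility of the weights makes $f$ polynomial-time computable in $|\gamma|$. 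With these parameters, execute Steps~2--4 of the algorithm from the proof of Theorem~\ref{thmSubgraphSample:lambda} a single time: if Step~4 accepts a non-empty colored component $\gamma$, output $\gamma$; otherwise (Step~4 rejects, or the colored component is empty, in which case rejection is automatic since $f(\emptyset) = 0$) output $\gamma_\emptyset$.

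Correctness follows from the per-iteration identity in the proof of Theorem~\ref{thmSubgraphSample:lambda}: the probability that a single iteration outputs a specific non-empty $\gamma$ is $(1-\hat p)^2 \lambda^{|\gamma|} f(\gamma) = w_\gamma/40 = \nu_v(\gamma)$. Summing over $\gamma \in \cC_v$, the total probability of outputting some non-empty polymer is $W/40$ where $W = \sum_{\gamma \in \cC_v} w_\gamma$, so the residual mass $1 - W/40 = \nu_v(\gamma_\emptyset)$ automatically assigns the correct probability to $\gamma_\emptyset$. The reason this avoids needing to know $W$ (which we cannot compute cheaply) is precisely that we do not loop until acceptance; the residual probability already carries the right weight for $\gamma_\emptyset$.

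For the running time, each execution consists of one BFS exploration of the colored component plus one evaluation of $f$, costing $O(|\gamma|^{\max(1,a)})$ where $a$ is the polynomial-time exponent for computing $w_\gamma$. As in the proof of Theorem~\ref{thmSubgraphSample:lambda}, $|\gamma|$ is stochastically dominated by the total progeny of a $\mathrm{Bin}(\Delta-1,\hat p)$ Galton--Watson branching process, which is subcritical because $\hat p(\Delta-1) < 1$, so every moment of $|\gamma|$ is bounded by a constant depending only on $\Delta$, $q$, and $\lambda$; this gives the claimed constant expected running time. The main (though mild) subtlety is the numerical balance in choosing $f$: it must be small enough to keep the pointwise acceptance probability in $[0,1]$ and large enough that the per-iteration non-empty output mass sums to exactly $W/40$, and the inequality $40(1-\hat p)^2 \ge 1$ --- which holds comfortably under $\Delta \ge 3$ --- is precisely what permits both at once.
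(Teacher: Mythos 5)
Your proposal is correct and takes a genuinely different route from the paper's. The paper introduces the normalized distribution $\hat\nu_v(\gamma) = w_\gamma/Z_v$ with $w_{\gamma_\emptyset}=1$, invokes Theorem~\ref{thmSubgraphSample:lambda} as a black box to sample exactly from $\hat\nu_v$ (running the rejection loop to completion), and then accepts the resulting draw with probability $Z_v/40$, outputting $\gamma_\emptyset$ otherwise. Since $Z_v$ is not cheaply computable, that acceptance step is implemented with a Bernoulli factory: simulate $\mathrm{Ber}(1/Z_v)$ by checking whether a fresh draw from $\hat\nu_v$ equals $\gamma_\emptyset$, then convert to $\mathrm{Ber}(Z_v/40)$ via Theorem~2 of the Nacu--Peres machinery the paper cites. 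Your construction eliminates the Bernoulli factory entirely: by folding the constant $\tfrac{1}{40(1-\hat p)^2}$ into $f$, setting $f(\emptyset)=0$, and executing Steps~2--4 of the rooted-graphlet sampler exactly once, a single iteration outputs each $\gamma \in \cC_v$ with probability $(1-\hat p)^2 \lambda^{|\gamma|} f(\gamma) = w_\gamma/40 = \nu_v(\gamma)$, and rejection automatically absorbs the residual mass $1 - W/40 = \nu_v(\gamma_\emptyset)$ with $W = \sum_{\gamma\in\cC_v}w_\gamma$. Your bound $f \le \tfrac{1}{40(1-\hat p)^2} < \tfrac{1}{10} < 1$ (from $\hat p < \tfrac{1}{\Delta-1}\le\tfrac12$) is exactly the slack needed for $f$ to be a valid rescaling, and the running-time argument (stochastic domination by a subcritical $\mathrm{Bin}(\Delta-1,\hat p)$ Galton--Watson progeny) is shared with the paper. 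What you gain is a self-contained proof with no external Bernoulli-factory citation; what it costs is that you peel open the internals of Theorem~\ref{thmSubgraphSample:lambda} rather than treating it as a modular sampler, which slightly tightens the coupling between the two results.
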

\begin{proof}
	For each vertex $v$,
	let $\hat\nu_v$ be the distribution over $\cC_v \cup \{\gamma_\emptyset\}$ where 
	$\hat \nu_v ( \gamma) = {w_{\gamma}}/{Z_v}$
	with $Z_v = \sum _{\hat\gamma \in \cC_v \cup \{\gamma_\emptyset\} } w_{\hat\gamma}$
	and $w_{\gamma_\emptyset} = 1$. By Theorem~\ref{thmSubgraphSample:lambda}, we can perfectly from $\hat \nu_v$ in constant expected time, with the constant depending $\lam, \Delta$, and $q$.
	Our algorithm to sample from $\nu_v$ draws a perfect sample from $\hat \nu_v$ first and then outputs the sample with probability $Z_v/40$; otherwise it outputs the empty polymer $\gamma_\emptyset$.
	The output distribution of this algorithm is exactly $\nu_v$, and so all that remains is for us to show how to sample exactly from a Bernoulli distribution with parameter $Z_v/40$, denoted $\mathrm{Ber}(Z_v/40)$.
	
	We can sample exactly from $\mathrm{Ber}(1/Z_v)$ in constant expected time by drawing an exact sample from $\hat \nu_v$ (using the algorithm from Theorem~\ref{thmSubgraphSample:lambda}) and outputting $1$ if the sample is $\gamma_\emptyset$ and $0$ otherwise.
	With this, we can then use a ``Bernoulli factory'' to obtain a perfect sample from
	$\mathrm{Ber}(Z_v/40)$. Specifically,
	we know how to simulate a Bernoulli coin with parameter $p=1/Z_v$ and require a Bernoulli coin with parameter $f(p)=1/(40p) =  Z_v/40$.
	This is possible in constant expected time 
	since $1+\delta \le Z_v \le 39$;
    the lower bound follows directly from the $\delta$-covering assumption and the upper is implied by~\eqref{eqPolyLamCondition} (as mentioned, this is proved in Lemma~\ref{lemma:part:finite-general}). 
    Thus, $p \in [\frac{1}{39},\frac{1}{1+\delta}] \subset (0,1)$, $f(p) \in [\frac{1+\delta}{40},\frac{39}{40}] \subset (0,1)$,
    and the function $f(p) = 1/(40p)$ is real analytic in this interval. Theorem 2 from~\cite{NP} then yields the desired Bernoulli sample in constant expected time.  
\end{proof}

Note that Bernoulli factories have been used in a similar fashion to design perfect sampling algorithms for CSP solutions and spin models in~\cite{he2022sampling,he2023improved,anand2023perfect}.

Using this theorem,
we give next a perfect sampling algorithm that works whenever a new condition~\eqref{eqPerfectSamplingCondition} is satisfied (our algorithm also requires the assumptions in Theorem~\ref{thmPolymerSample}). 

\subsection{Perfect Sampling for polymer systems: Proof of Theorem~\ref{thmPolymerSample1}} 

We propose here an algorithm to output a perfect sample from $\mu$. Our algorithm 
is based on the polymer dynamics and
the coupling from the past method~\cite{ProppWilson}, using the notion of bounding Markov chains \cite{huber2004bounding,HN} to efficiently implement it. 

We proceed with the proof of Theorem \ref{thmPolymerSample1}. We start with the description of a grand coupling for the polymer dynamics, 
which is then used to implement a coupling from the past algorithm.
For an $n$-vertex graph $G = (V,E)$, let $\{X_t^\Gamma\}$ denote an instance of the polymer dynamics started from 
the polymer configuration $\Gamma \in \Omega(\mathcal C)$.
For all $\Gamma \in \Omega(\mathcal C)$, the chains $\{X_t^\Gamma\}$ are
coupled by choosing the same uniform random random vertex $v \in V$, the same polymer $\gamma$ sampled from $\nu_v$, and the same uniform random number in $[0,1]$ to decide whether to remove $S_v$ (Step 2) or to add $\gamma$ (Step 3).
A coupling from the past algorithm will find a time $-T$ such the grand coupling 
started from all possible states at time $-T$ 
coalesces to a single state by time $0$.
This guarantees that the output of the algorithm, that is the state at time $0$,
has distribution $\mu$ (see~Theorem 1 from~\cite{ProppWilson}).
Such a $T$ can be found with a binary search procedure.  Unfortunately,
implementing the coupling from the past algorithm 
in this manner for the polymer dynamics Markov chain is infeasible, since it requires simulating an exponential number of copies of the polymer dynamics, one from each $\Gamma \in \Omega(\mathcal C)$.

To work around this, we consider a bounding Markov chain for the polymer dynamics rather than the polymer dynamics chain itself. Bounding Markov chains were pioneered in~\cite{huber2004bounding,HN}
as a method for efficiently implementing coupling from the past.
The bounding chain for the polymer dynamics has state space $\Omega(\mathcal C) \times 2^\mathcal C$ and will be denoted by $\{B_t,D_t\}$, where $B_t \in \Omega(\mathcal C)$ and $D_t \subseteq \mathcal C$ are sets of polymers. The chain will maintain throughout that all polymers in $B_t$ are compatible and that every polymer in $B_t$ is compatible with every polymer in $D_t$.  The polymers in $D_t$ do {\it not} need to be compatible with each other. A step of the bounding Markov chain is defined next.

\bigskip\noindent
{\bf Polymer Dynamics Bounding Chain.} Given $\{B_t,D_t\}$, the chain generates $\{B_{t+1},D_{t+1}\}$ by:
\begin{enumerate}
	\item Uniformly at random, select $v \in V$. \label{item:pickv}
	\item With probability $1/41$, remove all polymers containing $v$ by setting $B_{t+1} = B_t \setminus \mathcal{C}_v$ and $D_{t+1} = D_t \setminus \mathcal{C}_v$.  \label{item:delete} 
	\item With the remaining probability $40/41$, draw a sample $\gamma$ according to $\nu_v$ and: \label{item:comparetoB}
	\begin{enumerate}
		\item If $\gamma$ is compatible with $B_t$ and $\gamma$ is compatible with $D_t \setminus \{\gamma\}$, let $B_{t+1} = B_t \cup \{\gamma\}$ and let $D_{t+1} = D_t \setminus \{\gamma\}$. \label{item:addtoB}
		
		\item Else if $\gamma$ is compatible with $B_t$ but $\gamma$ is not compatible with $D_t$, let $B_{t+1} = B_t$ 
		and let $D_{t+1} = D_{t} \cup \{\gamma\}$.
		\label{item:addtoD}
		
		\item  If $\gamma$ is incompatible with $B_t$, do nothing: $B_{t+1} = B_t$ and $D_{t+1} = D_t$. 
		\label{item:incompatwB}
	\end{enumerate}
\end{enumerate}
\noindent
Observe that polymers are only added to $B_t$ if they are compatible with all other polymers in $B_t$; hence, if $B_0$ is a valid polymer configuration, so is $B_t$   for all $t \ge 0$.

To implement a step of the polymer dynamics bounding chain it suffices to
pick a vertex $v \in V$ uniformly at random,
a uniform random number in $[0,1]$,
and a polymer $\gamma$ from $\nu_v$, just like for the polymer dynamics. 
Hence, we can couple the evolution of $\{B_t,D_t\}$ with the grand coupling of the polymer dynamics described earlier. 
If we set $B_0 = \emptyset$ and $D_0 = \mathcal C$, it can be checked that for all $\Gamma \in \Omega(\mathcal C)$ and all $t \ge 0:$
$$B_t \subseteq X_t^\Gamma \subseteq B_t \cup D_t.$$
Indeed, this holds initially for $t = 0$, and the grand coupling ensures that whenever a polymer is removed from $X_t^\Gamma$ it is also removed from $B_t$, and whenever a polymer is added to $X_t^\Gamma$ it is also added to $B_t$ or $D_t$.
Consequently, $\{B_t,D_t\}$ is a bounding chain for the polymer dynamics.
In particular, the first time $B_t =  B_t \cup D_t$, all instances $X_t^\Gamma$ have necessarily coalesced to the same configuration.
This bounding chain allows us to implement the coupling from the past algorithm efficiently, as follows. 


\bigskip\noindent
{\bf Coupling from the Past.} Set $k=1$.
\begin{enumerate}[label =(\Alph*)]
	\item For $t=-2^{k},-2^{k}+1,\dots,-2^{k-1}$ generate $\rho_t = (v_t,\gamma_t,r_t)$ by choosing $v_t \in V$ uniformly at random, $r_t \in [0,1]$ uniformly at random, and by sampling $\gamma_t \in \mathcal C_{v_t} $ from $\nu_{v_t}$.
	\item Set $B_{-2^{k}}= \emptyset$ and $D_{-2^{k}} = \mathcal{C}$.
	\item Simulate the polymer dynamics bounding chain from time $-2^{k}$ to time $0$ using $\rho_{-2^{k}},\dots,\rho_{-1}$. 
	\item If $B_0 = B_0 \cup D_0$, then output $B_0$; otherwise set $k \rightarrow 2k$ and repeat the process from Step~(A).
\end{enumerate}

	
	
	

This implementation of the coupling from the past algorithm provides a perfect sample from $\mu$; see~\cite{ProppWilson}. It remains for us to show that it is efficient.
For this, we show first that the expected number of steps
of the polymer dynamics bounding chain throughout the execution of the algorithm is $O(n \log n)$. Afterwards, we will show how to implement steps so that they can be executed in amortized constant expected time.

\begin{lemma}\label{lemCFTP}
	Suppose a subset polymer model on an $n$-vertex graph satisfies condition \eqref{eqPerfectSamplingCondition}. Then,
	the expected number of steps of 
	the polymer dynamics bounding chain in
	the coupling from past algorithm is $O(n \log n)$.
\end{lemma}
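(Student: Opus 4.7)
The plan is to reduce the claim to bounding the expected coalescence time $T^* := \min\{t \ge 0 : D_t = \emptyset\}$ of the bounding chain initialized with $B_0=\emptyset$ and $D_0=\mathcal{C}$. The doubling scheme used in coupling from the past simulates at most $O(T^*)$ bounding-chain steps in total across all iterations (the cost is a geometric sum dominated by its last term), so it suffices to prove $\E[T^*]=O(n\log n)$.

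For the coalescence bound, I would work with a non-negative potential $\Phi_t$ on the bounding chain that vanishes exactly when $D_t=\emptyset$ and establish (i) $\Phi_0=O(n)$ and (ii) the multiplicative decay $\E[\Phi_{t+1}\mid\mathcal{F}_t]\le(1-c/n)\,\Phi_t$ for some constant $c=c(\theta,\lambda,\Delta,q)>0$. A natural candidate is the weighted size $\Phi_t:=\sum_{\gamma\in D_t}|\gamma|w_\gamma$: swapping sums and using $\sum_{\gamma\ni v}w_\gamma\le\sum_{\gamma\not\sim v}|\gamma|w_\gamma\le\theta$ from~\eqref{eqPerfectSamplingCondition} yields $\Phi_0=\sum_{v\in V}\sum_{\gamma\ni v}w_\gamma\le n\theta$. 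With (i) and (ii) in hand, the standard tail bound $\Pr[T^*>t]\le \E[\Phi_t]\le e^{-ct/n}\Phi_0$ delivers $\E[T^*]=O((n/c)\log\Phi_0)=O(n\log n)$.

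The multiplicative decay is where the work lies. I would analyze the single-step evolution of $\Phi_t$ under the two mechanisms that change $D_t$: the removal step (Step~\ref{item:delete}) picks $v\in V$ uniformly and removes every polymer of $D_t$ containing $v$, contributing an expected decrease of at least $\frac{1}{41n}\sum_{\gamma\in D_t}|\gamma|^2 w_\gamma\ge \Phi_t/(41n)$; the addition step (\ref{item:addtoD}) grows $D_t$ only when the proposed $\gamma\sim\nu_v$ is compatible with $B_t$ but incompatible with $D_t$, and using $\sum_{v\in\gamma}\nu_v(\gamma)=|\gamma|w_\gamma/40$ the expected increase is at most $\frac{1}{41n}\sum_{\gamma\not\sim D_t}|\gamma|^2 w_\gamma^2$. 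The main obstacle will be bounding this last sum by $c'\Phi_t/n$ with $c'<1/41$. The plan is to exploit: (a) $\gamma\not\sim D_t$ forces $V(\gamma)$ to meet the thickened support $T:=\bigcup_{\gamma'\in D_t}(V(\gamma')\cup\partial V(\gamma'))$, whose size is controlled by $\sum_{\gamma'\in D_t}|\gamma'|$; (b) condition~\eqref{eqPerfectSamplingCondition} yields $\sum_{\gamma\ni u}|\gamma|w_\gamma\le\theta$ at every $u\in T$; and (c) the weight decay $w_\gamma\le\lambda^{|\gamma|}$ from~\eqref{eqPolyLamCondition} together with $\lambda<\lambda_*(\Delta,q)$ makes $|\gamma|w_\gamma$ uniformly bounded, so the surplus $|\gamma|w_\gamma$ factor in the increase is absorbed at constant cost. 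If a direct telescoping does not produce a small enough constant, I would refine the potential (for example by incorporating the vertex coverage $|V_t^+|$ or by geometrically weighting polymers by their size) so that the increase contribution can be matched against $\Phi_t$ with room to spare, yielding the contraction and completing the proof.
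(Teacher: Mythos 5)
Your reduction of the claim to bounding the coalescence time $T^*:=\min\{t\ge 0: D_t=\emptyset\}$ of the bounding chain started at $(B_0,D_0)=(\emptyset,\mathcal{C})$, together with the observation that the doubling scheme simulates only $O(T^*)$ bounding-chain steps in total, matches the paper. The divergence, and the gap, is in the choice of potential. The paper tracks $\phi_t := \bigl|\bigcup_{\gamma\in D_t}\gamma\bigr|$, the number of \emph{vertices} covered by $D_t$, whereas you commit to $\Phi_t=\sum_{\gamma\in D_t}|\gamma| w_\gamma$. The latter does not contract. The one-step gain in $\Phi_t$ from Step~3(b), after a union bound over a witness vertex in the support of $D_t$, is naturally controlled by $\phi_t$ and not by $\Phi_t$, and because polymer weights decay exponentially in polymer size there is no useful lower bound on $\Phi_t$ in terms of $\phi_t$. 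In fact the drift for $\Phi_t$ can have the wrong sign: take $D_t=\{\gamma_0\}$ with $\gamma_0$ a single vertex $v_0$ of weight $w_0$, set $B_t=\emptyset$, and let each of the $\Delta$ neighbors $v_i$ of $v_0$ carry a single-vertex polymer of weight $\sqrt{w_0}$ (all other weights zero). For $w_0$ small this is consistent with condition~\eqref{eqPerfectSamplingCondition}, yet the expected drop in $\Phi_t$ is $\Theta(w_0/n)$ (picking $v_0$ in Step~2) while the expected gain is $\Theta(\Delta w_0/n)$ (each $\{v_i\}$ is proposed with probability $\propto \sqrt{w_0}$ and, when added, increases $\Phi_t$ by $\sqrt{w_0}$), so $\E[\Phi_{t+1}-\Phi_t \mid B_t,D_t]>0$ once $\Delta\ge 2$.

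The paper's $\phi_t$ works precisely because both sides of the one-step ledger scale with $\phi_t$: Step~2 picks a covered vertex with probability $\phi_t/n$ and always strips at least one vertex from the coverage, while the expected coverage gained in Step~3(b) is at most $\theta\phi_t/(41n)$, obtained by applying~\eqref{eqPerfectSamplingCondition} at each vertex of the support of $D_t$. That gives $\E[\phi_{t+1}-\phi_t\mid B_t,D_t]\le -(1-\theta)\phi_t/(41n)$, and together with $\phi_0\le n$ and a standard multiplicative-drift hitting-time estimate yields $\E[T^*]=O(n\log n)$. Your fallback remark about ``incorporating the vertex coverage'' is therefore not an optional refinement but the missing main idea. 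As written, the proposal sets up the CFTP reduction correctly and gets the $O(n)$ bound on the initial potential, but never establishes a contraction, and the specific potential you commit to provably fails to contract.
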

\begin{proof}
	Let us first consider an evolution of the bounding chain $\{B_t,D_t\}$
	from the initial state $\{B_0,D_0\} = \{\emptyset,\mathcal C\}$. Let 
	$T$ be the first time $B_t = B_t \cup D_t$; that is, the time until $D_t = \emptyset$.
	We bound first the expected value of $T$. Let $\phi_t = |(\cup_{\gamma \in D_t} \gamma) \cap V|$, the number of vertices of $G$ that are included in at least one polymer of $D_t$. Note that $\phi_t \leq n$. We analyze the expected value of $\phi_{t+1} - \phi_t$. 
	
	With probability $\phi_t/(41n)$, a vertex in a polymer of $D_t$ is picked and all polymers in $D_t$ containing that vertex are removed, in which case $\phi_{t+1} \leq \phi_t - 1$. 
	(A polymer may also be removed from $D_t$ in Step 3(a), but this case is omitted from our calculations as it never increases $\phi_t$.)
	
	The potential $\phi_{t}$ may increase in Step 3(b), when a polymer $\gamma$ may be added to $D_t$. Such a polymer is only added if it is incompatible with $D_t$. The probability this occurs is at most:
	\begin{align*}
		\sum_{v \in D_t} \frac{40}{41n} \sum_{\gamma \not\sim {v} } \nu_v(\gamma)
	\end{align*}
	For each such polymer, the expected increase in the size of $\phi_t$ is at most $|\gamma|$.	We can calculate: 
	\begin{align*}
		\mathbb{E}[\phi_{t+1} - \phi_{t} \mid B_t,D_t] \leq -\frac{\phi_t}{41n}   + \sum_{v \in D_t} \frac{40}{41n} \sum_{\gamma \not\sim {v} } |\gamma| \nu_v(\gamma) = -\frac{\phi_t}{41n}   +  \frac{\phi_t}{41n} \sum_{\gamma \not\sim {v} } |\gamma| w_\gamma.
	\end{align*}
	By \eqref{eqPerfectSamplingCondition}, there exists a constant $\theta < 1$ such that $\sum_{\gamma \not\sim v } |\gamma| w_\gamma \leq \theta$. It follows that
	\begin{align*}
		\mathbb{E}[\phi_{t+1} - \phi_{t} \mid B_t,D_t] \leq -\frac{\phi_t}{41n} (1 - \theta). 
	\end{align*}
	This shows that $\{\phi_t\}$ is a stochastic process with
	variable multiplicative drift.
	Since $T$ is also the first time $\phi_t < 1$, using standard hitting time estimates (see, e.g., Theorem 10 in~\cite{KK-drift}), we get:
	$$
	\E[T] \le \frac{41n}{1-\theta} + \frac{41n}{1-\theta} \int_{1}^{n} \frac{1}{z} dz = O(n \log n).
	$$
	Now, let $-M$ be the first time in past such that, 
	after setting $B_{-M} = \emptyset$ and $D_{-M} = \mathcal C$, we have $B_{0} =B_0\cup D_0$. 
	The coupling from the past algorithm simulates at most $2M + M +  M/2 + \dots + 1 = O(M)$ steps of the bounding chain. The result follows by noting that $\E[M] = \E[T] = O(n \log n)$.
\end{proof}

It remains to consider how to efficiently implement the steps of the polymer dynamics bounding chain. This is subtle because $D_t$ may initially contain an exponentially large number of polymers, and care is thus needed in how $D_t$ is represented and stored. We describe and analyze efficient data structures for $B_t$ and $D_t$ in the following lemma.

\begin{lemma}
	\label{lemDataStructure}
	Suppose a subset polymer model on an $n$-vertex graph of maximum degree $\Delta \ge 3$ with computationally feasible weights that satisfy $w_\gamma \leq \lambda^{|\gamma|}$ for some $\lambda < \lambda_*(\Delta,q)$.
	There exists a compact representation of $B_t$ and $D_t$ that uses $O(n + t)$ space in expectation. Using this representation, each iteration of the Polymer Dynamics Bounding Chain can be executed in amortized constant expected time, and the termination condition $B_t = B_t \cup D_t$ can be checked in constant time. 
\end{lemma}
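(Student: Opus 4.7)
The plan is to store $B_t$ explicitly via a vertex-indexed table but to store $D_t$ in a hybrid form that is mostly implicit: since $D_0 = \mathcal{C}$ is exponentially large but is only modified locally by the chain (a whole vertex's worth of polymers removed in Step 2, and at most one polymer added in Step 3(b) or removed in Step 3(a)), it is possible to track $D_t$ by (a) a boolean $\mathrm{touched}[v]$ for each vertex, recording whether $v$ has ever been picked in Step 1, plus (b) for each vertex $v$, a (small) set $\mathrm{polysInD}[v]$ of polymers containing $v$ that have been explicitly added to $D_t$ via Step 3(b) and not yet removed. I would also maintain $\mathrm{polyInB}[v]$ pointing to the unique polymer in $B_t$ containing $v$ (if any), and two counters $U_t$ (number of untouched vertices) and $E_t$ (total number of polymers explicitly in $D_t$).

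The semantic claim to prove by induction on $t$ is that $\gamma \in D_t$ if and only if either (i) $\gamma$ is explicitly stored, or (ii) no vertex of $\gamma$ is touched and $\gamma \not\in B_t$. Case (ii) captures the initial polymers inherited from $D_0 = \mathcal{C}$ whose vertices have not yet been cleared. Verifying this invariant step-by-step is routine, provided one simultaneously maintains the side invariant $B_t \cap D_t = \emptyset$, which holds because every time a polymer enters $B_t$ in Step 3(a) it is simultaneously removed from $D_t$.

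Given this representation, each step reduces to scanning the closed neighborhood $\gamma \cup \partial\gamma$ of the sampled polymer, of size $O(\Delta |\gamma|)$: compatibility with $B_t$ is decided by querying $\mathrm{polyInB}[u]$, and compatibility with $D_t \setminus \{\gamma\}$ is decided by querying $\mathrm{polysInD}[u]$ together with $\mathrm{touched}[u]$ (an untouched $u \in \gamma \cup \partial \gamma$ automatically witnesses incompatibility since $|\mathcal{C}_u| \ge 2$). Sampling $\gamma \sim \nu_v$ takes $O(1)$ expected time by Theorem~\ref{thmPolymerSample}, and $\mathbb{E}_{\nu_v}[|\gamma|]$ is a constant depending only on $\lambda, \Delta, q$ by the same subcritical branching-process bound used to prove Theorem~\ref{thmSubgraphSample:lambda}, so the per-step expected cost outside of Step 2 flushes is $O(1)$.

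The main obstacle I anticipate is the amortized analysis of Step 2: when $v$ is touched we must remove every polymer in $\mathrm{polysInD}[v]$ and delete its entries in $\mathrm{polysInD}[u]$ for each other $u \in \gamma$, at cost $\sum_{\gamma \in \mathrm{polysInD}[v]} |\gamma|$, which is not bounded per step in the worst case. The standard fix is to charge this cost to the Step 3(b) operation that originally inserted $\gamma$: since each insertion pays $O(|\gamma|)$ up front and each later deletion pays $O(|\gamma|)$ only once, and there are at most $t$ Step 3(b) operations with expected polymer size $O(1)$, the total amortized cost after $t$ steps is $O(t)$, i.e., $O(1)$ expected per step. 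The same accounting gives total expected space $O(t)$ for $\mathrm{polysInD}$, plus $O(n)$ for $\mathrm{polyInB}$ and $\mathrm{touched}$, for a total of $O(n+t)$. Finally, by disjointness, $B_t = B_t \cup D_t$ is equivalent to $D_t = \emptyset$, which by the invariant holds iff $U_t = 0$ and $E_t = 0$, a constant-time check against the two counters.
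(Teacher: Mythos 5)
Your proposal is correct and follows essentially the same route as the paper's proof. Your $\mathrm{touched}[v]$, $\mathrm{polysInD}[v]$, $\mathrm{polyInB}[v]$, $U_t$, and $E_t$ correspond directly to the paper's $D^*_t$, the doubly-linked lists $\cL^v$/$\cL^\gamma$ with the counters $\overline{D}_t$, the array $\overline{B}_t$, $N^*_t$, and $\overline{N}_t$, respectively, and the amortized charging of Step~2 deletions to the Step~3(b) insertion that created each explicit polymer is exactly the paper's accounting. The one place your write-up is tighter than the paper's is that you state the representation invariant for $D_t$ explicitly (membership iff explicitly stored, or all vertices untouched and $\gamma\notin B_t$) and name the side invariant $B_t\cap D_t=\emptyset$, which the paper leaves implicit; this is a welcome clarification. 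Two small caveats worth being aware of (present in the paper as well, and harmless): (1) the ``untouched $u\in\gamma\cup\partial\gamma$ automatically witnesses incompatibility'' shortcut can, in corner cases (e.g., $q=1$, $\gamma$ a single untouched vertex all of whose neighbors are touched and carry no explicit polymers), misclassify $\gamma$ as incompatible with $D_t\setminus\{\gamma\}$, pushing the chain into Step~3(b) instead of 3(a) — this preserves the bounding property $B_t\subseteq X_t^\Gamma\subseteq B_t\cup D_t$ and the drift bound of Lemma~\ref{lemCFTP}, so it does not affect correctness or the $O(n\log n)$ bound, but strictly speaking the implemented chain can differ from the stated Bounding Chain; (2) to get the claimed $O(|\gamma|)$ cost for cross-deleting $\gamma$ from all $\mathrm{polysInD}[u]$ in Step~2, you need cross-pointers between the per-polymer list and the per-vertex lists (as in the paper's $\cL^\gamma$, $\cL^v$), which you should say explicitly since a bare ``set'' abstraction would not support this in $O(1)$ per entry.
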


\begin{proof}
	First, by Theorem~\ref{thmPolymerSample}, there is a randomized algorithm to sample perfectly from $\nu_v$ for any $v \in V$ with constant expected running time. Therefore, the expected size of the polymer $\gamma$ produced by the sampling procedure is constant.  This will imply many of our data structure operations, which take (amortized) time $O(|\gamma|)$ for some $\gamma$ sampled from $\nu_v$, take constant expected time. 
	
	For simplicity, we assume time begins at $t = 0$ and at some point in the future we wish to check whether $B_t = B_t \cup D_t$. Shifting the indices appropriately, this data structure can be applied in each round of coupling from the past from each negative starting time. Note because all polymers in $B_t$ are compatible with all polymers in $D_t$, the condition $B_t = B_t \cup D_t$ is equivalent to the condition $D_t = \emptyset$.

	{\it Data Structure for $B_t$}: Note $B_t$ is a collection of compatible polymers. The operations to be performed on $B_t$ include addition of a polymer (Step \ref{item:addtoB}), deletion of any polymer containing a particular vertex (Step \ref{item:delete}), and comparison of a polymer to $B_t$ to determine whether it is compatible with $B_t$ (Step \ref{item:comparetoB}). $B_t$ will be stored as a length $n$ array, and we use $\overline{B}_t$ to denote the state of the array at time $t$. Initially, $\overline{B}_0$ has only $0$ entries. Each entry may also have a pointer to a node of a doubly linked list. Each polymer in $B_t$ is stored as a doubly linked list. This requires $O(n)$ space. 
	
	When $\gamma$ is added to $B_t$, for each $v \in \gamma$ we set $\overline{B}_t(v) = 1$. We also create a doubly linked list $\cL^\gamma$, which has a node for each $v \in \gamma$ that notes the color $v$ is assigned in $\gamma$ and contains a forward pointer, a backward pointer, and a pointer back to $\overline{B}_t(v)$. Each entry of $\overline{B}_t$ for $v \in \gamma$ also points to the corresponding node of $\cL^\gamma$. Doing these updates to the data structure for $B_t$ takes time $O(|\gamma|)$.

	The polymer removal step only occurs in Step \ref{item:delete} when any polymers containing $v$ are removed from $B_t$. There is at most one such polymer $\gamma$ in $B_t$ containing $v$. The other vertices in $\gamma$ can quickly be found using the doubly linked list $\mathcal{L}^\gamma$ and its pointers back to $\overline{B}_t$. We can set $B_t(w) = 0$ for each $w \in \gamma$ and remove $\mathcal{L}^\gamma$.  This removal of $\gamma$ from $B_t$ occurs in time $O(|\gamma|)$. 
	
	To compare a polymer $\gamma$ to $B_t$ to determine whether it is compatible with $B_t$, one looks up every vertex $w$ that is in or adjacent to  $\gamma$ in $\overline{B}_t$ to see if it is part of a polymer of $B_t$, that is, if $\overline{B}_t(w) = 1$. If at least one of these vertices $w$ has $\overline{B}_t(w) = 1$, $\gamma$ is incompatible with $B_t$; if all these  $w$ have $\overline{B}_t(w) = 0$, then $\gamma$ is compatible with $B_t$. This takes time $O(\Delta|\gamma|)$, which because we assume $\Delta$ is constant is $O(|\gamma|)$. 
	
	Note each of these operations can be performed in time $O(|\gamma|)$, where $\gamma$ is a polymer that was at some (possibly earlier) time step drawn from $\nu_v$ for some $v$. When each polymer $\gamma$ is sampled in Step~\ref{item:comparetoB} in some iteration of the Polymer Dynamics Bounding Chain, all $O(|\gamma|)$ operations that can potentially be performed on it in the future are immediately charged to it. As argued in the first paragraph above, the expected size of a polymer $\gamma$ drawn in Step~\ref{item:comparetoB} is constant, so the amount $O(|\gamma|)$ charged to each sampled polymer $\gamma$ is constant in expectation. We conclude updating $B_t$ takes expected amortized constant time. This data structure requires $O(n)$ space.

	
	
	{\it Data Structure for $D_t$}: Much more care is needed with how $D_t$ is stored and accessed.  Initially $D_0 = \mathcal{C}$, which contains an exponential number of polymers. The operations we need to perform on $D_t$ include deleting all polymers containing a single vertex $v$ (Step~\ref{item:delete}), checking whether a polymer is compatible with $D_t$ (Step~\ref{item:addtoD}), adding a polymer to $D_t$ (Step~\ref{item:addtoD}), and checking whether a polymer $\gamma$ is compatible with $D_t\setminus \{\gamma\}$ and if so deleting $\gamma$ from $D_t$ (Step~\ref{item:addtoD}). 
	
	The key observation is that once a particular vertex $v$ of $G$ is selected in Step\ref{item:pickv} and the coin flip is such that Step~\ref{item:delete} is performed, no polymers containing $v$ remain in $\mathcal{C}$ and the size of $D_t$ has been reduced dramatically. Because of this, it makes sense to keep track of which vertices in $D_t$ have been the subject of a deletion in Step~\ref{item:delete} at least once.  We let $D^*_t$ be (the state at time $t$) of an array of length $n$ with an entry corresponding to each vertex $v$ of $G$, where $D^*_t(v) = 0$ if Step~\ref{item:delete}, deleting $\mathcal{C}_v$ from $D_t$, has been performed for $v$ at least once, and $D^*_t(v) = 1$ if it has not.  We also let $N^*_t$ be the number of $1$'s in $D^*_t$. Initially, $D^*_0(v) = 1$ for all $v$ and $N^*_0 = n$.  At any time step, we know any polymer $\gamma$  where $D^*_t(v) = 1$ for all $v \in \gamma$ is in $D_t$. Because of this, initially $D^*_0$ completely describes~$D_0$. 
	
	Amidst these deletions, polymers are also added to $D_t$, and a polymer may contain both vertices where $D^*_t(v) = 0$ and vertices where $D^*_t(v) = 1$. All polymers added to $D_t$ in Step~\ref{item:addtoD} will be stored separately from $D^*_t$. The data structure used here will have the same idea as that for $B_t$. There will be an array $\overline{D}_t$ with an entry for each vertex where $\overline{D}_t(v)$ gives the number of polymers containing~$v$ that are currently stored in $D_t$. We also keep track of $\overline{N}_t$, the total number of polymers that are currently in $D_t$.  Adding a polymer $\gamma$ to $D_t$ involves incrementing $N_t$, incrementing the corresponding entries in $\overline{D}_t$, and adding a doubly linked list $\mathcal{L}^\gamma$ connecting the vertices of the polymer together. However, because polymers in $D_t$ need not be compatible, there may be more than one polymer in $D_t$ containing a given vertex $v$. Because of this, instead of each vertex being in at most one doubly linked list $\mathcal{L}^\gamma$, it may be in many such doubly linked lists. To maintain all such pointers for vertex $v$ (two for each $\mathcal{L}^\gamma$ where $v \in \gamma$), these pointers are themselves stored in another doubly linked list $\mathcal{L}^v$.  
	
	The necessary operations that must be performed on $D_t$ can be implemented as follows. Note that initially $D^*_0 (v) = 1$ for all $v$, $N^*_0 = n$, $\overline{D}_0(v) = 0 $ for all $v$, $\overline{N}_0 = 0$, there are no doubly linked lists $\cL^\gamma$, and each doubly linked list $\cL^v$ is empty.  These steps are presented in a logical order for explaining the data structures for $D_t$, rather than in the order in which they occur in the description of the Polymer Dynamics Bounding Chain. 
	
	\begin{itemize} 
		\item {\it Adding polymer $\gamma$ to $D_t$ (Step~\ref{item:addtoD})}: For each $v$ in $\gamma$, increment $\overline{D}_t(v)$. Increment $\overline{N}_t$.  Create a doubly linked list $\cL^\gamma$, which has a node for each $v \in \gamma$ that notes the color $v$ is assigned in $\gamma$ and containing a forward pointer and a backward pointer. For each $v$, this node containing the pointers of $\mathcal{L}^\gamma$ is inserted at the front of the doubly linked list $\cL^v$. This takes time $O(|\gamma|)$. 
		\item {\it Deleting all polymers containing $v$ from $D_t$ (Step~\ref{item:delete})}: If $D^*_t(v) = 1$, decrement $N^*_t$ and set $D^*_t (v) = 0$. If $\overline{D}_t(v) > 0$, set $\overline{N}_t \leftarrow \overline{N}_t - \overline{D}_t(v)$ and set $\overline{D}_t(v) = 0$. Additionally, for each node in $\cL^v$, we explore the corresponding linked list $\cL^\gamma$ using the pointers in the node.  We delete every other node of $\cL^\gamma$, possible in time $O(|\gamma|)$ because for list $\cL^w$ also containing a node of $\cL^\gamma$, the nodes before and after this node in $\cL^w$ can easily be connected to each other because $\cL^w$ is doubly linked. For each node in $\cL^w$ that is deleted, we also decrement $\overline{D}_t(w)$. This takes time $O(\sum_{\gamma \ni v} |\gamma|)$. 
		
		\item {\it Checking if polymer $\gamma$ is compatible with $D_t$ (Step~\ref{item:addtoD})}: For each $w$ that is in or adjacent to $\gamma$, check whether $D^*_t(w) = 0$ and $\overline{D}_t(w) = 0$.  If both are 0 for all such $w$, then $\gamma$ is compatible with $D_t$.  If at least one is nonzero, then $\gamma$ is not compatible with $D_t$. This takes time $O(\Delta|\gamma|)$.
		
		\item {\it Checking whether a polymer $\gamma$ is compatible with $D_t\setminus \{\gamma\}$ and if so, deleting $\gamma$ from $D_t$ (Step~\ref{item:addtoB})}: We only need to check this case when $\gamma$ is compatible with $B_t$. We first check whether $\gamma$ is compatible with $D_t$, as described above.  If it is, we add $\gamma$ to $B_t$. If $\gamma$ is compatible with $D_t$ then it cannot be in $D_t$, so $D_t$ and $D_t \setminus \{\gamma\}$ are the same and no further steps are needed. 
		
		If $\gamma$ is incompatible with $D_t$, the next step is to check if it is also incompatible with $D_t \setminus \{\gamma\}$.  We check if $D^*(v) = 0$ for all $v$ in or adjacent to $\gamma$, $\overline{D}_t(v) = 0$ for all $v$ adjacent to $\gamma$, and $\overline{D}_t(v) = 1$ for all $v \in \gamma$; if at least one of these does not hold, $\gamma$ cannot be compatible with $D_t \setminus \{\gamma\}$ and so we are not in Step~\ref{item:addtoB} and we proceed to Step~\ref{item:addtoD}.  If all these hold, we check further to see whether all $\overline{D}_t(v) = 1$ for $v \in \gamma$ because of the presence of the single polymer $\gamma$ or due to the presence of other polymers, which we can detect by looking at the doubly linked list(s) connecting the vertices in $\gamma$, including the colors assigned to each vertex. If there are polymers other than $\gamma$ here, either multiple smaller polymers or a polymer with the same vertices as $\gamma$ but differently assigned colors, then $\gamma$ is incompatible with $D_t\setminus \{\gamma\}$ and we are not in Step~\ref{item:addtoB}. If instead we find exactly polymer $\gamma$, then $\gamma$ is compatible with $D_t \setminus \{\gamma \}$ and  we add $\gamma$ to $B_t$ and delete it from $D_t$ as previously described. 
		This takes time $O(\Delta|\gamma|)$. 
	\end{itemize}
	
	Because any polymer added to $D_t$ was drawn from the distribution $\nu_v$ for some $v$, as argued above its expected size is constant.  Therefore all of the implementations above, with the possible exception of deleting all polymers containing a single vertex, takes expected constant time. Setting $D^*_t(v) = 0$ takes constant time, but it may take longer to remove any polymers that were added as linked lists $\cL^\gamma$.  However, we can amortize the cost of deletion if we pay for the cost of deleting a polymer when we add it, as all doubly linked lists $\cL^\gamma$ must be added before they are deleted. Doing this amortization makes the cost of adding a polymer to $D_t$ be $O(|\gamma|) + O(|\gamma|) = O(|\gamma|)$, while the amortized cost of deleting all polymers containing a vertex $v$ is now $O(1)$. Thus all necessary operations for $D_t$ can be performed in amortized expected constant time. 
	
	Finally, we note that the termination condition $B_t = B_t \cup D_t$, equivalent to $D_t = \emptyset$, can be checked in constant time by verifying $N^*_t = 0$ and $\overline{N}_t = 0$. The first condition $N^*_t = 0$ verifies that every vertex $v$ has had all polymers containing $v$ deleted at least once, and the second condition $\overline{N}_t = 0$ verifies that there are currently no additional polymers in $D_t$. 
	
	The total space used for these data structures after $t$ steps of the algorithm is in expectation at most $O(n+t)$: $O(n)$ for $\overline{B}_t$, $D^*_t$ and $\overline{D}_t$, and $O(|\gamma|)$ for each polymer $\gamma$ added to $D_t$, which is constant in expectation. There are at most $t$ polymers in $D_t$, one added each step, so this is $O(t)$ space at most in expectation. 
\end{proof}

Together, Theorem~\ref{thmPolymerSample}, Lemma~\ref{lemCFTP}, and Lemma~\ref{lemDataStructure} imply Theorem~\ref{thmPolymerSample1}.

\section{Applications to low-temperature spin systems}
\label{secApplications}

We present in this section the details of the applications spin system sampling given in Section~\ref{sub:app:spin}.

\subsection{The hard-core model on unbalanced bipartite graphs}
\label{AppSpinApplications:hc}

We start by proving Corollary~\ref{corHCunbalanced} for which we use Theorem~\ref{thmPolymerSample1}.

\begin{proof}[Proof of Corollary~\ref{corHCunbalanced}]
	
	Let $G$ be an $n$-vertex bipartite graph with partite sets $L$ and $R$. 
	Suppose that the vertices in $L$ have maximum degree $\Delta_L$, and that the vertices in $R$ have maximum degree $\Delta_R$ and minimum degree $\delta_R$.
	
	We can sample from the hardcore model distribution $\mu_G^{hc}$ (see~\eqref{eq:hc:dist} for its the definition) by considering an auxiliary distribution $\overline{\mu}$, which is a distribution on subsets $S \subseteq R$.  For a subset $S \subseteq R$ with neighbors $N(S)$ in $L$, this distribution is given~by 
	\[
	\overline{\mu} (S) = \frac{(1 + \lambda)^{|L \setminus N(S)|} \cdot \lambda^{|S|}} {Z},
	\]
	where $Z = \sum_{S \subseteq R} (1 + \lambda)^{|L \setminus N(S)|} \cdot \lambda^{|S|}$. 
	One can sample from $\mu_G^{hc}$ by first sampling $S\subseteq R$ according to $\overline{\mu}$, adding $S$ to the independent set, and then for each vertex $v \in L \setminus N(S)$ include $v$ in the independent set with probability $\lambda/(1+\lambda)$ independently. This results in exactly the desired distribution $\mu_G^{hc}$.
	
	To sample from $\overline{\mu}$ we use a polymer model, which we define next.
	Let $R^2$ be the graph whose vertices are the vertices of $R$, where two vertices are adjacent if they are at distance two in $G$. Note the maximum degree in $R^2$ is $\Delta_R(\Delta_L-1)$. We define a polymer $\gamma$ to be a connected subset of $R^2$, its neighborhood $N(\gamma)$ to be all vertices in $L$ adjacent to a vertex in $\gamma$, and its weight as
	\[
	w_\gamma = \frac{\lambda^{|\gamma|}}{(1+\lambda)^{|N(\gamma)|}} \leq \left(\frac{\lambda}{(1+\lambda)^{\delta_R/\Delta_L}}\right)^{|\gamma|}.
	\]
	Polymers will not be labelled; that is, $q = 1$. 
	Compatible polymer configurations in $R^2$ correspond exactly to subsets $S \subseteq R$, and the weights $w_\gamma$ mean a compatible polymer configuration corresponding to subset $S$ has probability exactly $\overline{\mu}(S)$. 
	
	To sample from this subset polymer model in $R^2$, we show that when condition~\eqref{eqn:unbalancedcond} holds,
	the conditions~\eqref{eqPolyLamCondition} and~\eqref{eqPerfectSamplingCondition} of Theorem~\ref{thmPolymerSample1} hold, implying the existence of a perfect sampling algorithm with expected running time $O(n \log n)$. 
	
	First, when \eqref{eqn:unbalancedcond} holds, it follows that $e\lambda(\Delta_L-1) \Delta_R \leq (1+\lambda)^{\delta_R/\Delta_L}$. Then 
	\[
	w_\gamma 
	\leq \left(\frac{\lambda}{(1+\lambda)^{\delta_R/\Delta_L}}\right)^{|\gamma|} \leq \left(\frac{1}{e (\Delta_L-1) \Delta_R}  \right)^{|\gamma|},
	\]
	and since $\Delta_R (\Delta_L-1)$ is the maximum degree in the host graph $R^2$, condition~\eqref{eqPolyLamCondition} holds. 
	
	From (\ref{eqn:unbalancedcond}) it also follows that 
	\begin{equation*}
		\frac{e \lambda \Delta_R (\Delta_L -1)}{(1+\lambda)^{\delta_R/\Delta_L}} < \frac{e \Delta_R (\Delta_L -1)}{1 + (1+e) \Delta_R (\Delta_L - 1)}.
	\end{equation*}
	Because all quantities in this equation are constants that do not depend on the size of the graph, we conclude there exists $\theta \in (0,1)$ such that 
	\begin{equation}\label{eq:hc_theta}
		\frac{e \lambda \Delta_R (\Delta_L -1)}{(1+\lambda)^{\delta_R/\Delta_L}} \leq \theta \frac{e \Delta_R (\Delta_L -1)}{1 + (1+e) \Delta_R (\Delta_L - 1)}.
	\end{equation}
	
	\noindent Using Lemma~\ref{lemma:Tvk} (which precisely counts the number of rooted subtrees of size $k$ of the $\Delta$-regular tree), the number of polymers of size $k$ that are incompatible with $v$ is at most $(\Delta_R (\Delta_L -1) + 1) (e \Delta_R (\Delta_L-1))^{k-1}/k$; recall that $\Delta_R (\Delta_L - 1)$ is the maximum degree of $R^2$. It follows from this and \eqref{eq:hc_theta} that 
	\begin{align*}
		\sum_{\gamma \not\sim {v}} |\gamma| w_\gamma &\leq \sum_{k = 1}^\infty k \cdot \frac{(\Delta_R (\Delta_L -1) + 1) (e \Delta_R (\Delta_L-1))^{k-1}}{k}  \cdot \left(\frac{\lambda}{(1+\lambda)^{\delta_R/\Delta_L}}\right)^{k} \\
		&= \frac{\Delta_R (\Delta_L -1 ) + 1}{e \Delta_R (\Delta_L -1)} \sum_{k = 1}^\infty \left(\frac{e \lambda \Delta_R (\Delta_L-1)}{(1+\lambda)^{\delta_R/\Delta_L}} \right)^k \\
		&\leq\frac{\Delta_R (\Delta_L -1 ) + 1}{e \Delta_R (\Delta_L -1)} \sum_{k = 1}^\infty \theta \left( \frac{e \Delta_R (\Delta_L -1)}{1 + (1+e) \Delta_R (\Delta_L - 1)} \right)^k 
		\\&= \theta \cdot \frac{\Delta_R (\Delta_L -1 ) + 1}{e \Delta_R (\Delta_L -1)} \frac{\frac{e \Delta_R (\Delta_L -1)}{1 + (1+e) \Delta_R (\Delta_L - 1)}}{1- \frac{e \Delta_R (\Delta_L -1)}{1 + (1+e) \Delta_R (\Delta_L - 1)}}
		\\&= \theta.
	\end{align*}
	Thus, condition~\eqref{eqPerfectSamplingCondition} also holds, and Theorem~\ref{thmPolymerSample1} supplies the perfect sampling algorithm with the desired running time.
\end{proof}

\subsection{Potts Model on Expander Graphs}
\label{AppSpinApplications:potts}

The $Q$-color Potts model on $G$ at inverse temperature $\beta$ is a distribution over all (not necessarily proper) $Q$-colorings of $V$. Let $\Omega_{G,Q}$ be all colorings $\omega: V \rightarrow [Q]$.  For a coloring $\omega \in \Omega_{G, Q}$ with $m(G, \omega)$ monochromatic edges, this distribution has 
\[
\mu_G ^{\mathrm{Potts}}
(\omega) = \frac{e^{\beta m(G, \omega)}}{Z},
\]
where $Z = \sum_{\Omega_{G,q}} e^{\beta m(G, \omega)}$.  

For $j \in [Q]$, let $\Omega_{G, Q, j}$ be all colorings in $\Omega_{G, Q}$ such that strictly more than $n/2$ vertices are assigned color $j$.
Let $\overline{\Omega}_{G,Q} := \bigcup_{j \in [Q]}  \Omega_{G,Q,j}$ and consider the distribution
$\overline{\mu} ^{\mathrm{Potts}}$ given by:
\[
\overline{\mu} ^{\mathrm{Potts}}(\omega) = \frac{e^{\beta m(G, \omega)}}{\hat Z} \mathbbm{1}(\omega \in \overline{\Omega}_{G,Q}), \text{ where } \hat Z = \sum_{\omega \in \overline{\Omega}_{G,Q}} e^{\beta m(G, \omega)}.
\]

It follows from \cite{jenssen2020algorithms} that, under suitable conditions, (as we detail next) a perfect sample from $\overline{\mu} ^{\mathrm{Potts}}$ is an $e^{-n}$-approximate sample from $\mu ^{\mathrm{Potts}}$.  

Let $\overline{\mu} ^{\mathrm{Potts}}_{j}$ be the distribution $\overline{\mu} ^{\mathrm{Potts}}$ conditioned on being close to the ground state that is entirely color $j$, that is, conditioned on being in $\Omega_{G,Q,j}$.  Because all $Q$ ground states are symmetric, one can sample from $\overline{\mu} ^{\mathrm{Potts}}$ by first picking a uniformly random $j \in [Q]$, and then sampling from  $\overline{\mu} ^{\mathrm{Potts}}_{j}$. Sampling from  $\overline{\mu} ^{\mathrm{Potts}}_{j}$ can be done using the polymer model we define next.

For $\omega \in \Omega_{G,Q,j}$, let $\Gamma(\omega) = \{v \in V: \omega(v) \neq j\}$. 
Consider a subset polymer model whose host graph is $G$ where a polymer is a graphlet of $G$ whose vertices have colors other than $j$ (there are thus $Q-1$ colors available to color the vertices of a polymer). 
For a polymer $\gamma$, we let $w_\gamma= \exp (- \beta B(\gamma))$  where $B(\gamma)$ is the number of bichromatic edges of the colored subgraph $\gamma$ plus the number of boundary edges of $\gamma$.

There is a bijection between Potts configurations in $\Omega_{G,Q,j}$ and compatible polymer configurations consisting of at most $n/2$ total vertices, where vertices assigned a color other than $j$ are identified. A sampling algorithm for this polymer model can give a sampling algorithm for $\overline{\mu} ^{\mathrm{Potts}}_j$, where if the polymer configuration produced has more than $n/2$ total vertices, that configuration is rejected and resampling occurs. As we will see, the probability of needing to resample is small. 

Using this polymer model representation,~\cite{jenssen2020algorithms} gives an efficient $\varepsilon$-approximate sampling algorithm with for $\mu_G ^{\mathrm{Potts}}$ , derived from an approximate sampling algorithm for $\overline{\mu} ^{\mathrm{Potts}}_j$. This algorithm applies to all $\alpha$-expanding graphs $G$ with maximum degree $\Delta$ whenever $\alpha > 0$, $\Delta \geq 3$, $Q \geq 2$, and $\beta > 4 \log((Q-1)\Delta)/\alpha$. However, it involves a polymer enumeration step, and so its runtime is (omitting the dependence on other parameters) of the form $n^{O(\log \Delta)}$.

In~\cite{chen2021fast}, the authors take a Markov chain approach and give an $\varepsilon$-approximate sampling algorithm for $\mu ^{\mathrm{Potts}}$, again via approximate sampling from $\overline{\mu} ^{\mathrm{Potts}}_j$, on $\alpha$-expanding graphs $G$ with maximum degree $\Delta$  whenever $\beta \geq \frac{5+3\log((Q-1)\Delta)}{\alpha}$. This has running time $O(n \log(n/\varepsilon) \log(1/\varepsilon))$. In both of these prior works, it must hold that $\varepsilon \geq Qe^{-n}$.

Our algorithm gives an even larger range of $\beta$ in which $O(n \log n)$ sampling is possible, and removes the dependence on $\varepsilon$.  It produces an exact sample for $\overline{\mu} ^{\mathrm{Potts}}$ rather than an approximate sample, although this sample is still only an $e^{-n}$-approximate sample for $\mu_G ^{\mathrm{Potts}}$.

\begin{proof}[Proof of Corollary~\ref{corPottsEx}]
	From the discussion above, it suffices to generate a perfect sample from $\overline{\mu} ^{\mathrm{Potts}}_{j}$ for any $j \in [Q]$ using the subset polymer model described above. We will show that conditions~\eqref{eqPolyLamCondition} and~\eqref{eqPerfectSamplingCondition} of Theorem~\ref{thmPolymerSample1} hold, implying the existence of a perfect sampling algorithm for $\overline{\mu} ^{\mathrm{Potts}}_{j}$ with expected running time $O(n \log n)$.
	
	When $\beta \geq \frac{1 + \log( (Q-1)\Delta)}{\alpha}$, since $G$ is an $\alpha$ expander, $w_\gamma = \exp (- \beta B(\gamma)) \leq\left( \frac{1}{e\Delta(Q-1)}\right)^{|\gamma|} \leq \lambda^{|\gamma|}$ for a suitable $\lambda < \lambda_*(\Delta,Q-1)$, and thus condition~\eqref{eqPolyLamCondition} is met. 
	Moreover,
	when (\ref{eqn:pottsbeta}) holds, then rearranging terms shows
	\begin{align*}
		e(Q-1)\Delta e^{-\alpha \beta} < \frac{e\Delta}{e\Delta + \Delta + 1}.
	\end{align*}
	Because all quantities in this equation are constants that do not depend on the size of the graph, we conclude there exists $\theta \in (0,1)$ such that 
	\begin{equation*}
		e(Q-1)\Delta e^{-\alpha \beta} \leq \theta \cdot  \frac{e\Delta}{e\Delta + \Delta + 1}.
	\end{equation*}
	
	Using the bound $(e\Delta)^{k-1}/k$ for the number of graphlets of size $k$ containing a given vertex, we deduce
	that the number of polymers incompatible with $v$ of size $k$ is at most $(\Delta+1) (e\Delta)^{k-1} (Q-1)^k / k$.
	(This bound for the number of graphlets appears  
	in~\cite{borgs1989unified} but can also be deduced by direct computation from Lemma~\ref{lemma:Tvk} below which provides a tighter bound.)
	Therefore, we get
	\begin{align*}
		\sum_{\gamma \not\sim {v}} |\gamma| w_\gamma &\leq \sum_{k = 1}^\infty k \cdot \frac{(\Delta+1) (e\Delta)^{k-1} (Q-1)^k}{k} e^{-\beta \alpha k} = \frac{\Delta + 1}{e\Delta} \sum_{k = 1}^\infty  \left( e \Delta (Q-1)e^{-\beta \alpha} \right)^k \\
		& \leq \frac{\Delta + 1}{e\Delta} \sum_{k = 1}^\infty \theta \cdot \left( \frac{e\Delta}{e\Delta + \Delta + 1}\right)^k = \theta \cdot \frac{\Delta + 1}{e\Delta} \left( \frac{\frac{e\Delta}{e\Delta + \Delta + 1}}{1-\frac{e\Delta}{e\Delta + \Delta + 1}}
		\right) = \theta.
	\end{align*}
	Thus, condition~\eqref{eqPerfectSamplingCondition} holds, and Theorem~\ref{thmPolymerSample1} provides a perfect sampling algorithm with the desired running time.
\end{proof}

\section{Graphlet sampling: hardness}
\label{sec:hardness}

In this section we establish the sharpness of the threshold $\lambda_*(\Delta,q)$. In particular, we prove
Lemmas~\ref{lemma:comp:hard} and~\ref{lemma:part:finite} from the introduction.
We first prove the following more general variant of Lemma~\ref{lemma:part:finite}, which corresponds to the $q=1$ case.

\begin{lemma}
	\label{lemma:part:finite-general}
	The partition function $Z_{G,r,\lambda}$ is finite for every (possibly infinite) graph $G=(V,E)$
	of maximum degree $\Delta$
	and every $r \in V$ if and only if $\lambda \le \lambda_*(\Delta,q)$.
	Moreover, $Z_{G,r,\lambda} \le 39$ when $\lambda \le \lambda_*(\Delta,q)$.
\end{lemma}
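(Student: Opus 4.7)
My plan is to use the sub-critical percolation viewpoint from Section~2 for the sufficient direction ($\lambda\le\lambda_*(\Delta,q)\Rightarrow Z<\infty$), and an explicit calculation on the infinite $\Delta$-regular tree for the converse.

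For the finiteness direction, suppose $\lambda\le\lambda_*(\Delta,q)$. I would first use the analysis of $g(x)=(x/q)(1-x)^{\Delta-2}$ that precedes Lemma~\ref{lemma:binary:search} to choose $p\in(0,1/(\Delta-1)]$ with $g(p)=\lambda$. Then I would run the coloring/percolation process of Section~2 on $G$ started at $r$. For every finite vertex-labeled graphlet $\gamma$ rooted at $r$, the event that the colored connected component of $r$ equals $\gamma$ is a cylinder event of probability exactly $(p/q)^{|\gamma|}(1-p)^{|\partial\gamma|}$, and these events are pairwise disjoint, so their probabilities sum to at most~$1$. The key combinatorial input is the tight bound $|\partial\gamma|\le(\Delta-2)|\gamma|+2$, which lets me replace $(1-p)^{|\partial\gamma|}$ by the smaller $(1-p)^{(\Delta-2)|\gamma|+2}$ and regroup $(p/q)^{|\gamma|}(1-p)^{(\Delta-2)|\gamma|}$ as $\lambda^{|\gamma|}$. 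Together with $f\le 1$ and a contribution of $w_{\emptyset}\le 1$ from the empty graphlet, this yields
\[
Z_{G,r,\lambda}\ \le\ 1+\frac{1}{(1-p)^{2}}\ \le\ 1+\Bigl(\frac{\Delta-1}{\Delta-2}\Bigr)^{2},
\]
which is at most $5$ for every $\Delta\ge 3$, and in particular at most $40$ as claimed.

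For the infinite direction, I would exhibit $G=\mathbb{T}_\Delta$, the infinite $\Delta$-regular tree, with an arbitrary root $r$ and $f\equiv 1$, so that $Z_{G,r,\lambda}=\sum_{k\ge 1}q^{k}a_{k}\lambda^{k}$, where $a_k$ is the number of size-$k$ graphlets of $\mathbb{T}_\Delta$ containing $r$. To pin down the radius of convergence of $A(x)=\sum_k a_k x^k$, I would use the standard functional equations $A(x)=x(1+B(x))^{\Delta}$ and $B(x)=x(1+B(x))^{\Delta-1}$ for the one-branch generating function: the substitution $y=1+B(x)$ inverts the latter to $x=(y-1)/y^{\Delta-1}$, whose maximum on $y\ge 1$ is attained at $y=(\Delta-1)/(\Delta-2)$ and equals $(\Delta-2)^{\Delta-2}/(\Delta-1)^{\Delta-1}=q\,\lambda_*(\Delta,q)$. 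Hence $A$ has radius of convergence exactly $q\,\lambda_*(\Delta,q)$, so $\sum_k a_k(q\lambda)^k=+\infty$ for every $\lambda>\lambda_*(\Delta,q)$. Alternatively, one may invoke Lemma~\ref{lemma:Tvk} and read off the Fuss--Catalan asymptotics $a_k\asymp k^{-3/2}(q\,\lambda_*(\Delta,q))^{-k}$ directly.

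The only mildly delicate point I foresee is justifying the percolation argument on an infinite $G$: one needs that the event ``colored cluster of $r$ equals the fixed finite $\gamma$'' is a finite-dimensional cylinder with the claimed elementary probability (which is immediate), and that the inequality $\sum_\gamma\Pr[\cdot]\le 1$ follows simply from disjointness, with no swap-of-limits issue. The combinatorial bound $|\partial\gamma|\le(\Delta-2)|\gamma|+2$ is tight precisely on rooted subtrees of $\mathbb{T}_\Delta$, which is also what makes the reverse direction sharp, and the Lagrange-inversion step is a one-line calculation, so no further obstacle arises.
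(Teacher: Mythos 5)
Your proof is correct, and your sufficiency direction takes a genuinely different and cleaner route than the paper's. The paper bounds $Z_{G,r,\lambda}$ by the tree series $\sum_k T_k \lambda^k q^k$ (Lemma~\ref{lemma:Tvk}) and then carries out a ratio test for $\lambda<\lambda_*(\Delta,q)$ followed by a lengthy Stirling computation for the boundary case $\lambda=\lambda_*(\Delta,q)$, ultimately arriving at the constant $40$. You instead recycle the percolation construction that powers the sampling algorithm itself: the events ``colored cluster of $r$ equals $\gamma$'' are disjoint cylinder events with probability $(p/q)^{|\gamma|}(1-p)^{|\partial\gamma|}$, the bound $|\partial\gamma|\le(\Delta-2)|\gamma|+2$ and the defining relation $g(p)=\lambda$ let you lower-bound each by $(1-p)^2\lambda^{|\gamma|}$, and disjointness immediately gives $Z_{G,r,\lambda}\le(1-p)^{-2}\le\bigl(\tfrac{\Delta-1}{\Delta-2}\bigr)^2\le4$. (Your extra additive $1$ for the empty graphlet is unnecessary since that term is already in the sum, but it is harmless.) This is shorter, avoids any asymptotic estimates, and produces a considerably sharper constant than the paper's $40$; the only input it needs beyond what's in Section~2 is the observation that the disjointness bound goes through unchanged on infinite graphs, which you correctly flag as immediate. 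For the divergence direction you use the same witness (the $\Delta$-regular tree with $f\equiv1$) as the paper, but you identify the radius of convergence of the subtree generating function via the functional equations $B=x(1+B)^{\Delta-1}$, $A=x(1+B)^\Delta$ and the critical point of $x=(y-1)/y^{\Delta-1}$, rather than extracting explicit coefficient asymptotics from the closed-form $T_k$ via Stirling as the paper does; both are standard and correct, and you also note that the Fuss--Catalan asymptotics from Lemma~\ref{lemma:Tvk} would do the same job.
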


The following combinatorial fact will be used in the proof of Lemma~\ref{lemma:part:finite-general}.

\begin{lemma} 
	\label{lemma:Tvk}
	Consider the infinite $\Delta$-regular tree rooted at $\rho$, and
	let $T_{k}$ denote the number of subtrees of size $k$ rooted at $\rho$. Then for $k \ge 1$:
	$$
	T_{k} = \frac{\Delta}{(\Delta-1)k+1} \binom{(\Delta-1)k+1}{k-1}.
	$$ 
\end{lemma}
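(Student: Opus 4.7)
The plan is to compute $T_k$ via a generating-function identity combined with Lagrange inversion, and then reconcile the result with the stated closed form. First I would introduce an auxiliary generating function $A(x) = \sum_{k \ge 1} a_k x^k$, where $a_k$ counts subtrees of size $k$ rooted at a given non-root vertex $v$ and contained entirely in the descendant subtree hanging off $v$ (which is itself an infinite $(\Delta-1)$-ary rooted tree). Since any such subtree consists of $v$ together with, at each of its $\Delta-1$ child directions, either nothing or a subtree of the same type, one obtains $A(x) = x\bigl(1 + A(x)\bigr)^{\Delta-1}$. At $\rho$ there are $\Delta$ child directions instead of $\Delta-1$, so the same reasoning yields
\[
T(x) := \sum_{k \ge 1} T_k x^k \;=\; x\bigl(1 + A(x)\bigr)^{\Delta},
\]
and therefore $T_k = [x^{k-1}]\bigl(1 + A(x)\bigr)^{\Delta}$.

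Next I would apply the Lagrange--B\"urmann formula to the functional equation $A = x\phi(A)$ with $\phi(y) = (1 + y)^{\Delta-1}$: for any power series $H$ and $n \ge 1$,
\[
[x^n]\, H(A(x)) \;=\; \frac{1}{n}\,[y^{n-1}]\, H'(y)\,\phi(y)^n.
\]
Taking $H(y) = (1 + y)^{\Delta}$ and $n = k-1$ gives, for $k \ge 2$,
\[
T_k \;=\; \frac{\Delta}{k-1}\,[y^{k-2}](1+y)^{(\Delta-1)(k-1) + (\Delta-1)} \;=\; \frac{\Delta}{k-1}\binom{(\Delta-1)k}{k-2},
\]
while $T_1 = [x^0]\bigl(1 + A(x)\bigr)^{\Delta} = 1$ directly from $A(0)=0$.

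The final step is a short algebraic check that this expression matches the stated formula: expanding $\frac{\Delta}{k-1}\binom{(\Delta-1)k}{k-2}$ and $\frac{\Delta}{(\Delta-1)k+1}\binom{(\Delta-1)k+1}{k-1}$ in factorials, each simplifies to $\Delta \cdot \frac{((\Delta-1)k)!}{(k-1)!\,((\Delta-2)k+2)!}$, and the stated formula also returns $T_1 = 1$ at $k=1$. I do not anticipate any real obstacle here; the only subtlety is the boundary case $k = 1$, where Lagrange--B\"urmann does not directly apply and $T_1 = 1$ must be recorded by hand.
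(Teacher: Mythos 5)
Your proof is correct. The functional equations $A(x) = x(1+A(x))^{\Delta-1}$ and $T(x) = x(1+A(x))^{\Delta}$ are the right decompositions, the application of Lagrange--B\"urmann with $H(y)=(1+y)^\Delta$, $\phi(y)=(1+y)^{\Delta-1}$, $n=k-1$ is legitimate, and the algebraic identification
\[
\frac{\Delta}{k-1}\binom{(\Delta-1)k}{k-2} \;=\; \frac{\Delta}{(\Delta-1)k+1}\binom{(\Delta-1)k+1}{k-1} \;=\; \Delta\cdot\frac{((\Delta-1)k)!}{(k-1)!\,((\Delta-2)k+2)!}
\]
checks out, as does the manual verification of $T_1 = 1$.

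However, your route is genuinely different from the paper's. Both arguments start from the same combinatorial decomposition (a rooted subtree at $\rho$ is $\rho$ plus a choice of possibly-empty subtree in each of the $\Delta$ child directions, each living in an infinite $(\Delta-1)$-ary tree), which at the generating-function level is exactly $T(x)=x(1+A(x))^{\Delta}$. The paper, though, never introduces a generating function: it cites the known count $A_k(1,\Delta-1)=\frac{1}{(\Delta-1)k+1}\binom{(\Delta-1)k+1}{k}$ for subtrees of the $(\Delta-1)$-ary tree, writes $T_k$ as an explicit $\Delta$-fold discrete convolution of these, and collapses it using the Raney-number identity $A_k(a+c,b)=\sum_{i}A_i(a,b)A_{k-i}(c,b)$, iterated $\Delta-1$ times, to land on $T_k=A_{k-1}(\Delta,\Delta-1)$. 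Your Lagrange-inversion argument buys self-containedness: you never need the prior $(\Delta-1)$-ary count or the Raney convolution identity, since the functional equation plus one application of Lagrange--B\"urmann does all the work at once. The paper's approach buys something else: it makes the appearance of the generalized Catalan numbers $A_k(a,b)$ structurally explicit, which is natural given that the cited prior work~\cite{read2021hardness} is already phrased in those terms. Both are clean; yours is arguably shorter once the inversion formula is granted, while the paper's stays elementary-combinatorial.
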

\begin{proof}
	For positive integers $a$ and $b$, let
	$$
	A_k(a,b) = \frac{a}{bk+a}\binom{bk+a}{k}.
	$$
	The numbers $A_k(a,b)$ are a generalization of the Catalan numbers and satisfy the recurrence
	\begin{equation}
		\label{eq:recurrence}
		A_k(a+c,b) = \sum_{i=0}^{k} A_i(a,b)A_{k-i}(c,b);
	\end{equation}
	see~\cite{Kahkeshani}. It is known (see, e.g., Lemma 1 in~\cite{read2021hardness}) that the number of subtrees of size $k$ containing the root 
	of the infinite $(\Delta-1)$-ary tree is 
	$$
	\frac{1}{(\Delta-2)k+1} \binom{(\Delta-1)k}{k} = \frac{1}{(\Delta-1)k+1} \binom{(\Delta-1)k+1}{k},
	$$
	which is also equal to $A_k(1,\Delta-1)$.
	Hence, for $k \ge 2$, we have
	\begin{align*}
		T_k &= \sum_{k_1=0}^{k-1} \sum_{k_2=0}^{k-1-k_1} \cdots \sum_{k_{\Delta-1}=0}^{k-1-(k_1+\dots+k_{\Delta-2})} A_{k-1-(k_1+\dots+k_{\Delta-1})}(1,\Delta-1) \cdot \prod_{i=1}^{\Delta-1} A_{k_i}(1,\Delta-1).   
	\end{align*}
	Using~\eqref{eq:recurrence}, we see that
	$$
	\sum_{k_{\Delta-1}=0}^{k-1-(k_1+\dots+k_{\Delta-2})} A_{k-1-(k_1+\dots+k_{\Delta-1})}(1,\Delta-1) A_{k_{\Delta-1}}(1,\Delta-1) = A_{k-1-(k_1+\dots+k_{\Delta-2})}(2,\Delta-1),
	$$
	and using~\eqref{eq:recurrence} repeatedly we get for $k \ge 2$
	\begin{align*}
		T_k = A_{k-1}(\Delta,\Delta-1) &= \frac{\Delta}{(\Delta-1)(k-1)+\Delta}\binom{(\Delta-1)(k-1)+\Delta}{k-1} \\
		&= \frac{\Delta}{(\Delta-1)k+1}\binom{(\Delta-1)k+1}{k-1}.
	\end{align*}
	The claim also holds trivially for $k=1$ and the result follows.
\end{proof}

We are now ready to provide the proof of Lemma~\ref{lemma:part:finite-general}.

\begin{proof}[Proof of Lemma~\ref{lemma:part:finite-general}] 
	Let $C_{v,k}$ be the number of graphlets of $G$ with $k$ vertices that contain $v$. Then, since $f \le 1$ by assumption, we have
	$$
	Z_{G,r,\lambda} = \sum_{\gamma \in \mathcal S (G,r,q)\cup\{\emptyset\}} \lambda^{|\gamma|} f(\gamma) \le \sum_{k \ge 0} \, \sum_{{\gamma \in \mathcal S (G,r,q)\cup\{\emptyset\}}: |\gamma| = k} \lambda^k  = \sum_{k \ge 0} C_{v,k} \cdot \lambda^k q^k.
	$$	
	
	Consider the infinite $\Delta$-regular tree rooted at $\rho$, and
	let $T_{k}$ be the number of subtrees of size $k$ rooted at $\rho$ so that $C_{v,k} \le T_{k}$.
	(The latter bound captures the key idea in the proof: the infinite $\Delta$-regular tree is the worst case among graphs of maximum degree $\Delta$
	and the the threshold $\lambda_*(\Delta,q)$ arises as the threshold at which there is a change in the number of fixed points of the corresponding tree recurrence.)
	
	
	Then, it follows from Lemma~\ref{lemma:Tvk} that
	$$
	Z_{G,r,\lambda} \le 1 + \sum_{k \ge 1} \frac{\Delta}{(\Delta-1)k+1} \binom{(\Delta-1)k+1}{k-1} \lambda^kq^k.
	$$
	
	We apply the ratio test for the summation on the right hand side and consider the limit:	 
	\begin{align*}
		L 
		&:=\lim_{k \to \infty} \frac{\frac{\Delta}{(\Delta-1)(k+1)+1}\binom{(\Delta-1)(k+1)+1}{k} q^{k+1}  \lambda^{k+1}}{\frac{\Delta}{(\Delta-1)k+1}\binom{(\Delta-1)k+1}{k-1} q^k  \lambda^k} \\&= q\lambda 	\lim_{k \to \infty} \frac{(\Delta-1)k+1}{(\Delta-1)(k+1)+1} \frac{\binom{(\Delta-1)(k+1)+1}{k} }{\binom{(\Delta-1)k+1}{k-1}}. 	
	\end{align*}
	We have that 
	$$
	A_k := \frac{\binom{(\Delta-1)(k+1)+1}{k} }{\binom{(\Delta-1)k+1}{k-1}} = \frac{1}{k} \cdot 
	\frac{[(\Delta-1)(k+1)+1]!\, [(\Delta-2)k+2]!}
	{[(\Delta-2)k+\Delta]! \, [(\Delta-1)k+1]!}.
	$$
	Using the inequality
	$
	h(n) e^{1/(12n+1)} \le n! \le h(n) e^{1/12n}
	$
	where $h(n) = \sqrt{2\pi}n^{n+1/2}e^{-n}$, and setting
	$$
	B_k = \frac{h((\Delta-1)(k+1)+1) h((\Delta-2)k+2)}{h((\Delta-2)k+\Delta) h((\Delta-1)k+1)}
	$$
	we get
	\begin{align}
		\label{eq:asymp:AB}
		\frac{B_k}{k} \frac{e^{\frac{1}{12[(\Delta-1)(k+1)+1]+1}+\frac{1}{12[(\Delta-2)k+2]+1}}}{e^{\frac{1}{12[(\Delta-2)k+\Delta]}+\frac{1}{12[(\Delta-1)k+1]}}} \le A_k \le \frac{B_k}{k} \cdot \frac{e^{\frac{1}{12[(\Delta-1)(k+1)+1]}+\frac{1}{12[(\Delta-2)k+2]}}}{e^{\frac{1}{12[(\Delta-2)k+\Delta]+1}+\frac{1}{12[(\Delta-1)k+1]+1}}}.
	\end{align}
	From a direct calculation, it can be checked that
	\begin{align*}
		B_k =& \frac 1e \Big(1 + \frac{\Delta-1}{(\Delta-1)k+1}\Big)^{(\Delta-1)k+1} 
		\Big(1 - \frac{\Delta-2}{(\Delta-2)k+\Delta}\Big)^{(\Delta-2)k+\Delta}  \\
		&\times   \frac{[(\Delta-1)(k+1)+1]^{\Delta-1}}{[(\Delta-2)k+2]^{\Delta-2}} 
		\frac{[(\Delta-2)k+2]^{1/2}}{[(\Delta-1)k+1]^{1/2}}
		\frac {[(\Delta-1)(k+1)+1]^{1/2}}{[(\Delta-2)k+\Delta]^{1/2}}.
	\end{align*}
	Hence, 
	$$
	\lim_{k \to \infty} \frac{B_k}{k} = \frac 1e e^{\Delta-1} \frac{1}{e^{\Delta-2}} \frac{(\Delta-1)^{\Delta-1}}{(\Delta-2)^{\Delta-2}} = \frac{(\Delta-1)^{\Delta-1}}{(\Delta-2)^{\Delta-2}} .
	$$
	From this and~\eqref{eq:asymp:AB} we deduce that $\lim_{k \to \infty} A_k = \frac{(\Delta-1)^{\Delta-1}}{(\Delta-2)^{\Delta-2}}$, which implies that 
	$$
	L = q\lambda \frac{(\Delta-1)^{\Delta-1}}{(\Delta-2)^{\Delta-2}}.
	$$
	From the ratio test we can then conclude that the series converges when $\lambda <  \lambda_*(\Delta,q)$
	and diverges when $\lambda > \lambda_*(\Delta,q)$.	
	
	It remains for us to consider the $\lambda = \lambda_*(\Delta,q)$	case, where
	$$
	Z_{G,r,\lambda} \le A + \sum_{k \ge 2} \frac{\Delta}{(\Delta-1)k+1} \binom{(\Delta-1)k+1}{k-1} \frac{(\Delta-2)^{(\Delta-2)k}}{(\Delta-1)^{(\Delta-1)k}}
	$$
	with $A =1 + \frac{(\Delta-2)^{(\Delta-2)}}{(\Delta-1)^{(\Delta-1)}} \le 5/4$.
	We also have for $k \ge 2$ that
	\begin{align*}	
		\binom{(\Delta-1)k+1}{k-1} 
		&\le \frac{h((\Delta-1)k+1)}{h(k-1)h((\Delta-2)k+2)} \times \frac{e^{\frac{1}{12((\Delta-1)k+1)}}}{e^{\frac{1}{12(k-1)+1}}e^{\frac{1}{12((\Delta-2)k+2)+1}}} \\
		&\le \frac{h((\Delta-1)k+1)}{h(k-1)h((\Delta-2)k+2)}.
	\end{align*}
	Therefore,
	\begin{align*}
		&Z_{G,r,\lambda} 
		\le \frac 54 + \sum_{k \ge 2} \frac{\Delta}{(\Delta-1)k+1}  \frac{h((\Delta-1)k+1)}{h(k-1)h((\Delta-2)k+2)} \frac{(\Delta-2)^{(\Delta-2)k}}{(\Delta-1)^{(\Delta-1)k}}\\
		&= \frac 54+\frac{1}{\sqrt{2\pi}}  \sum_{k \ge 2} \frac{\Delta}{(\Delta-1)k+1}  \frac{[(\Delta-1)k+1]^{(\Delta-1)k+3/2}}{(k-1)^{k-1/2} [(\Delta-2)k+2]^{(\Delta-2)k+2+1/2}} \frac{(\Delta-2)^{(\Delta-2)k}}{(\Delta-1)^{(\Delta-1)k}}\\
		&= \frac 54 + \frac{1}{\sqrt{2\pi}} \sum_{k \ge 2}  \frac{\Delta[(\Delta-1)k+1]^{1/2}(k-1)^{1/2}}{[(\Delta-2)k+2]^{5/2}} \left[\frac{((\Delta-1)k+1)^{\Delta-1}(\Delta-2)^{\Delta-2}}{(k-1)((\Delta-2)k+2)^{\Delta-2}(\Delta-1)^{\Delta-1}}\right]^k.
	\end{align*}
	
	Now, letting $L_1(\Delta,k) =  \frac{\Delta[(\Delta-1)k+1]^{1/2}(k-1)^{1/2}}{[(\Delta-2)k+2]^{5/2}}$,	
	$L_2(\Delta,k) = \frac{((\Delta-1)k+1)^{\Delta-1}(\Delta-2)^{\Delta-2}}{(k-1)((\Delta-2)k+2)^{\Delta-2}(\Delta-1)^{\Delta-1}}$, and noting that $L_1$ is a decreasing function of $\Delta$, we have
	$$
	L_1(\Delta,k) \le \frac{3(2k+1)^{1/2}(k-1)^{1/2}}{(k+2)^{5/2}} \le \frac{3\sqrt{2}(k+2)^{1/2}(k+2)^{1/2}}{(k+2)^{5/2}} \le \frac{3\sqrt{2}}{(k+2)^{3/2}},
	$$
	and, 	
	\begin{align*}
		L_2(\Delta,k) &= \left(1 -\frac{\Delta}{(\Delta-1)(\Delta-2)k+2(\Delta-1)}\right)^{\Delta-2} \left(1+\frac{\Delta}{(\Delta-1)(k-1)}\right)\\
		&\le \exp\left[ \frac{\Delta}{\Delta-1}\left(-\frac{\Delta-2}{(\Delta-2)k+2} + \frac{1}{k-1}\right) \right] \\&\le \exp\left[ \frac{\Delta}{(\Delta-1)(k-1)}\right] \le \exp\left[ \frac{3}{2(k-1)}\right].
	\end{align*}
	Hence,
	\begin{align*}
		Z_{G,r,\lambda}&\le \frac 54 +  \frac{3}{\sqrt{\pi}} \sum_{k \ge 2} \frac{1}{(k+2)^{3/2}} \exp\left[ \frac{3 k}{2(k-1)}\right]
		\le \frac 54 + \frac{3e^{3}}{\sqrt{\pi}} \sum_{k \ge 2} \frac{1}{(k+2)^{3/2}} \le 39
	\end{align*}
	when $\lambda = \lambda_*(\Delta,q)$.
\end{proof}

We proceed next with the proof of Lemma~\ref{lemma:comp:hard}.
Let us first formally define the notion of a polynomial-time approximate sampler.

\begin{definition}
	\label{def:approx:sampler}
	An algorithm $\mathcal A$ that takes as input a graph $G=(V,E)$, $\lambda > 0$, a vertex $r \in V$, and an accuracy parameter $\varepsilon \in (0, 1]$
	is a polynomial-time approximate sampler for $\nu_{G,r,\lambda}$
	if it returns a sample from a distribution $\mu_{\mathcal A}$ such that $\|\mu_{\mathcal A} - \nu_{G,r,\lambda}\|_{\textsc{tv}} \le \varepsilon$
	and has a running time that is polynomial in $|V|$ and $1/\varepsilon$.
	A polynomial-time approximate sampler for $\nu_{G,\lambda}$ is defined analogously.
\end{definition}

We will require the following result.

\begin{lemma} 
	\label{lemma:sampler}
	Fix $\lambda < 1$, and let $q=1$ and $f=1$.  There is an algorithm with  the following guarantees.
	The algorithm takes as input a finite graph $G = (V,E)$, a vertex $v \in V$, and parameters $\eps, \delta \in (0,1)$.  
	The algorithm has access to an exact sampling algorithm for $\nu_{G,v,\lambda}$.  With probability at least $1-\delta$ the algorithm outputs a value 
	$\tilde{Z}_{G,v,\lambda}$ 
	such that
	$
	(1-\varepsilon)\tilde{Z}_{G,v,\lambda} \le	Z_{G,v,\lambda} \le (1+\varepsilon)\tilde{Z}_{G,v,\lambda}
	$.
	The algorithm's sample complexity (queries to  $\nu_{G,v,\lambda}$) and running time are $\poly(|V|,\varepsilon,\log(1/\delta))$.
\end{lemma}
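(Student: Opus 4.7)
The plan is to follow the standard reduction from approximate counting to approximate sampling for self-reducible problems (Jerrum--Valiant--Vazirani), using vertex deletion as the self-reducibility operation. The key identity is that for any vertex $u \in V(H) \setminus \{v\}$,
\[
Z_{H - u,\,v,\,\lambda} = Z_{H,\,v,\,\lambda} \cdot \Pr_{\nu_{H,v,\lambda}}[u \notin S],
\]
since graphlets of $H$ containing $v$ but not $u$ coincide with graphlets of $H - u$ containing $v$.

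First I would set up a telescoping product. Delete vertices of $V \setminus \{v\}$ one at a time to produce a chain $G = G_n \supset G_{n-1} \supset \dots \supset G_1 = \{v\}$, with $G_{k-1} = G_k - u_k$ for some $u_k \in V(G_k) \setminus \{v\}$. Since $Z_{\{v\},v,\lambda} = \lambda$, telescoping gives
\[
Z_{G, v, \lambda} = \lambda \cdot \prod_{k=2}^{n} \frac{1}{\rho_k}, \qquad \rho_k := \Pr_{\nu_{G_k, v, \lambda}}[u_k \notin S].
\]
The algorithm will invoke the given exact sampler on each $G_k$ to draw polynomially many samples, compute the empirical frequency $\hat\rho_k$ of the event $u_k \notin S$, and return $\tilde Z_{G,v,\lambda} := \lambda \prod_k \hat\rho_k^{-1}$.

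The core task is to choose the $u_k$'s so that each $\rho_k$ is at least inverse polynomial in $n$; I would handle this adaptively, drawing a preliminary batch of samples at step $k$, computing $\hat\rho_u$ for every candidate $u \in V(G_k) \setminus \{v\}$, and choosing $u_k$ maximizing $\hat\rho_u$. This works because of an averaging argument: for any graph $H$ containing $v$ with $k$ vertices and any $\lambda < 1$,
\[
\sum_{u \neq v} \Pr_{\nu_{H, v, \lambda}}[u \notin S] = k - \E_{\nu_{H,v,\lambda}}[|S|] \geq 1 - \Pr[|S|=k] \geq 1 - \lambda^{k-1} \geq 1 - \lambda,
\]
because the unique graphlet of size $k$ in $H$ (if any) is $V(H)$ itself, so $\Pr[|S| = k] \leq \lambda^k / Z_{H,v,\lambda} \leq \lambda^{k-1}$ via the trivial lower bound $Z_{H,v,\lambda} \geq \lambda$. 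Hence there always exists $u \neq v$ with $\Pr[u \notin S] \geq (1-\lambda)/(k-1)$, and with enough preliminary samples the adaptive choice realizes $\rho_k \geq (1-\lambda)/(2n)$ with high probability.

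The main obstacle is precisely this lower bound $\rho_k \geq (1-\lambda)/(n-1)$, which crucially leverages $\lambda < 1$; for $\lambda \geq 1$ the partition function can be exponentially large in $n$ and no individual deletion ratio need be polynomially bounded. Once this bound is in place, the rest is a routine Chernoff--Hoeffding analysis: estimating each $\rho_k$ to multiplicative accuracy $\varepsilon/n$ requires $O(n^3/((1-\lambda)\varepsilon^2))$ samples, a union bound over the $n$ steps and $n$ candidate vertices controls the overall error, and an additional $O(\log(1/\delta))$ amplification factor yields total sample and time complexity $\poly(n, 1/\varepsilon, \log(1/\delta))$.
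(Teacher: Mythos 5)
Your proposal is correct in outline, but it takes a noticeably different route from the paper's, and the paper's route is simpler and gives a sharper bound. You handle the telescoping product by adaptively selecting which vertex to delete next, justified by an averaging argument showing that some $u \neq v$ satisfies $\Pr_{\nu_{G_k,v,\lambda}}[u \notin S] \ge (1-\lambda)/(k-1)$; this leads to ratios bounded below only by $\Omega((1-\lambda)/n)$ and requires a two-stage sampling scheme (a preliminary batch to choose $u_k$, then fresh samples to estimate $\rho_k$ without selection bias, which you should be careful to make explicit). The paper instead fixes a deterministic deletion order: at each step, delete a vertex $u_{i+1}$ that is a \emph{leaf of the BFS tree} of $G_i$ rooted at $v$. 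Then for any graphlet $S$ of $G_i$ with $v,u_{i+1} \in S$, the set $S \setminus \{u_{i+1}\}$ is still a connected graphlet containing $v$, so the map $S \mapsto S \setminus \{u_{i+1}\}$ injects the ``$u_{i+1}\in S$'' graphlets into the ``$u_{i+1}\notin S$'' graphlets with a loss of one power of $\lambda < 1$; this gives $p_i = Z_{i+1}/Z_i \ge 1/2$ directly, with no adaptivity. The constant lower bound on each ratio yields a more economical sample complexity and avoids the union bound over candidate vertices that your adaptive scheme needs. Two small points to fix in your write-up: first, the paper's setting $f = 1$ includes $\emptyset$ in the support, so $Z_{\{v\},v,\lambda} = 1 + \lambda$, not $\lambda$ (your telescoping constant is off); second, your averaging identity $\sum_{u\neq v}\Pr[u\notin S] = k - \E|S|$ implicitly assumes $v \in S$ always, which fails once $\emptyset$ is in the support — this is easy to repair but should be stated.
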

\begin{proof}
	Let $u_1$, $u_2$, $\dots,u_{n-1}$ be the vertices in $V \setminus \{v\}$ in a fixed order we define later. Let $G_i$ be the graph that results from removing vertices $u_1,\dots,u_i$ from $G$ and let $Z_i = Z_{G_i,v,\lambda}$. 
	We let $Z_0 = Z_{G,v,\lambda}$ and note that $Z_{n-1} = 1+\lambda$.
	Then:
	$$
	\frac{1+\lambda} {Z_0}= \prod_{i=0}^{n-2} \frac{Z_{i+1}}{Z_{i}}.
	$$
	Let	$p_i = \frac{Z_{i+1}}{Z_i}$. For each $p_i$, we will find $\tilde p_i$ such that with probability at least $1 - \delta/n$:
	\begin{equation}
		\label{eq:approx:pi:mult}
		(1-\frac{\varepsilon}{4n}) \tilde p_i \le p_i \le (1+\frac{\varepsilon}{4n}) \tilde p_i.
	\end{equation}
	This implies that 
	$$(1-\frac{\varepsilon}{2n}) \frac{1}{\tilde p_i} \le \frac{1}{p_i} \le (1+\frac{\varepsilon}{2n}) \frac{1}{\tilde p_i},$$
	and so letting $\tilde Z_{G,v,\lambda} = (1+\lambda) \prod_{i=0}^{n-2} \frac{1}{\tilde p_i}$, we get that 
	$$
	(1-{\varepsilon}) \tilde Z_{G,v,\lambda} \le (1-\frac{\varepsilon}{2n})^n \tilde Z_{G,v,\lambda} \le {Z_0} \le (1+\frac{\varepsilon}{2n})^n \tilde Z_{G,v,\lambda} \le (1+\varepsilon) \tilde Z_{G,v,\lambda},
	$$
	with probability at least $1-\delta$ by a union bound.
	
	To obtain~\eqref{eq:approx:pi:mult}, we define the sequence of vertices $u_1,u_2,\dots,u_{n-1}$ such that $u_i$ is a leaf in the breadth-first search tree of $G_{i-1}$ rooted at $v$. This way, we can conveniently claim that $1 \ge p_i \ge \frac{1}{2}$. To see this, note that $S \in \mathcal S(G_{i+1})$ if and only if $S \in \mathcal S(G_{i}): u_{i+1} \not\in S$, so
	\begin{equation}
		\label{eq:zi}
		Z_{i+1} = \sum_{S \in \mathcal S(G_{i+1})} \lambda^{|S|} = \sum_{S \in \mathcal S(G_{i}): u_{i+1} \not\in S}  \lambda^{|S|}.
	\end{equation}
	Now, if $S \in \mathcal S(G_{i})$ and $S$ contains $u_{i+1}$,
	then, since $u_{i+1}$ is a leaf in the breadth-first search tree rooted at $v$, $S \setminus \{u_{i+1}\}$ is a connected subgraph that contains $v$. That is, every $S \in \mathcal S(G_{i})$ 
	that contains $u_{i+1}$ can be mapped to a unique 
	subgraph in $\mathcal S(G_{i})$ that does not contain $u_{i+1}$ of size $|S|-1$. Hence, since $\lambda < 1$,
	\begin{align*}
		\sum_{S \in \mathcal S(G_{i}): u_{i+1} \in S}  \lambda^{|S|} 
		\le \sum_{S \in \mathcal S(G_{i}): u_{i+1} \not\in S}  \lambda^{|S|},
	\end{align*}
	and so
	$$
	Z_i = \sum_{S \in \mathcal S(G_{i}): u_{i+1} \in S}  \lambda^{|S|} +\sum_{S \in \mathcal S(G_{i}): u_{i+1} \not\in S}  \lambda^{|S|} \le 2\sum_{S \in \mathcal S(G_{i}): u_{i+1} \not\in S}  \lambda^{|S|}.
	$$
	Combined with~\eqref{eq:zi}, this gives
	$
	p_i = \frac{Z_{i+1}}{Z_i} \ge  \frac{1}{2}.
	$
	Therefore, to deduce~\eqref{eq:approx:pi:mult}, it suffices to find $\tilde p_i$ such that
	$$
	\tilde p_i-\frac{\varepsilon}{16n} \le p_i \le \tilde p_i + \frac{\varepsilon}{16n}.
	$$
	For this, we draw $L$ samples from $\nu_{G_i,v,\lambda}$ and let $X_j$ be the indicator random variable for the event that the $j$-th sample contains $u_{{i+1}}$. Letting $\tilde p_i = \frac{1}{L} \sum_{i=1}^L X_j$, we get from a Chernoff bound that 
	$$
	\Pr[|\tilde p_i - p_i| \ge \rho p_i] \le 2 e^{-\frac{L \rho^ 2 p_i}{3}}
	$$
	and setting $\rho = \frac{\varepsilon}{16 n p_i}$ and $L = 384 \frac{n^2}{\varepsilon^2} \log (2n/\delta)$:
	$$
	\Pr\Big[|\tilde p_i - p_i| \ge \frac{\varepsilon}{16 n}\Big] \le 2 e^{-\frac{L \rho \varepsilon}{24n}} \le 2 e^{-\frac{L  \varepsilon^2}{384n^2}} = \frac{\delta}{n}.
	$$
	In summary, we have provided algorithm that computes $\tilde p_i$
	such that~\eqref{eq:approx:pi:mult} holds with probability at least $1 -\delta/n$.
	The algorithm has sample complexity and running time $\poly(n,1/\varepsilon,\log(1/\delta))$.
\end{proof}

We can now provide the proof of Lemma~\ref{lemma:comp:hard}.

\begin{proof}[Proof of Lemma~\ref{lemma:comp:hard}]
	It suffices to prove the result for $f=1$.
	Let $G = (V,E)$ be a finite graph of maximum degree $\Delta$, and 
	for ease of notation let $\lambda_* = \lambda_*(\Delta,1)$.
	We show that if there is a polynomial-time approximate sampler for $\nu_{G,v,\lambda}$ 
	for each $v \in V$ (see Definition~\ref{def:approx:sampler}),
	then there is also a polynomial-time approximate sampler for for $\nu_{G,\lambda}$.
	In~\cite{read2021hardness}, it was shown that there is no polynomial-time approximate sampler for
	for $\nu_{G,\lambda}$ when $\lambda \in (\lambda_*,1)$ unless NP=RP, and thus our result follows\footnote{We note that the hardness result in~\cite{read2021hardness} is stated for the harder problem of producing approximate samples with a running time depending polynomially on $\log(1/\varepsilon)$, but the proofs in~\cite{read2021hardness} extend to our setting without modification.}.
	
	First, 
	given exact sampling algorithms for $\nu_{G,v,\lambda}$
	for each $v \in V$ with $\poly(n)$ running times, and
	a desired accuracy parameter $\hat\varepsilon \in (0,1)$, 
	we provide a sampling algorithm whose output distribution is within $\hat\varepsilon$ total variation distance from $\nu_{G,\lambda}$. The running time of the algorithm will be 
	$\poly(n,1/\hat\varepsilon)$.
	We then modify the algorithm to use instead the polynomial-time approximate samplers for $\nu_{G,v,\lambda}$ (instead of the exact samplers) and show that this has a negligible effect on its running time and output distribution.
	
	The algorithm is the following:
	\begin{enumerate}
		\item Let $\varepsilon = \min\{\frac 18,\frac{\hat\varepsilon}{6} - \frac{1}{4n}\} $. For each $v \in V$, using the exact sampler for $\nu_{G,v,\lambda}$ and the algorithm from~Lemma~\ref{lemma:sampler}, obtain $\tilde{Z}_{G,v,\lambda}$ such that, with probability at least $1 - \frac{1}{n^2 2^n}$,
		\begin{equation}
			\label{eq:approx}
			(1-\varepsilon)\tilde{Z}_{G,v,\lambda} \le	Z_{G,v,\lambda} \le (1+\varepsilon)\tilde{Z}_{G,v,\lambda}
		\end{equation}
		
		\item Set $t=0$ and choose a vertex $v \in V$ with probability $\frac{\tilde{Z}_{G,v,\lambda}}{\sum_w \tilde{Z}_{G,w,\lambda}}$;
		\item Using the exact sampler for $\nu_{G,v,\lambda}$, draw a sample $S$ from $\nu_{G,v,\lambda}$;
		\item Accept and output $S$ with probability ${1}/{|S|}$;
		\item If $S$ is rejected, increase $t$, and if $t \le 2n^2$ go to Step 2 and repeat; otherwise output~$\emptyset$.
	\end{enumerate}
	
	First, we note that by Lemma~\ref{lemma:sampler}, Step 1 can be implemented in $\poly(n,1/\varepsilon)$.
	Steps 2 to 5 of the algorithm are repeated at most $O(n^2)$ times, and each of those step takes $\poly(n)$ time. Therefore, the overall running time of the algorithm is $\poly(n,1/\varepsilon) = \poly(n,1/\hat\varepsilon)$. Moreover,~\eqref{eq:approx} holds for every $v \in V$	with probability at least $1 - \frac{1}{n 2^n}$ by a union bound.
	
	Let $\dalgOverall$ be the output distribution of the algorithm,
	and let $\dalg$ be the output distribution when (i) Step 1 succeeds in finding the required approximations and (ii) a subgraph is accepted in Step 4 before $t > 2n^2$.
	We show next that $\|\dalg - \nu_{G,\lambda}\|_{\textsc{tv}} \le \hat \varepsilon$.
	Let $\tilde Z = \sum_{v \in V} \tilde{Z}_{G,v,\lambda}$
	and let $\mathcal S = \mathcal S(G) \cup \{\emptyset\}$. 
	The probability that the algorithm outputs $S \in \mathcal S$	in Step 4 in a give iteration~is:
	\begin{equation}
		\sum_{v \in S}\frac{\tilde{Z}_{G,v,\lambda}}{\tilde Z} \nu_{G,v,\lambda}(S) \frac{1}{|S|} 
		=: \frac{\phi(S)}{\tilde Z}. 
	\end{equation}	
	Then,
	$
	\dalg(S) = \frac{\phi(S)}{\sum_{\hat S \in \mathcal S} \phi(\hat S)}, 
	$
	and
	\begin{align}
		\|\dalg - \nu_{G,\lambda}\|_{\textsc{tv}} 
		&= \frac 12 \sum_{S \in \mathcal S} \Big|\frac{\phi(S)}{\sum_{W \in \mathcal S} \phi(W)} - \frac{\lambda^{|S|}}{ Z_{G,\lambda}}\Big|\notag =	\frac 12 \sum_{S \in \mathcal S} \Big|\frac{\sum_{v \in S} \nu_{G,v,\lambda}(S)\tilde Z_{G,v,\lambda}}{|S|\sum_{W \in \mathcal S} \phi(W)} - \frac{\lambda^{|S|}}{ Z_{G,\lambda}}\Big|\notag\\
		&\le \frac 12 \sum_{S \in \mathcal S}\sum_{v \in S} \Big|\frac{ \nu_{G,v,\lambda}(S)\tilde Z_{G,v,\lambda}}{|S|\sum_{W \in \mathcal S} \phi(W)} - \frac{\lambda^{|S|}}{|S| Z_{G,\lambda}}\Big|\notag\\
		&= \frac 12 \sum_{S \in \mathcal S}\sum_{v \in S} \frac{ Z_{G,v,\lambda}}{|S| Z_{G,\lambda}}\Big|\frac{ \nu_{G,v,\lambda}(S) \cdot \tilde Z_{G,v,\lambda}}{Z_{G,v,\lambda}} \cdot\frac{Z_{G,\lambda}}{\sum_{W \in \mathcal S} \phi(W)} - \frac{\lambda^{|S|}}{Z_{G,v,\lambda}}\Big|\notag\\
		&= \frac 12 \sum_{S \in \mathcal S}\sum_{v \in S} \frac{\nu_{G,v,\lambda}(S) \cdot Z_{G,v,\lambda}}{|S| Z_{G,\lambda}}\Big|\frac{ \tilde Z_{G,v,\lambda}}{Z_{G,v,\lambda}} \cdot\frac{Z_{G,\lambda}}{\sum_{W \in \mathcal S} \phi(W)} - 1\Big|.
		\label{eq:tv:bound}
	\end{align}
	From~\eqref{eq:approx} we get
	$$1-2\varepsilon \le \frac{1}{1+\varepsilon} \le \frac{\tilde Z_{G,v,\lambda}}{Z_{G,v,\lambda}} \le \frac{1}{1-\varepsilon} \le 1+2\varepsilon,$$
	since $\varepsilon \le 1/8$. 
	Moreover,
	\begin{align*}
		\sum_{W \in \mathcal S} \phi(W) &= 
		\sum_{W \in \mathcal S} \sum_{v \in W} \frac{\nu_{G,v,\lambda}(W)\tilde Z_{G,v,\lambda}}{|W|} \le (1+2\varepsilon)	\sum_{W \in \mathcal S} \sum_{v \in W} \frac{\lambda^{|W|}}{|W|} = (1+2\varepsilon)Z_{G,\lambda},
	\end{align*}
	and similarly  we can obtain that $\sum_{W \in \mathcal S} \phi(W) \ge (1-2\varepsilon)Z_{G,\lambda}$.
	Combining these bounds we get:
	$$
	\Big|\frac{ \tilde Z_{G,v,\lambda}}{Z_{G,v,\lambda}} \cdot\frac{Z_{G,\lambda}}{\sum_{W \in \mathcal S} \phi(W)} - 1\Big| \le \frac{4 \varepsilon}{1-2\varepsilon} \le 8 \varepsilon
	$$
	Plugging this into~\eqref{eq:tv:bound} we deduce that:
	\begin{align}
		\|\dalg - \nu_{G,\lambda}\|_{\textsc{tv}} \le 4 \varepsilon \sum_{S \in \mathcal S}\sum_{v \in S} \frac{\nu_{G,v,\lambda}(S) \cdot Z_{G,v,\lambda}}{|S|Z_{G,\lambda}}= \frac{4 \varepsilon}{Z_{G,\lambda}} \sum_{S \in \mathcal S}\sum_{v \in S} \frac{\lambda^{|S|}}{|S|} \le 4\varepsilon.\label{eq:eps:bound}
	\end{align}
	Let $\mathcal E_1$ be the event that Step 1 succeeds in finding the approximations, 
	let $\mathcal E_2$ the event that the algorithm accepts and outputs a graphlet in Step 4, and let $\mathcal E$ be the event that both $\mathcal E_1$ and $\mathcal E_2$ occur.
	Then, the triangle inequality and~\eqref{eq:eps:bound} imply:
	\begin{align}
		\|\dalgOverall - \nu_{G,\lambda}\|_{\textsc{tv}}  
		&\le
		\|\dalg - \nu_{G,\lambda}\|_{\textsc{tv}}+ \|\dalg - \dalgOverall\|_{\textsc{tv}} \le 4\varepsilon + \|\dalg - \dalgOverall\|_{\textsc{tv}}.\label{eq:tv:bound-1}
	\end{align}
	We proceed to bound $\|\dalg - \dalgOverall\|_{\textsc{tv}}$.
	\begin{align}
		\|\dalg - \dalgOverall\|_{\textsc{tv}} 
		&= 	\|\dalgOverall - \dalgOverall(\cdot \mid \mathcal E)\|_{\textsc{tv}} \le \dalgOverall(\neg\mathcal E) \notag\\
		&\le \dalgOverall(\neg\mathcal E_1) +\dalgOverall(\neg\mathcal E_2 \mid \mathcal E_1) \le \frac{1}{2n} + \dalgOverall(\neg\mathcal E_2\mid \mathcal E_1). \label{eq:alg:bound}
	\end{align}
	So it remains for us to bound $\dalgOverall(\neg\mathcal E_2\mid \mathcal E_1)$.	
	For this, note that the probability that a subgraph is accepted in Step 4 in an iteration is:
	\begin{align*}
		\sum_{S \in \mathcal S}\sum_{v \in S}\frac{\tilde{Z}_{G,v,\lambda}}{\tilde Z} \nu_{G,v,\lambda}(S) \frac{1}{|S|} 
		\ge \frac{1-2\varepsilon}{\tilde Z} 	\sum_{S \in \mathcal S}\sum_{v \in S} \frac{\lambda^{|S|}}{|S|} \ge \frac{(1-2\varepsilon)Z_{G,\lambda}}{\tilde Z}.
	\end{align*}
	Since
	$$
	\tilde Z = \sum_{v \in V} \tilde Z_{G,v,\lambda} \le (1+2\varepsilon)\sum_{v \in V} Z_{G,v,\lambda} \le (1+2\varepsilon)n Z_{G,\lambda}, 
	$$
	we get the acceptance probability in Step 4 in an iteration is at least 
	$$\frac{1-2\varepsilon}{(1+2\varepsilon)n} \ge \frac{1-4\varepsilon}{n} \ge \frac{1}{2n},$$
	provided $\mathcal E_1$ occurs.
	Therefore, if $X$ is geometric random variable with parameter $1/(2n)$, we have
	$$
	\dalgOverall(\neg\mathcal E_2\mid \mathcal E_1) \le \Pr[X \ge 2n^2] \le \Big(1 - \frac{1}{2n}\Big)^{2n^2} \le \frac{1}{e^n}, 	
	$$
	and plugging this bound into~\eqref{eq:alg:bound} and~\eqref{eq:tv:bound-1} we obtain:
	$$
	\|\dalgOverall - \nu_{G,\lambda}\|_{\textsc{tv}}  
	\le 4\varepsilon + \frac{3}{2n}.
	$$
	
	We have established so far the hardness of exactly sampling from $\nu_{G,v,\lambda}$ for $\lambda \in (\lambda_*,1)$. The same reduction (i.e., algorithm) with minor adjustments works for approximate sampling. Suppose that in Step 3 of the algorithm, we instead generate $T = 2n^2$ samples from a distribution $\mu_v$ such that $\|\mu_v - \nu_{G,v,\lambda}\|_{\textsc{tv}} \le \frac{\varepsilon}{n^2}$.	
	Let $\mu_{v}^{\otimes T}$ and $\nu_{G,v,\lambda}^{\otimes T}$ be the product distributions corresponding to $T$ independent samples from $\mu_{v}$ and $\nu_{G,v,\lambda}$ respectively.
	We have
	\begin{equation}
		\label{eq:tv:prod}
		\|\mu_{v}^{\otimes T} - \nu_{G,v,\lambda}^{\otimes T}\|_{\textsc{tv}} \leq T \|\mu_{v} - \nu_{G,v,\lambda}\|_{\textsc{tv}}  \leq  \frac{T \varepsilon}{n^2} \le 2\varepsilon.
	\end{equation}
	Let $\fdalg$ be the output distribution of the algorithm
	when using samples from $\mu_{v}$ in Step 3.
	Then:
	\begin{align*}
		\|\fdalg-\nu_{G,\lambda}\|_{\textsc{tv}} 
		&\le \|\fdalg-\dalgOverall\|_{\textsc{tv}}+\|\dalgOverall-\nu_{G,\lambda}\|_{\textsc{tv}} \le \|\fdalg-\dalgOverall\|_{\textsc{tv}}+4\varepsilon+\frac{3}{2n}.	
	\end{align*}
	Consider the following coupling between $\fdalg$ and $\dalgOverall$:
	use the same randomness for Steps 1, 2 and 5 and the optimal coupling for $\mu_{v}^{\otimes T}$ and $\nu_{G,v,\lambda}^{\otimes T}$ for Step 3. Then, from~\eqref{eq:tv:prod}, we get
	$$
	\|\fdalg-\nu_{G,\lambda}\|_{\textsc{tv}} 
	\le 6\varepsilon+\frac{3}{2n} \le \hat\varepsilon.$$
	That is, we obtain a polynomial-time approximate sampler for $\nu_{G,\lambda}$ which completes the reduction.
\end{proof}

\section*{Acknowledgements}

This work was carried out as part of the AIM SQuaRE workshop ``Connections between computational and physical phase transitions.''  We thank Tyler Helmuth, Alexandre Stauffer, and Izabella Stuhl for many helpful conversations.  WP is supported in part by NSF grant DMS-2309958 and CCF-2309708. SC is supported in part by NSF grants DMS-1803325 and CCF-2104795. AB is supported in part by NSF grant CCF-2143762. We also thank the anonymous referees for their many insightful corrections and suggestions.

\bibliographystyle{alpha}
\bibliography{subgraph}

\end{document}